
\documentclass[preprint,10pt,3p]{elsarticle}




\usepackage{amssymb}
\usepackage{amsthm}



\usepackage{latexsym}
\usepackage{amsmath}
\usepackage{proof}

\usepackage{bbm}
\usepackage{interval}
\allowdisplaybreaks







\usepackage[all]{xy}



\usepackage{amsthm}

\usepackage{array,cmll}


\newtheorem{thm}{\bf Theorem}[section]
\newtheorem{exam}[thm]{\bf Example}
\newtheorem{prop}[thm]{\bf Proposition}
\newtheorem{lem}[thm]{\bf Lemma}
\newtheorem{cor}[thm]{Corollary}
\newtheorem{defn}[thm]{\bf Definition}

\newtheorem{rem}[thm]{\bf Remark}

\newtheorem{notn}[thm]{Notation}

\newcommand{\Comp}{\circ}

\newcommand{\SRel}{\sf{SRel}}

\newcommand{\TKer}{{\sf TK}}







\newcommand{\Pcoh}{\sf{Pcoh}}
\newcommand{\Coh}{\sf{Coh}}

\newcommand{\Po}[1]{\sf{P} #1}

\newcommand{\ms}[1]{(#1, \mathcal{#1})}

\newcommand{\msi}[2]{(#1_{#2}, \mathcal{#1}_{#2})}

\newcommand{\zeroinf}{\overline{\mathbb{R}}_{+}}

\newcommand{\pr}{{\sf p}}

\newcommand{\msbf}[1]{\mathbbm{#1}}

\newcommand{\abs}[1]{\mid \!{#1} \! \mid}

\newcommand{\mc}[1]{\mathcal{#1}}

\newcommand{\ort}[3]{#1 \, \, \bot_{#2} \, \, \, #3}

\newcommand{\inpro}[3]{\langle #1  \! \mid  \! #3 \rangle_{#2}}

\newcommand{\inprocoh}[2]{\langle #1  , #2 \rangle}

\newcommand{\borel}{\mc{B}_{+}}

\newcommand{\FI}{F^{\mbox{-}1}}

\newcommand{\opTKer}{{\sf TKer}^{\mbox{\tiny \sf op}}}
\newcommand{\opTKersfin}{{\sf TsKer}^{\mbox{\tiny \sf op}}\
}




\newcommand{\TsKer}{{\sf TsK}}

\newcommand{\TsKeromg}{\TsKer_{\omega}}

\newcommand{\opTsKer}{\TsKer^{\mbox{\tiny \sf op}}}

\newcommand{\cC}{\mathcal{C}}

\newcommand{\oble}[2]{#1^{\leq #2}}
\newcommand{\obwt}[2]{(#1 \& I)^{\otimes #2}}
\newcommand{\gobwt}[2]{(#1 \& \msbf{I})^{\otimes #2}}

\newcommand{\crc}[1]{#1^{\circ}}

\newcommand{\ccrc}[1]{#1^{\circ \circ}}

\newcommand{\cccrc}[1]{#1^{\circ \circ \circ}}

\newcommand{\T}[2]{{\bf O}_#1(#2)}

\newcommand{\eq}{{\sf eq}}

\newcommand{\opTsKeromg}{\TsKeromg^{\mbox{\tiny \sf op}}}

\newcommand{\msn}[2]{(#1^{(#2)}, \mathcal{#1}^{(#2)})}

\newcommand{\msnw}[2]{(\mc{#1} \& \mc{I})^{(#2)}}

\usepackage{relsize}

\usepackage{setspace}

\newcommand{\Dd}[2]{\delta(#2, \,#1)}

\newcommand{\Int}[3]{\int_{#1} #2 \, \, \, #3}

\newcommand{\bang}[1]{! #1}

\newcommand{\bra}[2]{\eq \backslash s_n(\otimes_{i=1}^#2 
\bar{#1}_i)}

\newcommand{\dbra}[2]{ \langle \!  \langle #1_1, \ldots , #1_{#2}
\rangle \!  \rangle}

\newcommand{\setone}[1]{[#1]}

\newcommand{\Sub}{{\bf O}^{\#}_{\mc{I}}(\opTsKer)}
\newcommand{\Subomg}{{\bf O}^{\#}_{\mc{I}}(\opTsKeromg)}
\newcommand{\C}[2]{{\sf c}_{#1} (#2)}

\newcommand{\crceq}[1]{#1^{\circ \eq}}


\pdfoutput=1 
\begin{document}

\begin{frontmatter}



\title{ Double Glueing over Free Exponential:
with Measure Theoretic Applications}



\author{Masahiro HAMANO}
\ead{hamano@gs.ncku.edu.tw, m-hamano@nagasaki-u.ac.jp}
\address{
Nagasaki University, \\ 
1-14 Bunkyo, Nagasaki City 852-8521,
JAPAN}

\begin{abstract}
This paper provides a compact method to lift
the free exponential construction of Melli\`{e}s-Tabareau-Tasson
over the Hyland-Schalk double glueing for orthogonality categories.
A condition "reciprocity of orthogonality"
is shown simply enough to lift the free exponential
over the double glueing in terms of  the orthogonality.
Our general method applies to the monoidal category $\TsKer$ 
of the s-finite transition kernels with countable biproducts.
We show
(i) $\opTsKer$
 has the free exponential, which is shown to be
 describable
in terms of measure theory.
(ii) The s-finite transition kernels have an orthogonality between
measures and measurable functions in terms of Lebesgue integrals.
The orthogonality has the reciprocity, hence the free exponential of (i)
lifts to the orthogonality category 
$\T{\mc{I}}{\opTsKer}$, 
which
subsumes Ehrhard et al's probabilistic
coherent spaces as a full subcategory of countable measurable
spaces. 
To lift the free exponential,
the measure-theoretic uniform convergence theorem commuting Lebesgue integral and limit
 plays a crucial role as well as Fubini-Tonelli theorem for double integral
in s-finiteness. 
Our measure-theoretic orthogonality is considered
as a continuous version of the orthogonality of the probabilistic
coherent spaces for linear logic, and in particular provides a
two layered decomposition of Crubill\'{e} et al's direct
free exponential for these spaces arisen as discretisation in this paper.
\end{abstract}







\end{frontmatter}



\begin{spacing}{0.1}
\tableofcontents
\end{spacing}

\section*{\large Introduction}
This paper is concerned with modelling the exponential connective of linear logic;
(i) abstractly for the orthogonality for the double glueing construction
and (ii) concretely for a category of s-finite
transition kernels using the duality between measures and measurable
functions. 

\smallskip

The (symmetric) monoidal category provides a minimum 
categorical counterpart to the tensor connective of
linear logic \cite{GirLL}. On top of the monoidality,
richer categorical structures are augmented consistently,
interpreting other logical components
(e.g., the closedness for the linear implication and Barr's *-autonomy
for the duality of linear logic).
Understanding categorical properties of the exponential connective
! of linear logic is the difficult part in various works
(cf. \cite{BS, HSha, Mellies} for surveys) stemming from Seely
\cite{RS}.
Recently Melli\`{e}s-Tabareau-Tasson \cite{MTT} formulates the
categorical construction to obtain the free commutative comonoid over a
symmetric monoidal category with binary products.
Their construction interprets the exponential as
the limit of the enumerated equalisers for $n!$-symmetries of $n$-th
powers of the
monoidal products between certain rooted objects using the cartesian product.

The most recent application of \cite{MTT}
is done by Crubill\'{e}-Ehrhard-Pagani-Tasson \cite{Crubille}
to Danos-Ehrhard \cite{DE}'s probabilistic coherent spaces,
whose exponential is shown to be the free one.
The category $\Pcoh$
of probabilistic coherent spaces is a probabilistic version
of Girard's denotational semantics $\Coh$ of coherent spaces \cite{GirLL}.
As the original semantics of linear logic,
$\Coh$ has the distinctive feature of the linear duality,
in terms of the graphical structure on webs,
stating that a clique and anti-clique intersect in at most one
singleton.
Developing the web based method,
Ehrhard investigates the linear duality in
mathematically richer structures (e.g., K\"{o}the spaces \cite{EhrKothe}),
and his investigation
leads Danos-Ehrhard \cite{DE} to a probabilistic version of the duality,
generalising the web to the non-negative real valued functions on it
(i.e., fuzzy web),
reminiscent of probability distributions on the web.
When each function is identified as a vector enumerating its values,
the probabilistic linear duality states that
the inner product of the vectors from clique and anti-clique
is not greater than $1$.
Girard earlier addresses this quantitative form of
the duality in \cite{Girquan}.

The starting point of this paper 
is our attempt
to comprehend \cite{Crubille}'s application to probabilistic
semantics more generally, especially free from the web-based method,
but extracting the abstract role of the duality more explicitly.
We present two new methods respectively to
the abstract orthogonality 
in the category-theory and to
probabilistic semantics in the continuous measure-theory:
(i) Lifting a free exponential to an intricate category with reciprocity of
the orthogonality.
(ii) Continuous orthogonality for linear logic
   between measures and measurable functions.

For (i), we investigate the Hyland-Schalk double glueing construction
\cite{HSha}. 
Our construction is general enough to yield
a probabilistic version of the 
linear duality.
It is well known
that the double glueing construction
over the category of relations gives rise to the *-autonomy of 
$\Coh$ \cite{BS,HSha}. 
This leads to various
full completeness theorems,
not only for the multiplicative 
\cite{DHPP} fragment, but also for the multiplicative-additive 
\cite{BHS} one. We start with observing {\em focused orthogonality}
characterises the adjunction of the orthogonality.
We show that this simple notion of orthogonality
interact consistently with equalisers and limit of \cite{MTT}
so to lift the free exponential to the orthogonality categories.
Our construction gives a simple insight
for the less studied exponential structure inside the double glueing
after \cite{HSha}. 
Importantly, the insight reveals the two layers decomposition of 
the exponential (base category level and its double glueing lifting)
in terms of Melli\`{e}s-Tabareau-Tasson construction.
 \\
$\quad$ For (ii), we investigate the s-finite (i.e., sum of finite) class of
the transition kernels \cite{Sharpe}, which is recently revisited by Staton
\cite{Stat} to analyse functorial commutativity of measure-based denotational
semantics for probabilistic programming.
The s-finiteness provides a wider extension of
the preceding probabilistic semantics using  
transition kernels, from Kozen's precursory work \cite{Koz},
then Panangaden's seminal work
of Markov kernels (a.k.a, stochastic relations)
\cite{Prapaper,Prabook} to the recent measure-transformer semantics
\cite{Borg} of finite kernels.
Following Staton's work on the functorial monoidal product,
we show another advantage (versus Markov and finiteness):
s-finiteness is wide enough
to accommodate the free exponential of 
Melli\`{e}s-Tabareau-Tasson.

\smallskip

We note an unresolved issue of the paper. Both the s-finite transition kernels and their double glueing lifting lack a closed structure for monoidal products within the continuous framework, rendering them inconclusive as a complete model of linear logic.

\smallskip

The paper is organised as follows: 
Section \ref{sect1} is a categorical study
of how to lift a Melli\`{e}s-Tabareau-Tasson
free exponential of $\cC$ to a double glueing $\T{J}{\cC}$
with a focused
orthogonality. 
 Section \ref{sect2} is a measure-theoretic study on
the s-finite transition kernels $\TsKer$.
Section \ref{sect3} constructs the free exponential in $\opTsKer$.
Section \ref{sect4} presents a measure theoretic
instance of Section \ref{sect1} using Section \ref{sect3},
which subsumes the probabilistic coherence spaces as discretisation.


\section{\large Lifting Free Exponential of $\cC$
to $\T{J}{\cC}$} \label{sect1}
This section concerns lifting of
Melli\`{e}s-Tabareau-Tasson free exponential
of $\cC$ to an orthogonality category $\T{J}{\cC}$
by reciprocity for focused orthogonality.
Melli\`{e}s-Tabareau-Tasson free exponential construction
consists of four conditions; the equalisers, the limit of the equalisers 
and the distribution of the monoidal product over them.
While the first two conditions on the equalisers and on
their limit are automatically lifted to $\T{J}{\cC}$,
necessary and sufficient conditions are formulated
in terms of orthogonality
for the remaining two conditions on distributivity of the monoidal product of $\T{J}{\cC}$.

In what follows throughout the paper,
the identity morphism on object $X$ of a category is simply denoted by $X$.

\subsection{\normalsize Melli\`{e}s-Tabareau-Tasson Free Exponential in a Monoidal
$\cC$ with Finite Products}
\begin{defn}[Melli\`{e}s-Tabareau-Tasson free exponential
 \cite{MTT}]\label{MTTfex}{\em
The following four structures
uniquely determine the free exponential of 
a symmetric monoidal category $(\cC, \otimes, I)$
with finite cartesian products $\&$:
\begin{description}
\item[(E$_A$)] For any object $A$ of $\cC$ and a natural number $n$,
the equaliser exists in $\cC$, denoted by $\oble{A}{n}$ with $\eq_A$,
for the $n!$-symmetries on $\obwt{A}{n}$ as the parallel morphisms: 
$$\xymatrix{
\oble{A}{n} \ar[r]^{\eq_A} &
\obwt{A}{n} \ar@<-7pt>[r]_{\mbox{\tiny $n!$ symm.}}^{\vdots}  \ar@<7pt>[r]
&  \obwt{A}{n} \\
\ar@{-->}[u]^{\exists_! \, \, \eq \backslash f}
\phantom{ttt} \ar[ru]_f & 
  }$$
For $f: X \longrightarrow \obwt{A}{n}$ equalising the $n!$-symmetries,
the universal morphisms factoring $f$ is denoted by $\eq\backslash f$
such that $\eq\Comp (\eq\backslash f)=f$.

\item[(distribution of $\otimes$ over E$_A$)] The equaliser
	   of E$_A$
commutes with the tensor product: \\
$(\oble{A}{n} \otimes B, \eq\otimes B)$
becomes the equaliser for the $n!$-symmetries $\otimes B$
on $\obwt{A}{n} \otimes B$ for any object $B$: 
$$\xymatrix{
\oble{A}{n} \otimes B \ar[r]^{\eq\otimes B} &
\obwt{A}{n} \otimes B  \ar@<-7pt>[r]_{\mbox{\tiny ($n!$ symm.) $\otimes B$}}^{\vdots}  \ar@<6pt>[r]
&  \obwt{A}{n} \otimes B 
  }$$

\item[(L$_{A}$)] For any object $A$ in $\cC$,
the following diagram 
has the limit $(\oble{A}{\infty},
\{ p_{\infty,n}: \oble{A}{\infty} \longrightarrow \oble{A}{n} \}_n )$:
$$
\xymatrix
@C=20pt
{
\oble{A}{0} &    \oble{A}{1} \ar[l]_{p_{1,0}} &  
\oble{A}{2}  \ar[l]_{p_{2,1}} \cdots \cdots
& \ar[l]
\oble{A}{n} &  \oble{A}{n+1} \ar[l]_{p_{n+1,n}}  \cdots
}
$$
where $p_{n+1,n}$ is the universal morphism guaranteed by E$_A$
for the composition $(\obwt{A}{n} \otimes \, \pr_r) \Comp \eq$
equalising $n!$-symmetries $\obwt{A}{n}$. 
$\pr_r$ is the right projection. See the following:
$$\xymatrix{
\oble{A}{n+1} \ar[d]^(.4){\eq} 
\ar[rr]^{ p_{n+1,n}} & &  \ar[d]_{\eq} \oble{A}{n}
\\
 \obwt{A}{n+1} \ar[rr]^{
\obwt{A}{n} \otimes \, \pr_r} &  & \obwt{A}{n} \otimes I 
\cong  \obwt{A}{n} }$$

 \item[(distribution of $\otimes$ over  L$_A$)] The limit of L$_A$
	    commutes with the monoidal product $\otimes$: \\
$(\oble{A}{\infty} \otimes B, \{ p_{\infty, n} \otimes B \}_n )$
becomes the limit for the following diagram for any object $B$:
$$
\xymatrix
@C=20pt
{
\oble{A}{0} \otimes B 
& \ar@<-2pt>[l]_{p_{1,0} \otimes B} \cdots \cdots & 
\ar[l] \oble{A}{n} \otimes B &  \oble{A}{n+1} \otimes B \ar@<-2pt>[l]_(.55){p_{n+1,n} \otimes B}  \cdots
}
$$
\end{description}
}
\end{defn}
The constructions of the equaliser for (E$_A$) and
the limit for (L$_A$)
act not only on objects but also on morphisms functorially
preserving the categorical composition.
\begin{defn}[morphisms $\oble{f}{n}$ and $\oble{f}{\infty}$] \label{actmor}{\em
Let $f: A \longrightarrow B$. 
\begin{itemize}
 \item 
The condition E$_A$ guarantees that  
there exists the unique morphism $\oble{f}{n}$ for any natural number $n$: 
\begin{align*}
\oble{f}{n} := \eq_B \backslash
\big(  (f \cdot \pr_l \, \,  \& \,\, I \cdot \pr_r)^{\otimes n} 
\Comp \eq_A \big): 
\oble{A}{n} \longrightarrow \oble{B}{n}
\end{align*}
because the composition
$(f \cdot \pr_l \& I \cdot \pr_r)^{\otimes n}
\Comp \eq_A$ equalises the $n!$-symmetries $\obwt{B}{n}$. 
See the following diagram,
where $\pr_l$ and $\pr_r$ are right and left projections:
$$\xymatrix{
\oble{B}{n} \ar[r]^{\eq_B}
& \obwt{B}{n} \\
\oble{A}{n} \ar[u]^{\oble{f}{n}}\ar[r]_{\eq_A}
& \obwt{A}{n} 
\ar[u]_{(f \cdot \pr_l  \, \& \, I \cdot \pr_r)^{\otimes n}}
}$$
From the universality of $\eq_B$,
it holds that $\oble{f}{n}
=\oble{g}{n}$ whenever $(f \cdot \pr_l \, \, \& \, \,  I \cdot \pr_r)^{\otimes n}=
(g \cdot \pr_l \, \,  \& \, \,  I \cdot \pr_r)^{\otimes n}$.

\item
The condition L$_B$ guarantees that 
there exists the unique morphism
 \begin{align*}
\oble{f}{\infty} :  \oble{A}{\infty} \longrightarrow \oble{B}{\infty}
\end{align*}
factoring the cone $\{
\xymatrix{\oble{A}{\infty} \ar[r]^{p_{\infty n}}
 &   \oble{A}{n} \ar[r]^{\oble{f}{n}} & \oble{B}{n}
} \}_n$.
 See the following diagram for the universal
$\oble{f}{\infty}$ for the cone.
$$
\xymatrix@R=15pt
{
\cdots
&
\ar[l]
\oble{B}{n} &  \oble{B}{n+1} \ar[l]_{p_{n+1,n}}  
& \cdots &  \oble{B}{\infty}
 \\
\cdots
& \ar[l]
\oble{A}{n} \ar[u]^{\oble{f}{n}} &  \oble{A}{n+1} \ar[l]_{p_{n+1,n}} 
\ar[u]^{\oble{f}{(n+1)}} 
& \cdots & \oble{A}{\infty} \ar@{-->}[u]_{\exists_{\, !} \, \, \oble{f}{\infty}}
\ar@/^10pt/[ll]^(.7){p_{\infty, n+1}}
\ar@/^20pt/[lll]^(.8){p_{\infty, n}}
}
$$
The indexed morphism indeed is a 
     cone by the equality (\ref{square}) below, whose demonstration
is put in \ref{apsquare}.
\begin{align} \label{square}
p_{n+1,n} \Comp \oble{f}{n+1} = 
\oble{f}{n} \Comp p_{n+1,n}
\end{align}
\end{itemize}
}\end{defn}

\noindent Obviously the two morphisms in Definition \ref{actmor}
are related $\begin{aligned}
\oble{f}{n} \Comp p_{\infty, n} = p_{\infty, n} \Comp
\oble{f}{\infty}
\end{aligned}$, which is seen
as the limit of (\ref{square}).

\subsection{\normalsize Focused Orthogonality in $\cC$ and Orthogonality
Category $\T{J}{\cC}$} \label{subsect1.2}
\begin{defn}[orthogonality on $\cC$ \cite{HSha}]\label{defort}{\em
An {\em orthogonality} on a symmetric monoidal category $(\cC, \otimes, I)$
is an indexed family of relation $\bot_R$ between the maps
$u \in \cC(I,R)$ and $x \in \cC(R,J)$,
where $I$ is the monoidal unit while
$J$ is an arbitrary fixed object, 
\begin{align*}
\ort{I \stackrel{u}{\longrightarrow} R}{R}{
R \stackrel{x}{\longrightarrow} J}
\end{align*}
satisfying the following conditions: 

\noindent ({isomorphism})
If $f: R \rightarrow S$ is an isomorphism, then 
for any $u: I \longrightarrow R$ and $x: R \rightarrow J$,
\begin{align*}
\ort{u}{R}{x} \quad \txt{iff} \quad \ort{f \Comp u}{S}{x \Comp f^{-1}}
\end{align*}

\noindent ({tensor})
Given $u: I \longrightarrow R$,
$v: I \longrightarrow S$, and $h: R \otimes S \longrightarrow J$,
\begin{align*}
& \ort{u}{R}{R \cong R \otimes I \stackrel{R \otimes v}{\longrightarrow} R \otimes S
 \stackrel{h}{\longrightarrow} J} \, \txt{and}  \\
& 
\ort{v}{S}{S \cong I \otimes S \stackrel{u \otimes S}{\longrightarrow}
 R \otimes S \stackrel{h}{\longrightarrow} J} \quad \txt{
imply}  \quad \ort{u \otimes v}{R \otimes S}{h}.
\end{align*}

\noindent ({identity})
For all $u: I \rightarrow R$ and $x: R \rightarrow J$,
$\ort{u}{R}{x}$ implies $\ort{\operatorname{Id}_I}{I}{x \Comp u}$

}\end{defn}

For $U \subseteq \cC(I,R)$, its orthogonal $\crc{U} \subseteq \cC(R,J)$
is given by 
\begin{align*}
\crc{U}:= \{ x :  R \longrightarrow J \mid \forall u \in U \, \,  \ort{u}{R}{x}\} 
\end{align*}
Throughout the paper, $\ort{x}{}{U}$ denotes a short for 
$x \in \crc{U}$.

This gives a Galois connection so that $\crc{U} \subset V$ iff
$U \subset \crc{V}$ for any $U$ and $V$. The closure operator $\ccrc{(-)}$
fixes $\crc{U}$ so that
$U^{\circ \circ \circ } =
\crc{U}$.
In terms of orthogonality, $\ort{y}{}{V}$ iff $\ort{y}{}{\ccrc{V}}$
for all $y$.

\smallskip

In this paper, a special kind of orthogonality is considered, 
introduced by Hyland-Schalk \cite{HSha} originally 
in order to define a certain class $F$ of morphisms, called focused.
\begin{defn}[focused orthogonality (Example 48 of \cite{HSha})]
{\em
An indexed family of relation $\bot_R$ is {\em focused}
when it is determined by a subset $F$ of $\cC(I,J)$
in the following manner:
\begin{align}
\ort{
I \stackrel{u}{\longrightarrow} R}{R}
{R \stackrel{x}{\longrightarrow} J} \quad \quad
& \mbox{if and only if} \quad \quad
x \Comp u \in F \label{focort}
\end{align}
}\end{defn} 
This stipulates that the orthogonality is {\em reciprocal} since
(\ref{focort}) is alternatively characterised as follows
for every $u: I \longrightarrow R$,
$x : S \longrightarrow J$, and $f: R \longrightarrow S$,
\begin{align}
  \ort{u}{R}{x \Comp f} \quad \quad
& \mbox{if and only if} \quad \quad  \ort{f \Comp u}{S}{x} \label{exadiort}
\end{align}
In what follows in this paper,
the reciprocity (\ref{exadiort}) of orthogonality is used 
to characterise the focused orthogonality accordingly to the following
equivalence:
\begin{lem}[reciprocity coincides with focused orthogonality]
(\ref{exadiort}) if and only if 
(\ref{focort})
\end{lem}
\begin{proof}
\noindent (if)
The left and the right of (\ref{exadiort}) are both equivalent to 
$x \Comp f \Comp u \in F$.  \\
\noindent (only if) 
Put $F := \{ x:I \longrightarrow J \mid \ort{\operatorname{Id}_I}{I}{x} \}$.
Then $\ort{u}{R}{x}$
iff $x \Comp u \in F$ by the
reciprocity.
\end{proof}

\begin{defn}{\em
{\em Precise tensor} is the tensor condition of Definition \ref{defort} 
strengthened by replacing ``imply'' with ``iff''}
\end{defn}

\begin{prop}[reciprocity enough for the three conditions of the
orthogonality] The reciprocity of the focused orthogonality
derives the three conditions (isomorphism), (tensor)
and (identity) on Definition \ref{defort}.
Moreover (precise tensor) is derived.
\end{prop}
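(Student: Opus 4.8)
The plan is to derive each of the three defining conditions of Definition \ref{defort} from the single biconditional \eqref{exadiort}, and then show that precise tensor follows as well. Throughout I will use the Galois-connection remark only implicitly; the real engine is that \eqref{exadiort} lets me move a morphism $f$ back and forth across the orthogonality relation between the ``$u$-slot'' and the ``$x$-slot''.

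First I would dispatch \textbf{(identity)}. Given $\ort{u}{R}{x}$ with $u:I\to R$ and $x:R\to J$, I want $\ort{\operatorname{Id}_I}{I}{x\Comp u}$. Apply \eqref{exadiort} with the roles $u:=\operatorname{Id}_I:I\to I$, $f:=u:I\to R$, and $x:=x:R\to J$: the right-hand side $\ort{u\Comp\operatorname{Id}_I}{R}{x}$ is exactly $\ort{u}{R}{x}$, which holds, so the left-hand side $\ort{\operatorname{Id}_I}{I}{x\Comp u}$ holds. Next, \textbf{(isomorphism)}: given an isomorphism $f:R\to S$, apply \eqref{exadiort} with $u:I\to R$, this same $f$, and $x:S\to J$ to get $\ort{u}{R}{x\Comp f}\iff\ort{f\Comp u}{S}{x}$. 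That is almost the statement; to match Definition \ref{defort} verbatim I rename: for $x':R\to J$ write $x':=x\Comp f^{-1}$, so that $x'\Comp f = x$ and $f^{-1}$ is the relevant iso in the other direction, giving $\ort{u}{R}{x'}\iff\ort{f\Comp u}{S}{x'\Comp f^{-1}}$ — which is the (isomorphism) condition (invertibility of $f$ is used precisely to rewrite $x$ as $x'\Comp f$, and is not needed for the $\Leftarrow$ direction). I should remark that the (isomorphism) condition as originally stated is thus even weaker than what reciprocity gives, since \eqref{exadiort} holds for arbitrary $f$, not just isos.

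The main work is \textbf{(precise tensor)}, which subsumes \textbf{(tensor)}. Let $u:I\to R$, $v:I\to S$, $h:R\otimes S\to J$. Write $h_1:R\cong R\otimes I \xrightarrow{R\otimes v} R\otimes S \xrightarrow{h} J$ and $h_2:S\cong I\otimes S\xrightarrow{u\otimes S} R\otimes S\xrightarrow{h}J$. I claim all three of $\ort{u}{R}{h_1}$, $\ort{v}{S}{h_2}$, and $\ort{u\otimes v}{R\otimes S}{h}$ are equivalent. The bridge is \eqref{exadiort}: applied with the morphism $f := (R\otimes v)\Comp(\text{coherence iso}):R\to R\otimes S$, it gives $\ort{u}{R}{h\Comp f}\iff\ort{f\Comp u}{R\otimes S}{h}$; here $h\Comp f = h_1$, and $f\Comp u$ is the composite $I\xrightarrow{u} R\cong R\otimes I\xrightarrow{R\otimes v}R\otimes S$, which by functoriality of $\otimes$ and the coherence isos equals $u\otimes v:I\cong I\otimes I\to R\otimes S$ (this last identification is the one routine diagram chase I will not spell out — it is the standard fact that $u\otimes v = (u\otimes S)\Comp(I\otimes v) = (R\otimes v)\Comp(u\otimes I)$ up to coherence). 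Hence $\ort{u}{R}{h_1}\iff\ort{u\otimes v}{R\otimes S}{h}$. Symmetrically, taking $f := (u\otimes S)\Comp(\text{coherence iso}):S\to R\otimes S$ yields $\ort{v}{S}{h_2}\iff\ort{u\otimes v}{R\otimes S}{h}$. Chaining the two equivalences gives precise tensor, and in particular the original (tensor) implication (both hypotheses $\Rightarrow$ the conclusion) is a special case.

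The step I expect to be the only real obstacle is the coherence bookkeeping in the tensor part: verifying that $f\Comp u$, after inserting the unitor isomorphisms $R\cong R\otimes I$ and $I\cong I\otimes I$, is genuinely $u\otimes v$ and not merely isomorphic to it in an uncontrolled way. This is handled by the symmetric monoidal coherence theorem together with naturality of the unitors, so it is routine but must be stated carefully; everywhere else the argument is a direct two-line application of \eqref{exadiort}. I would therefore present the coherence identification as a named sublemma (or cite the standard bifunctoriality identity for $\otimes$) and keep the main proof to the three short equivalence arguments above.
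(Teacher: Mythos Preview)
Your proof is correct and follows essentially the same route as the paper: each condition is obtained by a single application of reciprocity \eqref{exadiort}, and precise tensor comes from the bifunctoriality identity $u\otimes v = (R\otimes v)\Comp(u\otimes I) = (u\otimes S)\Comp(I\otimes v)$, just as in the paper. (There is a small notational slip in your isomorphism paragraph --- you should set $x:=x'\Comp f^{-1}$ rather than $x':=x\Comp f^{-1}$ --- but the intended substitution and the resulting equivalence are correct.)
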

\begin{proof}
\noindent (isomorphism) $\ort{u=f^{-1} \Comp f \Comp u}{R}{x}$
iff $\ort{f \Comp u}{R}{x \Comp f^{-1}}$ by reciprocity.

\noindent (identity) Similarly but easily by $u= u \Comp \operatorname{Id}_I$.

\noindent (tensor) Only one premise of the tensor condition implies
the conclusion: $\ort{u}{R \otimes S}{h \Comp (R \otimes v)}$
iff by reciprocity $\ort{(R  \otimes v) \Comp u}{R \otimes S}{h}$,
whose left hand is $u \otimes v$.

\noindent (precise tensor)
$u \otimes v = (R \otimes v) \Comp (u \otimes I)
=  (u \otimes S) \Comp (I \otimes v)$,
composing $(R \otimes v)$
(resp. $(u \otimes S)$) to the right $h$ in
the $\ort{}{R \otimes S}{}$ 
yields the first (resp. the second) premise
of the tensor.
\end{proof}


\smallskip

\begin{defn}[orthogonality category $\T{J}{\cC}$ \cite{HSha}] {\em
Let us fix an orthogonal relation. An object of $\T{J}{\cC}$ is a tuple $\msbf{A}=(A, \msbf{A}_p, \msbf{A}_{cp} )$ with  
$\msbf{A}_p \subseteq{\cC}(I,A)$
and $\msbf{A}_{cp} \subseteq {\cC}(A,J)$ satisfying; \\

\noindent 
(mutual orthogonality)
$\begin{aligned}
\msbf{A}_p=\crc{(\msbf{A}_{cp})} \quad \txt{and} \quad 
\msbf{A}_{cp}=\crc{(\msbf{A}_p)}
\end{aligned}$ \\

\noindent Each map from $\msbf{A}=(A,\msbf{A}_p, \msbf{A}_{cp})$
to $\msbf{B}=(B, \msbf{B}_p, \msbf{B}_{cp})$ in $\T{J}{\cC}$ is any $\cC$
map $f: A \longrightarrow B$ satisfying: \\
\noindent (p) point:
$\forall u  : I \longrightarrow A$ in $\msbf{A}_p$,
the composition $f \Comp u$:  $\xymatrix{I \ar[r]^{u} & A \ar[r]^f & B 
}$ belongs to $\msbf{B}_p$. \\
\noindent (cp) copoint:
$\forall y : B \longrightarrow J$ in $\msbf{B}_{cp}$,
the composition $y \Comp f$: $\xymatrix{A \ar[r]^{f} & 
B \ar[r]^y & J 
}$ belongs to $\msbf{A}_{cp}$.
}
\end{defn}

The functor exists 
$\abs{\, } : \T{J}{\cC} \longrightarrow
\cC$ forgetting the second and the third components of the objects.

\begin{rem}[$\T{J}{\cC}$ is Hyland Schalk's tight orthogonality category]
{\em The category $\T{J}{\cC}$ is called the {\em tight orthogonality
category} in Hyland-Schalk \cite{HSha}, whereby it is formulated as a
subcategory of the double glueing category over $\cC$ (cf. Definition 47
of \cite{HSha}).
Since this is the only 
double glueing construction concerned in the present paper,
we use the simple name.}
\end{rem}

\begin{lem}\label{derpcp}
The conditions (p) and (cp)
are derivable from one another when an orthogonality is focused.
\end{lem}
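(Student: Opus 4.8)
The plan is to show that a $\cC$-morphism $f: A \longrightarrow B$ satisfies condition (p) if and only if it satisfies condition (cp), using the reciprocity equation (\ref{exadiort}) together with the mutual orthogonality defining the objects $\msbf{A}$ and $\msbf{B}$. The two implications are symmetric in spirit, so I would prove one carefully and indicate that the other is dual.

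First I would assume (p): for every $u \in \msbf{A}_p$, we have $f \Comp u \in \msbf{B}_p$. To derive (cp), take an arbitrary $y \in \msbf{B}_{cp}$; I must show $y \Comp f \in \msbf{A}_{cp}$. Since $\msbf{A}_{cp} = \crc{(\msbf{A}_p)}$, it suffices to check $\ort{u}{A}{y \Comp f}$ for every $u \in \msbf{A}_p$. Now apply reciprocity: $\ort{u}{A}{y \Comp f}$ holds if and only if $\ort{f \Comp u}{B}{y}$. The latter holds because $f \Comp u \in \msbf{B}_p = \crc{(\msbf{B}_{cp})}$ and $y \in \msbf{B}_{cp}$, so by definition of the orthogonal $u' \bot_B y$ for every $u' \in \msbf{B}_p$, in particular for $u' = f \Comp u$. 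This gives $y \Comp f \in \crc{(\msbf{A}_p)} = \msbf{A}_{cp}$, which is (cp). For the converse, assume (cp); given $u \in \msbf{A}_p$, to show $f \Comp u \in \msbf{B}_p = \crc{(\msbf{B}_{cp})}$ it suffices to show $\ort{f \Comp u}{B}{y}$ for all $y \in \msbf{B}_{cp}$, which by reciprocity is equivalent to $\ort{u}{A}{y \Comp f}$; and this holds since $y \Comp f \in \msbf{A}_{cp} = \crc{(\msbf{A}_p)}$ by (cp) and $u \in \msbf{A}_p$.

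The argument is essentially a bookkeeping exercise shuttling between the two sides of a reciprocal orthogonality and unfolding the definitions of $\crc{(~)}$ and of the object components, so there is no genuine obstacle; the only point requiring care is keeping the roles of $R = A$, $S = B$ in the reciprocity equation (\ref{exadiort}) correctly matched against the direction of $f$ and the membership facts $\msbf{A}_p = \crc{(\msbf{A}_{cp})}$, $\msbf{B}_p = \crc{(\msbf{B}_{cp})}$, etc. I would also remark that reciprocity is used in both directions of the equivalence $\ort{u}{R}{x \Comp f} \Leftrightarrow \ort{f \Comp u}{S}{x}$, so the full strength of Definition \ref{adjort} is needed and one-sided variants would only give one of the two implications in the lemma.
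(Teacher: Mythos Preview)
Your proof is correct and takes essentially the same approach as the paper. The paper's version is slightly more compact: it unfolds (p) directly into $\forall u \in \msbf{A}_p\,\forall s \in \msbf{B}_{cp}\ \ort{f \Comp u}{B}{s}$ and (cp) into $\forall s \in \msbf{B}_{cp}\,\forall u \in \msbf{A}_p\ \ort{u}{A}{s \Comp f}$, then observes these two statements are equivalent by reciprocity, rather than proving the two implications separately as you do; but the substance is identical.
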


\begin{proof}
By $ \msbf{B}_p= \crc{\msbf{B}_{cp}}$, the condition (p) says $\forall u
 \in \msbf{A}_p \, \forall s \in \msbf{B}_{cp}
 \, \, \ort{f \Comp u}{B}{s}$.
By $\msbf{A}_{cp}=\crc{\msbf{A}_p}$, the condition (cp) says $\forall s
 \in \msbf{B}_{cp} \, \forall u \in \msbf{A}_p \, \,  
 \ort{u}{A}{s} \Comp f$. The two are equivalent by the reciprocity.
\end{proof}
By Lemma \ref{derpcp}, when an orthogonality is focused,
an alternative definition of the category $\T{J}{\cC}$ is obtained;
\begin{defn}[$\T{J}{\cC}$ with a focused orthogonality]{\em 
When an orthogonality on $\cC$ is focused,
each object of $\T{J}{\cC}$ is represented alternatively by a pair
$\msbf{A}=(A, \msbf{A}_p)$ satisfying the following instead of 
the mutual orthogonality: 

\smallskip

\noindent 
(double orthogonality):
$\begin{aligned}
\ccrc{(\msbf{A}_p)} = \msbf{A}_p
\end{aligned}$ 

\smallskip

\noindent Each map between the objects
must satisfy the condition (p) only. 

A stronger condition suffices in particular
when the second components 
are represented by $\msbf{A}_p = \ccrc{U}$
and $\msbf{B}_p = \ccrc{V}$ with genuine subsets $U$
and $V$ of $\msbf{A}_p$
and of $\msbf{B}_p$ respectively:

\smallskip

\noindent ($\overline{\mbox{p}}$)
$\forall u \in U$,
the composition $f \Comp u$ belongs to $V$.
}\end{defn}
\smallskip
\noindent The sufficiency is because of the monotonicity of the operation
$\ccrc{(~)}$
and the following lemma. 
\begin{lem} \label{ccrclem}
For any morphism $f: A \longrightarrow B$ in $\cC$ and any subset $U \subseteq
 {\cC}(I,A)$, $f(\ccrc{U}) \subseteq \ccrc{f(U)}$. (See the proof in
\ref{apccrclem}).
\end{lem}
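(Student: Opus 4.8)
The plan is to unfold the definitions of the closure operator $\ccrc{(~)}$ and of the action of $f$ on a subset, and then reduce the claimed inclusion to a single orthogonality manipulation via reciprocity. Write $\crc{(~)}$ for the orthogonal with respect to $J$, so that $\ccrc{U} = (\crc{U})^{\circ}$, where $\crc{U} \subseteq \cC(A,J)$ and hence $\ccrc{U} \subseteq \cC(I,A)$. Likewise $\ccrc{f(U)} \subseteq \cC(I,B)$, built from $f(U) = \{ f \Comp u \mid u \in U\} \subseteq \cC(I,B)$. So I must show: if $w \in \ccrc{U}$ then $f \Comp w \in \ccrc{f(U)}$, i.e. that $f \Comp w$ is orthogonal (at $B$) to every element of $\crc{f(U)}$.

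First I would fix $w \in \ccrc{U} = (\crc{U})^\circ$ and fix an arbitrary copoint $y \in \crc{f(U)} \subseteq \cC(B,J)$; the goal is $\ort{f \Comp w}{B}{y}$. By reciprocity (Definition \ref{adjort}, equation (\ref{exadiort})), this is equivalent to $\ort{w}{A}{y \Comp f}$. Since $w \in (\crc{U})^\circ$, it suffices to check that the copoint $y \Comp f : A \longrightarrow J$ lies in $\crc{U}$, that is, $\ort{u}{A}{y \Comp f}$ for every $u \in U$. Applying reciprocity once more, $\ort{u}{A}{y \Comp f}$ is equivalent to $\ort{f \Comp u}{B}{y}$, which holds because $f \Comp u \in f(U)$ and $y \in \crc{f(U)}$ by the choice of $y$. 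Chaining these equivalences gives $\ort{f \Comp w}{B}{y}$, and since $y$ was arbitrary in $\crc{f(U)}$, we conclude $f \Comp w \in \ccrc{f(U)}$, hence $f(\ccrc{U}) \subseteq \ccrc{f(U)}$.

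There is essentially no hard obstacle here: the statement is a formal consequence of the adjunction-like shape of reciprocity, and the only thing to be careful about is the direction in which the two applications of (\ref{exadiort}) are used — one to move $f$ from the point side to the copoint side, and one in the reverse direction — together with keeping track of which ambient object ($A$ or $B$) each orthogonality relation is indexed by. The argument does not even need mutual/double orthogonality of any object; it uses only that reciprocity holds and the bare definitions of $\crc{(~)}$ and $\ccrc{(~)}$. (Note also that the analogous inclusion without reciprocity is false in general, so the use of (\ref{exadiort}) is genuinely essential.)
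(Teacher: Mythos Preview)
Your proof is correct and follows essentially the same approach as the paper's own proof: both fix a point $w\in\ccrc{U}$ and a copoint $y\in\crc{f(U)}$, use reciprocity once to show $y\Comp f\in\crc{U}$, conclude $\ort{w}{A}{y\Comp f}$, and then use reciprocity a second time to obtain $\ort{f\Comp w}{B}{y}$. The only difference is presentational order (you argue backwards from the goal, the paper forwards), not substance.
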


Both cartesian and monoidal products in $\cC$
are lifted to $\T{J}{\cC}$ respectively, as formulated in Section 5.3 of \cite{HSha}.
\begin{prop}[Product in $\T{J}{\cC}$ \cite{HSha}]\label{protjc}{\em 
Suppose $\cC$ has finite products
and an orthogonality is focused.
Then $\T{J}{\cC}$ has finite products
\begin{align*}
\msbf{A} \& \msbf{B} :=  \big( A \& B,
\msbf{A}_p \& \msbf{B}_p:= 
\{ u \& v \mid u \in \msbf{A}_p \, \,  v \in \msbf{B}_p   \} \big) 
\end{align*}
Note the second component is automatically closed under the double
 orthogonality.
The forgetful $\T{J}{\cC} \longrightarrow \cC$ preserves finite products.
The proof is put in \ref{approtjc}.}
\end{prop}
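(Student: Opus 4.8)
The plan is to verify three things for the candidate object $\msbf{A} \& \msbf{B} = (A \& B, \msbf{A}_p \& \msbf{B}_p)$: that the second component is closed under double orthogonality, that the lifted product has the universal property of a product in $\T{J}{\cC}$, and that the forgetful functor sends it to the product in $\cC$. The third is immediate from the construction, so the real work is the first two. For the closure claim, since an object is determined by a pair $(A, \msbf{A}_p)$ with $\ccrc{(\msbf{A}_p)} = \msbf{A}_p$, I would argue that $\ccrc{(\msbf{A}_p \& \msbf{B}_p)} = \msbf{A}_p \& \msbf{B}_p$. Inclusion $\supseteq$ is monotonicity of $\ccrc{(~)}$; for $\subseteq$ I would use the projections $\pr_l: A \& B \to A$ and $\pr_r: A \& B \to B$ together with Lemma \ref{ccrclem}: for $w \in \ccrc{(\msbf{A}_p \& \msbf{B}_p)}$ we get $\pr_l \Comp w \in \ccrc{(\pr_l(\msbf{A}_p \& \msbf{B}_p))} \subseteq \ccrc{(\msbf{A}_p)} = \msbf{A}_p$ and symmetrically $\pr_r \Comp w \in \msbf{B}_p$, hence $w = (\pr_l \Comp w) \& (\pr_r \Comp w) \in \msbf{A}_p \& \msbf{B}_p$ (using $u \& v$ to abbreviate the pairing $\langle u, v\rangle$ from $I$ into $A\& B$, which agrees with the tupling since $w = \langle \pr_l \Comp w, \pr_r \Comp w\rangle$).

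Next I would check the universal property. The projections $\pr_l, \pr_r$ of $\cC$ are maps in $\T{J}{\cC}$: by the reciprocal-orthogonality form of the category, a map need only satisfy condition (p), and for $u \& v \in \msbf{A}_p \& \msbf{B}_p$ one has $\pr_l \Comp (u \& v) = u \in \msbf{A}_p$, so $\pr_l$ preserves points; likewise $\pr_r$. For a pair of maps $f : \msbf{C} \to \msbf{A}$ and $g : \msbf{C} \to \msbf{B}$, the unique mediating $\cC$-map is $\langle f, g\rangle : C \to A \& B$; I must check it is a map in $\T{J}{\cC}$, i.e.\ that for $w \in \msbf{C}_p$ we have $\langle f, g\rangle \Comp w \in \msbf{A}_p \& \msbf{B}_p$. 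Since $\langle f, g\rangle \Comp w = \langle f \Comp w, g \Comp w\rangle = (f\Comp w) \& (g\Comp w)$, and $f \Comp w \in \msbf{A}_p$, $g \Comp w \in \msbf{B}_p$ by (p) for $f$ and $g$, this lies in $\msbf{A}_p \& \msbf{B}_p$ by definition. Uniqueness and the equations $\pr_l \Comp \langle f, g\rangle = f$, $\pr_r \Comp \langle f, g\rangle = g$ are inherited from $\cC$ because the forgetful functor $\abs{\,}$ is faithful. This also shows $\abs{\,}$ preserves the product on the nose.

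The main obstacle is the closure step: one must be sure that $\msbf{A}_p \& \msbf{B}_p$ as defined — the image of the set-theoretic product of points under pairing — really is $\ccrc{}$-closed, and not merely contained in its closure. This is exactly where reciprocity (hence focusedness of the orthogonality) is used: without it one would have to reason directly with $\bot$ and the tensor condition rather than with the projections and Lemma \ref{ccrclem}, and the clean "reconstruct $w$ from its projections" argument relies on the alternative presentation of $\T{J}{\cC}$ by pairs $(A, \msbf{A}_p)$ that is only available under reciprocity. A minor point to be careful about is notational: the symbol $u \& v$ for $u : I \to A$, $v : I \to B$ must be read as the pairing $\langle u, v \rangle : I \to A \& B$ (equivalently $(u \& v) \Comp \langle \operatorname{Id}_I, \operatorname{Id}_I\rangle$ in the notation of Definition \ref{actmor}), and the identity $w = \langle \pr_l \Comp w, \pr_r \Comp w\rangle$ valid for every $w : I \to A \& B$ is what closes the argument. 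I would relegate the routine diagram-chasing for these identities to the appendix (\ref{approtjc}), as the statement already indicates.
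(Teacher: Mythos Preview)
Your proof is correct and follows essentially the same approach as the paper: both establish closure by projecting an element of the closure down to $\msbf{A}_p$ and $\msbf{B}_p$ via reciprocity and then reconstructing it as a pairing, and both verify the universal property by checking condition~(p) for the projections and for the mediating map. The only cosmetic difference is that the paper phrases the closure step as $\msbf{A}_p \,\&\, \msbf{B}_p = \crc{U}$ for the explicit $U = \crc{(\msbf{A}_p)}\cdot\pr_l \,\cup\, \crc{(\msbf{B}_p)}\cdot\pr_r$, applying reciprocity by hand, whereas you route the same step through Lemma~\ref{ccrclem}.
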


\begin{defn}[stable tensor (Definition 58 of \cite{HSha}] \label{stabten}{\em
An orthogonality on a monoidal category $\cC$
{\em stabilises} the monoidal product when the following condition holds
for all $U \subseteq \cC(I,R)$ and $V \subseteq \cC(I,S)$: \\

\noindent (stable tensor) 
$\begin{aligned} 
 \crc{(\ccrc{U} \otimes \ccrc{V})} =
\crc{(\ccrc{U} \otimes V)} = \crc{(U \otimes \ccrc{V})}
\end{aligned}$
}
 \end{defn}

The stable tensor is a condition on a representability
of certain maps in multicategories when $\cC$
has a closed structure on the monoidal product (see
Section 5.3 of \cite{HSha}). However the present paper
does not assume the closedness.

The focused
orthogonality is strong enough to stabilise the monoidal product in
$\cC$:
\begin{lem} \label{intstabten}
Any focused orthogonality stabilises monoidal products.
\end{lem}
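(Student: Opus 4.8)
The plan is to reduce the stable tensor condition to the reciprocity axiom by unfolding both sides through the focused description of the orthogonality. Recall that by the proposition "reciprocity coincides with focused orthogonality," a reciprocal orthogonality is determined by a fixed set $F \subseteq \cC(I,J)$ via $\ort{u}{R}{x}$ iff $x \Comp u \in F$. The strategy is to show directly that, for all $U \subseteq \cC(I,R)$ and $V \subseteq \cC(I,S)$,
\begin{align*}
\crc{(\ccrc{U} \otimes \ccrc{V})} = \crc{(\ccrc{U} \otimes V)} = \crc{(U \otimes \ccrc{V})},
\end{align*}
and since all three sets are visibly between $\crc{(U \otimes V)}$ and $\crc{(\ccrc{U} \otimes \ccrc{V})}$ by monotonicity of $\crc{(~)}$ with respect to reverse inclusion, it suffices to prove the single inclusion $\crc{(U \otimes \ccrc{V})} \subseteq \crc{(\ccrc{U} \otimes \ccrc{V})}$ (and its symmetric companion for the $V$ side).

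First I would fix $h : R \otimes S \longrightarrow J$ with $h \in \crc{(U \otimes \ccrc{V})}$, i.e. $\ort{u \otimes v}{R \otimes S}{h}$ for every $u \in U$ and $v \in \ccrc{V}$, and take an arbitrary $u' \in \ccrc{U}$ and $v' \in \ccrc{V}$; the goal is $\ort{u' \otimes v'}{R \otimes S}{h}$. Using the factorisation $u' \otimes v' = (u' \otimes S) \Comp (I \otimes v')$ together with reciprocity (equation (\ref{exadiort})), this is equivalent to $\ort{I \otimes v'}{I \otimes S}{h \Comp (u' \otimes S)}$, which after the canonical iso $I \otimes S \cong S$ reads $\ort{v'}{S}{h \Comp (u' \otimes S)}$. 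So it is enough to show $h \Comp (u' \otimes S) \in \crc{(\ccrc{V})} = \crc{V}$ for every $u' \in \ccrc{U}$. Fixing $v \in V \subseteq \ccrc{V}$, the same factorisation and reciprocity convert the desired $\ort{v}{S}{h \Comp (u' \otimes S)}$ into $\ort{u' \otimes v}{R \otimes S}{h}$, and then (by the other factorisation $u' \otimes v = (R \otimes v)\Comp(u' \otimes I)$ plus reciprocity again, modulo the iso $R \otimes I \cong R$) into $\ort{u'}{R}{h \Comp (R \otimes v)}$. Thus it remains to check $h \Comp (R \otimes v) \in \crc{U} = \crc{(U)}$; but for any $u \in U$ we have $\ort{u \otimes v}{R \otimes S}{h}$ by the hypothesis on $h$ (since $v \in V \subseteq \ccrc{V}$), and running the factorisation/reciprocity argument once more in the $R$-direction gives exactly $\ort{u}{R}{h \Comp (R \otimes v)}$. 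Chaining these equivalences back up yields $\ort{u' \otimes v'}{R \otimes S}{h}$, proving $h \in \crc{(\ccrc{U} \otimes \ccrc{V})}$. The symmetric argument starting from $\crc{(\ccrc{U} \otimes V)}$ gives the other equality.

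I expect the only real subtlety to be bookkeeping: every "cross from $u \otimes v$ to a partial composite" step uses one of the two coherence identities $u \otimes v = (R \otimes v)\Comp(u \otimes I) = (u \otimes S)\Comp(I \otimes v)$ together with the monoidal unit isomorphisms $R \otimes I \cong R$ and $I \otimes S \cong S$, and one must apply reciprocity in the correctly oriented form; since reciprocity is an "if and only if," all the intermediate statements are genuinely equivalent, so there is no danger of only getting one inclusion. A cleaner way to package the whole computation — which I would use if the iso-chasing becomes unwieldy — is to invoke the focused description directly: with $F$ as above, $h \in \crc{(U \otimes \ccrc{V})}$ says $h \Comp (u \otimes v) \in F$ for all $u \in U$, $v \in \ccrc{V}$; then $h \Comp (R \otimes v) \in \crc{U}$ for each such $v$ because $h \Comp (R \otimes v) \Comp u = h \Comp (u \otimes v) \in F$; hence $h \Comp (R \otimes v) \in \crc{U} = \crc{(\ccrc{U})}$, so $h \Comp (u' \otimes v) = h \Comp (R \otimes v) \Comp u' \in F$ for all $u' \in \ccrc{U}$; symmetrically, viewing $h \Comp (u' \otimes S)$, we get $h \Comp (u' \otimes v') \in F$ for all $v' \in \ccrc{V}$, i.e. $h \in \crc{(\ccrc{U} \otimes \ccrc{V})}$. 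This makes the proof essentially a two-line diagram chase on elements of $F$. The main obstacle, such as it is, is simply ensuring the associator/unitor isomorphisms are tracked correctly when passing between $\cC(I, R \otimes S)$, $\cC(I,R)$ and $\cC(I,S)$; no nontrivial categorical input beyond reciprocity and the already-established equivalence with focusedness is needed.
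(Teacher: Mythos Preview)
Your proof is correct and follows essentially the same approach as the paper: use reciprocity to pass between $\ort{u \otimes v}{R \otimes S}{h}$ and $\ort{u}{R}{h \Comp (R \otimes v)}$ (resp.\ $\ort{v}{S}{h \Comp (u \otimes S)}$), then exploit $\crc{V} = \cccrc{V}$ to upgrade $V$ to $\ccrc{V}$. The paper's argument is marginally more economical---it starts from $\crc{(\ccrc{U}\otimes V)}$ and upgrades only the $V$-slot in a single reciprocity back-and-forth---whereas your first presentation starts from $\crc{(U\otimes \ccrc{V})}$ and threads through both factorisations; but your ``focused'' alternative at the end is essentially the paper's proof verbatim.
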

\begin{proof}
We prove ($\supset$) of the stable tensor condition as the converse is
 tautological. Take any $\nu \in RHS$, which means
$\forall f \in \ccrc{U} \, \forall g \in V $ $\ort{f \otimes g
= (f \otimes S) \Comp (I \otimes g)}{R \otimes S}
{R \otimes S \stackrel{\nu}{\longrightarrow} J}$
iff by reciprocity $\ort{g}{S}
{S \cong I \otimes S \stackrel{f \otimes S}{\longrightarrow} R \otimes S
\stackrel{\nu}{\longrightarrow} J}$. But this means 
$\forall h \in \ccrc{V} \, \, \ort{h}{S}{\nu \Comp (f \otimes S)}$
iff by reciprocity
$\ort{f \otimes h=(f \otimes S) \Comp (I \otimes h)}{R \otimes S}{\nu}$, which means $\nu \in LHS$.
\end{proof}

\begin{defn}[Monoidal product in $\T{J}{\cC}$ \cite{HSha}] \label{stabmon} {\em 
Suppose $\cC$ is symmetric monoidal with an orthogonality
stabilising $\otimes$.
Then $\T{J}{\cC}$ is symmetric monoidal
and the forgetful $\T{J}{\cC} \longrightarrow \cC$ preserves the monoidality.
\begin{align*}
\msbf{A} \otimes \msbf{B} := \big( A \otimes B,
\ccrc{(\msbf{A}_p \otimes \msbf{B}_p)} \big) 
\end{align*}
The tensor unit $\msbf{I}$ is given by $(I, \ccrc{\{\operatorname{Id}_I\}})$.

}
\end{defn}

\subsection{\normalsize Lifting Free Exponential of $\cC$ to
$\T{J}{\cC}$}  \label{sect1.3}
From now on in this subsection, the category $\cC$
is supposed to satisfy the four conditions of Definition
\ref{MTTfex}. 
The equalisers $\oble{A}{n}$s and the limit $\oble{A}{\infty}$ of $\cC$ are lifted respectively 
to $\oble{\msbf{A}}{n}$s and $\oble{\msbf{A}}{\infty}$ in any orthogonality category $\T{J}{\cC}$ 
using the equaliser and the limit actions
on $\cC$-homset (Propositions \ref{PropequalnT} and \ref{proplimT}).
It is necessary to impose certain conditions on $p_{\infty, n}$
and on $\cC$-morphisms in order to guarantee the 
distributivity of the monoidal product over the limit L$_{\msbf{A}}$
(in Proposition \ref{condist})
as well as that over the equalisers E$_{\msbf{A}}$ (
at \ref{conditiontnteneq}).

\bigskip




\begin{prop}[equaliser $\oble{\msbf{A}}{n}$ for
E$_{\msbf{A}}$ in $\T{J}{\cC}$] \label{PropequalnT}
In $\T{J}{\cC}$ for every object $\msbf{A}=(A, \msbf{A}_p)$, 
the following object $\oble{\msbf{A}}{n}$ with $\eq_A$ becomes the equaliser of
the $n!$-symmetries of $(\msbf{A} \& \msbf{I})^{\otimes n}$:
\begin{align}
& \oble{\msbf{A}}{n} = (\oble{A}{n}, (\oble{\msbf{A}}{n})_p) \quad
 \mbox{with} \nonumber  \\
(\oble{\msbf{A}}{n})_p
& := 
\ccrc{ 
\left\{ \eq \backslash h \mid h \in 
(\gobwt{\msbf{A}}{n}
)_p
\, \,  \mbox{equalises the $n!$-symmetries of 
$\gobwt{\msbf{A}}{n}$} 
\right\} } \label{equalnT}
\end{align}
\end{prop}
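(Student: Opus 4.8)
The plan is to verify that the proposed object $\oble{\msbf{A}}{n}$ is a genuine object of $\T{J}{\cC}$ and then check the equaliser universal property, transporting everything along the forgetful functor $\abs{\,}$ to $\cC$, where the statement of (E$_A$) together with (distribution of $\otimes$ over E$_A$) from Definition \ref{MTTfex} already holds.

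\textbf{Step 1: $\oble{\msbf{A}}{n}$ is well-defined.} The second component $(\oble{\msbf{A}}{n})_p$ is defined as $\ccrc{(~)}$ of a set of points $\eq\backslash h : I \longrightarrow \oble{A}{n}$, so it is automatically closed under double orthogonality; hence $\oble{\msbf{A}}{n}$ is a legitimate object of $\T{J}{\cC}$ in the alternative presentation licensed by Definition \ref{adjort}'s reciprocity. I should note that each $\eq\backslash h$ makes sense: since $h$ equalises the $n!$-symmetries of $\msbf{A}^{\otimes n}_\bullet = \obwt{A}{n}$ at the level of $\cC$, the universal property (E$_A$) produces $\eq\backslash h$ with $\eq\Comp(\eq\backslash h)=h$.

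\textbf{Step 2: $\eq_A$ is a morphism $\oble{\msbf{A}}{n}\longrightarrow (\msbf{A}\&\msbf{I})^{\otimes n}$ in $\T{J}{\cC}$.} By the strengthened criterion ($\overline{\mathrm p}$), it suffices to show $\eq_A\Comp(\eq\backslash h)=h \in ((\msbf{A}\&\msbf{I})^{\otimes n})_p$ for every generator $\eq\backslash h$ of $(\oble{\msbf{A}}{n})_p$ — which is immediate from the choice of the generating set. Then the $n!$-symmetries of $(\msbf{A}\&\msbf{I})^{\otimes n}$ are $\T{J}{\cC}$-morphisms (they are already morphisms after applying $\abs{\,}$, being $\cC$-isomorphisms, so the (p)/(cp) conditions follow from the (isomorphism) clause of the orthogonality), and each of them composed with $\eq_A$ gives the same arrow, so $\eq_A$ equalises them in $\T{J}{\cC}$.

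\textbf{Step 3: Universal property.} Let $\msbf{g}=(g\colon \abs{\msbf{X}}\longrightarrow\obwt{A}{n})$ be a $\T{J}{\cC}$-morphism from an arbitrary $\msbf{X}$ equalising the $n!$-symmetries in $\T{J}{\cC}$. Applying $\abs{\,}$, $g$ equalises them in $\cC$, so (E$_A$) yields a unique $\cC$-morphism $\eq\backslash g\colon \abs{\msbf{X}}\longrightarrow\oble{A}{n}$ with $\eq_A\Comp(\eq\backslash g)=g$; uniqueness in $\T{J}{\cC}$ then follows from uniqueness in $\cC$ together with faithfulness of $\abs{\,}$, once we know $\eq\backslash g$ is a $\T{J}{\cC}$-morphism. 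For that, by ($\overline{\mathrm p}$) applied with $U:=\msbf{X}_p$ it is enough to show that for every $u\in\msbf{X}_p$ the composite $(\eq\backslash g)\Comp u$ lies in $(\oble{\msbf{A}}{n})_p$. Set $h:= g\Comp u = \eq_A\Comp(\eq\backslash g)\Comp u$; since $\msbf{g}$ is a morphism, $h\in ((\msbf{A}\&\msbf{I})^{\otimes n})_p$, and $h$ equalises the $n!$-symmetries because $g$ does. Hence $\eq\backslash h$ is one of the generators of $(\oble{\msbf{A}}{n})_p$, and by the universality clause of (E$_A$) ("$\oble{f}{n}=\oble{g}{n}$ whenever the postcompositions agree", i.e. $\eq\backslash h$ is determined by $h$) we get $\eq\backslash h = (\eq\backslash g)\Comp u$, so indeed $(\eq\backslash g)\Comp u \in (\oble{\msbf{A}}{n})_p$.

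\textbf{Main obstacle.} The only delicate point is Step 3: one must be careful that the factoring map $\eq\backslash g$ lands inside the \emph{closure} $(\oble{\msbf{A}}{n})_p=\ccrc{(\cdots)}$ rather than merely inside the generating set; this is handled by observing that each $(\eq\backslash g)\Comp u$ is literally a generator (not just in the closure), using the $\cC$-level cancellation $\eq_A\Comp(\eq\backslash g)\Comp u = h$ and the injectivity of $\eq\backslash(~)$ on arrows equalising the symmetries. The distribution of $\otimes$ over $\mathrm{E}_{\msbf{A}}$ need not be proved here — as remarked in the text preceding the proposition, reciprocity makes it automatic — so I would not include it. Everything else is routine transport along the faithful forgetful functor.
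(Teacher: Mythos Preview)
Your proof is correct and follows essentially the same route as the paper: the paper checks (ii) that $\eq_A$ satisfies the ($\overline{\mathrm p}$) condition, then (i) that for any equalising $h:\msbf{B}\to\msbf{A}_\bullet^{\otimes n}$ and any $b\in\msbf{B}_p$ one has $(\eq_A\backslash h)\Comp b=\eq_A\backslash(h\Comp b)$, which lies among the generators, and (iii) combines these to get the factorisation in $\T{J}{\cC}$---exactly your Steps 2 and 3. Your Step 1 and the remark that the symmetries are $\T{J}{\cC}$-isomorphisms are small additions the paper leaves implicit, but the argument is otherwise the same.
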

\begin{proof}
In the proof, $X_{\bullet} := X \& I$ and 
$X_{\bullet}^{\otimes n}$ is a short for $(X_{\bullet})^{\otimes n}$
either in $\cC$ or
$\T{J}{\cC}$.
By the definition of the morphisms of
the double glueing category, note first: If
a morphism $h$ of co-domain $\msbf{X}^{\otimes n}$
equalises the $n!$-symmetries of $\msbf{X}^{\otimes n}$
in $\T{J}{\cC}$, then $h$ 
does so the $n!$-symmetries of $X^{\otimes n}$ in $\cC$.
The following three conditions need to be checked: \\
\noindent (i) The $\cC$-morphism $\eq_A \backslash h$ resides in $\T{J}{\cC}$
for any $h: \msbf{B} \longrightarrow \msbf{A}_{\bullet}^{\otimes n}$ equalising
the $n!$-symmetries in $\T{J}{\cC}$: For any $b \in \msbf{B}_p$,
$\begin{aligned}
(\eq_A  \backslash h ) \Comp b =
\eq_A \backslash (h \Comp b) 
\end{aligned}$, which belongs to (inside the scope 
	   $\ccrc{}$ of) (\ref{equalnT})  
as $h \Comp b$ belongs to $(\msbf{A}_{\bullet}^{\otimes n})_p$
and equalises the $n!$-symmetries of $A_{\bullet}^{\otimes n}$
by the first note.

\smallskip

\noindent (ii) The $\cC$-morphism $\eq_A$ resides in $\T{J}{\cC}$:
($\bar{\mbox{p}}$) condition holds
directly by the definition (\ref{equalnT}).

\smallskip

\noindent (iii) Any $\T{J}{\cC}$-morphism
$h: \msbf{B} \longrightarrow \msbf{A}_{\bullet}^{\otimes n}$
equalising the $n!$-symmetries factors
	 via $\oble{\msbf{A}}{n}$: \\
By the first note, $h: B \longrightarrow A_{\bullet}^{\otimes n}$
factors via $\oble{A}{n}$ in $\cC$. But by (i), the factorisation
is that for $\T{J}{\cC}$. 
\end{proof}

\begin{rem}[on  Proposition \ref{PropequalnT}] \label{remPropequalnT}
{\em The homset $(\oble{\msbf{A}}{n})_p$
of (\ref{equalnT}) in particular contains the following homset (but not
 vice versa in general) \\
$\begin{aligned}
 \{ 
\xymatrix{I^{\otimes n } \cong  I \ar[r]^{ \epsilon_n}
\ar@/_9pt/[rr]_{ \eq_A \backslash (f \& I)^{\otimes n}}
&  \oble{I}{n} \ar[r]^{\oble{f}{n}} & \oble{A}{n}}
\, \, \mid \, \, I \stackrel{f}{\longrightarrow} A \in \msbf{A}_p
\}, \quad
&
\mbox{where $\epsilon_n$ is 
$\xymatrix{I \cong I^{\otimes n} \ar[rr]^{\eq_I \backslash (I \&
 I)^{\otimes n}} &  & \oble{I}{n}}$.}
 \end{aligned}$
}\end{rem}
\begin{lem} The morphism $p_{n+1,n}$ resides in $\T{J}{\cC}$
so that it is a morphism from $\oble{\msbf{A}}{n+1}$ to
$\oble{\msbf{A}}{n}$.
\end{lem}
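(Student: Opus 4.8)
The plan is to verify the stronger condition $(\overline{\mbox{p}})$ for $p_{n+1,n}$ relative to the explicit generating sets appearing inside the closure $\ccrc{(~)}$ in~(\ref{equalnT}). Write $U := \{\eq\backslash h \mid h\in((\msbf{A}\&\msbf{I})^{\otimes(n+1)})_p\ \text{equalising the }(n+1)!\text{-symmetries of }\obwt{A}{n+1}\}$, so that $(\oble{\msbf{A}}{n+1})_p=\ccrc{U}$, and write $V$ for the analogous set at level $n$, so that $(\oble{\msbf{A}}{n})_p=\ccrc{V}$; since $\ccrc{(~)}$ is inflationary, $U$ and $V$ are genuine subsets of the respective $p$-components, so the sufficiency of $(\overline{\mbox{p}})$ (monotonicity of $\ccrc{(~)}$ together with Lemma~\ref{ccrclem}) applies. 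Thus it is enough to show $p_{n+1,n}\Comp u\in V$ for every $u\in U$.

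Given such a $u=\eq\backslash h$, I would exhibit the required element of $V$ by setting $h':=\iota\Comp(\obwt{A}{n}\otimes\pr_r)\Comp h$, where $\iota:\obwt{A}{n}\otimes I\cong\obwt{A}{n}$ is the coherence isomorphism used in Definition~\ref{MTTfex}. From the defining equation of $p_{n+1,n}$, namely $\eq\Comp p_{n+1,n}=\iota\Comp(\obwt{A}{n}\otimes\pr_r)\Comp\eq$ (the two occurrences of $\eq$ being the equaliser arrows of $\oble{A}{n}$ and $\oble{A}{n+1}$ respectively), together with $h=\eq\Comp u$, one obtains $\eq\Comp(p_{n+1,n}\Comp u)=h'$. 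Hence $h'$ factors through the equaliser arrow of $\oble{A}{n}$, so it equalises the $n!$-symmetries of $\obwt{A}{n}$, and by the uniqueness clause of (E$_A$) this forces $p_{n+1,n}\Comp u=\eq\backslash h'$. This part uses only the universal property of the equalisers, so I expect it to be routine.

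The substantive step is checking $h'\in((\msbf{A}\&\msbf{I})^{\otimes n})_p$; this is the only place where the lifted monoidal structure, and hence the reciprocity hypothesis, really enters, and it is the step I expect to be the main obstacle. Here I would invoke that the right projection $\pr_r:\msbf{A}\&\msbf{I}\longrightarrow\msbf{I}$ resides in $\T{J}{\cC}$ by Proposition~\ref{protjc}, and that, since a reciprocal orthogonality stabilises $\otimes$, the monoidal product of $\cC$ lifts to $\T{J}{\cC}$ with the forgetful functor preserving it (Definition~\ref{stabmon}). Consequently $\obwt{A}{n}\otimes\pr_r$ underlies a $\T{J}{\cC}$-morphism $(\msbf{A}\&\msbf{I})^{\otimes(n+1)}\longrightarrow(\msbf{A}\&\msbf{I})^{\otimes n}\otimes\msbf{I}$, and $\iota$ underlies the $\T{J}{\cC}$-isomorphism $(\msbf{A}\&\msbf{I})^{\otimes n}\otimes\msbf{I}\cong(\msbf{A}\&\msbf{I})^{\otimes n}$. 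Feeding the point $h\in((\msbf{A}\&\msbf{I})^{\otimes(n+1)})_p$ through the point condition (p) of the composite $\iota\Comp(\obwt{A}{n}\otimes\pr_r)$ then places $h'$ in $((\msbf{A}\&\msbf{I})^{\otimes n})_p$, so that $\eq\backslash h'$ is genuinely a generator in $V$.

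Finally I would conclude that $(\overline{\mbox{p}})$ holds for $p_{n+1,n}$ viewed between $\oble{\msbf{A}}{n+1}$ and $\oble{\msbf{A}}{n}$, hence $p_{n+1,n}$ is a morphism $\oble{\msbf{A}}{n+1}\longrightarrow\oble{\msbf{A}}{n}$ in $\T{J}{\cC}$; the copoint condition (cp) needs no separate check, being equivalent to (p) under the reciprocity of the orthogonality.
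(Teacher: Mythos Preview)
Your proposal is correct and follows exactly the same approach as the paper: verify the condition $(\overline{\mbox{p}})$ against the generating sets in~(\ref{equalnT}). The paper's proof is a single sentence---``Obviously $(\overline{\mbox{p}})$ condition holds by the definition~(\ref{equalnT})''---and your argument is a careful unpacking of precisely what that ``obviously'' entails.
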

\begin{proof}{}
By virtue of the condition ($\overline{\mbox{p}}$) shown to hold
by the definition (\ref{equalnT}).
\end{proof}
By this lemma and Proposition \ref{PropequalnT},
$\{ p_{n+1,n} \}_n$ becomes a diagram for L$_{\msbf{A}}$
in $\T{J}{\cC}$. Then 
\begin{prop}
[limit $\oble{\msbf{A}}{\infty}$ for
L$_{\msbf{A}}$ in $\T{J}{\cC}$]
\label{proplimT} ~\\
In $\T{J}{\cC}$ for every object $\msbf{A}=(A, \msbf{A}_p)$, 
the following $\oble{\msbf{A}}{\infty}$
with $\{ p_{\infty, n}: \oble{\msbf{A}}{\infty} \longrightarrow
 \oble{\msbf{A}}{n} \}_n$ becomes the limit for the sequential diagram
$\{ p_{n+1,n}: \oble{\msbf{A}}{n+1} \longrightarrow \oble{\msbf{A}}{n} \}_n$:
\begin{align}
 & \oble{\msbf{A}}{\infty} := 
\left( \oble{A}{\infty}, 
(\oble{\msbf{A}}{\infty})_p
 \right)
  \quad \mbox{with} \nonumber  \\
&  (\oble{\msbf{A}}{\infty})_p := 
\left\{  \begin{aligned}
 x_{\infty} :  I \longrightarrow \oble{A}{\infty}
\mid \, \{ x_n: \msbf{I} \longrightarrow \oble{\msbf{A}}{n} \}_n \, \, 
\mbox{is a cone to } \\
\mbox{the diagram $\{ p_{n+1,n} \}_n$ in 
$\T{J}{\cC}$}  \}
\end{aligned} \right\} ,  \label{liminf} \\
& \mbox{where $x_\infty$ denotes the mediating $\cC$-morphism for the
 forgetful image of the cone $\{x_n
\}_n$ in $\cC$.}
 \nonumber \end{align}
\end{prop}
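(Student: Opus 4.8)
The plan is to follow the proof of Proposition~\ref{PropequalnT} almost verbatim, with the limiting cone of L$_A$ in the role of the equaliser datum: the underlying $\cC$-data are supplied by the limit L$_A$ in $\cC$, and faithfulness of the forgetful functor $\abs{\,}:\T{J}{\cC}\longrightarrow\cC$ does the rest. The one preliminary I would record is a routine feature of any reciprocal (equivalently, focused) orthogonality: \emph{for every object $\msbf{C}=(C,\msbf{C}_p)$ of $\T{J}{\cC}$ and every $u\in\msbf{C}_p$, the $\cC$-map $u$ is already a $\T{J}{\cC}$-morphism $\msbf{I}\longrightarrow\msbf{C}$.} Indeed, writing $F=\msbf{I}_{cp}$ for the focusing set, for $w\in\msbf{I}_p$ and any $y\in\msbf{C}_{cp}$ one has $\ort{u}{C}{y}$ (since $u\in\msbf{C}_p=\crc{(\msbf{C}_{cp})}$), hence $y\Comp u\in\msbf{I}_{cp}$ by (identity), hence $\ort{w}{I}{y\Comp u}$, hence $(y\Comp u)\Comp w\in\msbf{I}_{cp}=F$; as $y$ was arbitrary this says $u\Comp w\in\crc{(\msbf{C}_{cp})}=\msbf{C}_p$, that is, condition (p) holds for $u$. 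In particular the set appearing inside $\ccrc{(~)}$ in (\ref{liminf}), call it $U$, is exactly the collection of mediating $\cC$-maps $g_\infty$ of those $\cC$-cones $\{g_n\}_n$ over $\{p_{n+1,n}\}_n$ all of whose components satisfy $g_n\in(\oble{\msbf{A}}{n})_p$.

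First I would check that $\oble{\msbf{A}}{\infty}$ with $\{p_{\infty,n}\}_n$ is a cone over the diagram $\{p_{n+1,n}\}_n$ in $\T{J}{\cC}$ (the diagram itself lying in $\T{J}{\cC}$ by the preceding lemma). The cone equations $p_{n+1,n}\Comp p_{\infty,n+1}=p_{\infty,n}$ hold in $\cC$, being part of L$_A$, hence in $\T{J}{\cC}$ by faithfulness of $\abs{\,}$. That each $p_{\infty,n}$ is a morphism of $\T{J}{\cC}$ is then the argument by monotonicity of $\ccrc{(~)}$ together with Lemma~\ref{ccrclem}: for a generator $g_\infty\in U$ one has $p_{\infty,n}\Comp g_\infty=g_n\in(\oble{\msbf{A}}{n})_p$ by the definition of the mediating map, so $p_{\infty,n}\big((\oble{\msbf{A}}{\infty})_p\big)=p_{\infty,n}(\ccrc{U})\subseteq\ccrc{p_{\infty,n}(U)}\subseteq\ccrc{(\oble{\msbf{A}}{n})_p}=(\oble{\msbf{A}}{n})_p$, which is condition (p).

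Then I would prove the universal property. Given $\msbf{B}=(B,\msbf{B}_p)$ carrying a cone $\{g_n:\msbf{B}\longrightarrow\oble{\msbf{A}}{n}\}_n$ in $\T{J}{\cC}$, the limit L$_A$ in $\cC$ yields a unique $\cC$-map $g:B\longrightarrow\oble{A}{\infty}$ with $p_{\infty,n}\Comp g=g_n$ for all $n$; since any mediating $\T{J}{\cC}$-morphism must carry exactly this $\cC$-map, uniqueness is settled and it remains only to verify condition (p) for $g$. Fix $u\in\msbf{B}_p$. By the preliminary fact $u$ is a morphism $\msbf{I}\longrightarrow\msbf{B}$, so each $g_n\Comp u$ is a morphism $\msbf{I}\longrightarrow\oble{\msbf{A}}{n}$, and $p_{n+1,n}\Comp(g_{n+1}\Comp u)=(p_{n+1,n}\Comp g_{n+1})\Comp u=g_n\Comp u$ exhibits $\{g_n\Comp u\}_n$ as one of the $\cC$-cones enumerated by $U$. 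Its $\cC$-mediating map is $g\Comp u$, because $p_{\infty,n}\Comp(g\Comp u)=(p_{\infty,n}\Comp g)\Comp u=g_n\Comp u$ and such a map is unique; hence $g\Comp u\in U\subseteq\ccrc{U}=(\oble{\msbf{A}}{\infty})_p$. Thus $g$ is a $\T{J}{\cC}$-morphism, so $\oble{\msbf{A}}{\infty}$ with $\{p_{\infty,n}\}_n$ is the limit in $\T{J}{\cC}$.

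I do not expect a real obstacle: once the identification of points with morphisms out of $\msbf{I}$ is in hand, the whole argument is a transfer along the faithful forgetful functor, exactly as for the equaliser in Proposition~\ref{PropequalnT}. The single delicate point — and the only place where reciprocity does genuine work here — is reconciling the definition (\ref{liminf}), which generates $(\oble{\msbf{A}}{\infty})_p$ out of cones in $\T{J}{\cC}$, with the universal property, which supplies an arbitrary $\cC$-cone of points of the $\oble{\msbf{A}}{n}$; the preliminary fact is precisely what identifies those two families of cones.
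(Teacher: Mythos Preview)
Your proof is correct and follows essentially the same approach as the paper: show each $p_{\infty,n}$ satisfies condition (p) directly from the definition of $(\oble{\msbf{A}}{\infty})_p$, then for the universal property observe that for any cone $\{\tau_n\}_n$ and any point $c\in\msbf{C}_p$ the family $\{\tau_n\Comp c\}_n$ is itself one of the generating cones, so that $\tau_\infty\Comp c$ lies in the generating set $U$. Your version is more explicit in one respect --- you spell out the preliminary fact that every $u\in\msbf{C}_p$ is already a $\T{J}{\cC}$-morphism $\msbf{I}\to\msbf{C}$, which the paper uses tacitly (or equivalently bypasses via the alternative description of $(\oble{\msbf{A}}{\infty})_p$ recorded just after the proof).
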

See the following diagram how a generator $x_\infty$
belonging to (\ref{liminf})
arises as the limit of
$\{ x_n \}_{n \in \mathbb{N}}$ forgetting in $\cC$.
$$
\xymatrix@R=8pt
@C=20pt
{ \cdots
& \ar[l]
\oble{A}{n} &  \ar[l]_{p_{n+1,n}} \oble{A}{n+1}
 \cdots       &  \cdots  & \oble{A}{\infty}
\ar@/_20pt/[lll]_(.5){p_{\infty, n}} \\
&  &  
      &   &  I \ar[u]_{x_\infty} \ar@/^9pt/[ulll]_{x_n \in 
(\oble{\msbf{A}}{n})_p}
}
$$
The diagram describes the arrow $x_n \in 
(\oble{\msbf{A}}{n})_p$ because
$x_n: \msbf{I} \longrightarrow \oble{\msbf{A}}{n}$.

\begin{proof}
First, the following two conditions need to be checked:

\noindent (i)
$p_{\infty, n}$ resides in $\T{J}{\cC}$: Direct by the definition (\ref{liminf})
.

\noindent (ii)
Any mediating morphism $\tau_\infty$ in $\cC$ resides in $\T{J}{\cC}$: \\
Let $\{ \tau_n : \msbf{C} \longrightarrow
\oble{\msbf{A}}{n} \}_n$ be any cone to
the diagram $\{p_{n+1,n}  \}_n$ in $\T{J}{\cC}$. Then $\cC$ has the mediating
	$\tau_\infty: C \longrightarrow \oble{A}{\infty}$ for 
the forgetful image of the cone in $\cC$.
Then $\tau_\infty \Comp c \in (\oble{A}{\infty})_p$ needs to be shown
for any $c \in  \msbf{C}_p$. For this, it suffices to show (ii-i)
$\{ \tau_n \Comp c: \msbf{I} \longrightarrow \oble{\msbf{A}}{n} \}$ is a cone to the diagram $\{ p_{n, n+1} \}$ in $\T{J}{\cC}$, and 
(ii-ii) $\tau_\infty \Comp c$ is the $\cC$-mediating to the forgetful image of the cone. (ii-i) is direct as $c: \msbf{I} \longrightarrow \msbf{C}$
and (ii-ii) holds as $\tau_n \Comp c = (p_{\infty, n} \Comp \tau_\infty) \Comp c =
p_{\infty, n} \Comp (\tau_\infty \Comp c)$.

Second, (\ref{liminf}) is shown to be closed under the double orthogonal.
For this, observe 
$p_{\infty, n} \Comp (\ref{liminf}) \subset \oble{\msbf{A}}{n}_p,$
which implies by the reciprocity
$\crc{(\ref{liminf})} \supset \crc{(\oble{\msbf{A}}{n}_p)} \Comp p_{\infty,n}$
for all $n$.
This means $\ort{z}{}{\crc{(\ref{liminf})}}$ implies
$\ort{z}{}{\crc{(\oble{\msbf{A}}{n}_p)} \Comp p_{\infty,n}}$,
then by reciprocity 
$\ort{p_{\infty,n} \Comp z}{}{\crc{((\oble{\msbf{A}}{n})_p)}}$,
thus $p_{\infty,n} \Comp z \in \ccrc{((\oble{\msbf{A}}{n})_p)}=
(\oble{\msbf{A}}{n})_p$ for all $n$. This concludes $z$,
as the mediating
for the cone $\{ p_{\infty,n} \Comp z \}_n$, 
belongs to (\ref{liminf}).
\end{proof}

\begin{rem}[on Proposition \ref{proplimT}] \label{remproplimT}
{\em  The homset $(\oble{\msbf{A}}{\infty})_p$
of (\ref{liminf}) in particular contains the following homset
(but not vice versa in general)
\begin{align*}
 &
\{  
\xymatrix{I  \ar[r]^{\epsilon_\infty}
& \oble{I}{\infty}
 \ar[r]^{\oble{f}{\infty}} 
 & \oble{A}{\infty} }
\, \, \mid \, \, I \stackrel{f}{\longrightarrow} A \in \msbf{A}_p
\},
\end{align*}
in which $\epsilon_\infty$ is the universal $\cC$-morphism for the cone
$\{ 
\xymatrix{I \cong I^{\otimes n} \ar[r]^{\epsilon_n}
&   \oble{I}{n}} \}_{n \in \mathbb{N}} $
on the limit L$_I$. \\
The remark holds because
$\oble{f}{\infty} \Comp \,  \oble{\epsilon}{\infty}$
is the universal morphism of the cone
$\{  
\xymatrix{I  \ar[r]^{\epsilon_n}
& \oble{I}{n}
 \ar[r]^{\oble{f}{n}} 
 & \oble{A}{n} }
\, \, \}_n$,
whose each member belongs to $(\oble{\msbf{A}}{n})_p$
by Remark \ref{remPropequalnT}.
}\end{rem}

\bigskip

In $\T{J}{\cC}$, neither distributivity of
the monoidal product over the equaliser $\oble{\msbf{A}}{n}$
nor over the limit $\oble{\msbf{A}}{\infty}$ are retained in general.
Hence we need to augment the following respective conditions
in terms of the orthogonality and the monoidal product:

Recall the distribution of $\otimes$ 
over E$_{\msbf{A}}$ in $\T{J}{\cC}$
stipulates the following two (i) $\eq_A \otimes B$ lives in $\T{J}{\cC}$,
and (ii) $(\eq_A \otimes B) \backslash g$
lives in $\T{J}{\cC}$ for any $g$ with the codomain 
$\gobwt{\msbf{A}}{n} \otimes \msbf{B}$
equalising the ($n!$-symmetries)$\otimes \msbf{B}$ in $\T{J}{\cC}$. 
The stipulation (i) is tautological as the morphism is checked automatically
to satisfy ($\overline{\mbox{p}}$) condition. \\
On the other hand, (ii) stipulates 
\begin{align*}
\{ v \in (\gobwt{\msbf{A}}{n} \otimes \msbf{B})_p  
\mid
\mbox{
$v$ equalises
 (the $n!$-symmetries) $\otimes \msbf{B}$ }  \} \subset  
(\eq_A \otimes B) \Comp 
(\oble{\msbf{A}}{n}
\otimes \msbf{B})_p   
\end{align*}
This is equivalent to the following by $v = \eq \otimes B 
\Comp ((\eq \otimes B)  \backslash v)$
and by the reciprocity on the premise: 
\begin{align*}
\ort{(\eq \otimes B)  \backslash v}{}{
\crc{(
(\gobwt{\msbf{A}}{n})_p \otimes \msbf{B}_p
)}
\Comp (\eq \otimes B)
}
\Longrightarrow
\ort{(\eq \otimes B)  \backslash v}{}{
\crc{(
(\oble{\msbf{A}}{n})_p \otimes \msbf{B}_p
)}}
\end{align*} 
This is rewritten equivalently to
\begin{align}
\crc{(\crceq{(
(\gobwt{\msbf{A}}{n})_p \otimes \msbf{B}_p
)}
\Comp (\eq \otimes B)
)}
\subset  
\ccrc{(
(\oble{\msbf{A}}{n})_p \otimes \msbf{B}_p
)}, \label{conditiontnteneq} 
\end{align}
in which $\crceq{W}$ is the subset of $\crc{W}$
consisting of elements equalising 
the (the $n!$-symmetries) $\otimes \msbf{B}$.
Note the equalisation for $\crc{W}$ is via precomposing
the symmetries $\otimes B$, reciprocally for $v$ via composing
them.

To conclude, (\ref{conditiontnteneq}) is the condition 
for $\otimes$ to distribute over 
E$_{\msbf{A}}$. 
\smallskip

\begin{exam}[of the condition (\ref{conditiontnteneq})]
\label{excondiT}
If $\eq_A$ has left inverse $\eq_A^{\flat}$ in $\cC$ so that  
$\eq_A^{\flat} \Comp\eq_A = \oble{A}{n}$, then the following condition
(\ref{strngconditiontnteneq}) is satisfied,
which is stronger than (\ref{conditiontnteneq}).
\begin{align}
\crc{
(\eq_A \Comp (\oble{\msbf{A}}{n})_p
\otimes \msbf{B}_p)
} \Comp 
(\eq \otimes B)
\supset
\crc{((\oble{\msbf{A}}{n})_p \otimes \msbf{B}_p)}
\label{strngconditiontnteneq} 
\end{align}
\end{exam}
\begin{proof}{}

\noindent (Claim 1):
(\ref{strngconditiontnteneq}) implies (\ref{conditiontnteneq})
in general. 

\smallskip
To prove the claim, the following two subclaims are shown:

\noindent (subclaim 1.1):
For any $U \subset (A \& I)^{\otimes n}$
and $V \subset B$,
$$\crceq{(U \otimes V)}=\crc{(U^\eq \otimes V)},$$
where $U^\eq$ is the subset of $U$  
consisting of elements equalising 
the $n!$-symmetries.

\noindent(proof of 1.1)
By the reciprocity of the orthogonality
$\ort{
(\sigma \otimes B)(g)}{}{U \otimes V} \, \mbox{iff} \, 
\ort{g}{}{
\sigma(U) \otimes V}$ for any $\sigma \in \mathfrak{S}_n$,
for which the left (resp. right) $\sigma$ 
acts for composing (resp. for precomposing).

\noindent (subclaim 1.2):
\begin{align}
\crceq{
(\gobwt{\msbf{A}}{n})_p \otimes \msbf{B}_p
)
}=\crc{((
(
\msbf{A} \& \msbf{I})_p)^{\otimes_\eq n} \otimes \msbf{B}_p)
}, \label{eqclaim2}
\end{align}
where $V^{\otimes_\eq n}$ is a short for $(V^{\otimes n})^\eq$. 

\noindent(proof of 1.2)
By subclaim 1.1 applied to the following stable tensor
as $\msbf{B}_p=\ccrc{\msbf{B}_p}$: \\ 
$\crc{(\gobwt{\msbf{A}}{n})_p \otimes \msbf{B}_p)}
=
\crc{(
\ccrc{(((\msbf{A} \& \msbf{I})_p
)^{\otimes n})}
\otimes \msbf{B}_p
)}
=
\crc{(((\msbf{A} \& \msbf{I})_p
)^{\otimes n}  \otimes \msbf{B}_p)}
$.

\smallskip

\noindent(proof of Claim 1) 
\begin{align*}
\eq_A \Comp (\oble{\msbf{A}}{n})_p 
& \supset
((\msbf{A} \& \msbf{I})_p)^{\otimes_\eq n} \tag*{by (\ref{equalnT})} \\
\crc{(\eq_A \Comp (\oble{\msbf{A}}{n})_p \otimes \msbf{B}_p)}
&  \subset
\crc{
(((\msbf{A} \& \msbf{I})_p)
^{\otimes_\eq n}  \otimes \msbf{B}_p)
} \tag*{by orhogonality} \\
\crc{(
(\oble{\msbf{A}}{n})_p \otimes \msbf{B}_p
)}
& \subset 
\crc{
(((\msbf{A} \& \msbf{I})_p)
^{\otimes_\eq n}  \otimes \msbf{B}_p)
}
\Comp (\eq \otimes B)
\tag*{precomposed with $\eq \otimes B$, then by (\ref{strngconditiontnteneq})}\\
\crc{(
(\oble{\msbf{A}}{n})_p \otimes \msbf{B}_p
)}
& \subset
\crceq{(
(\gobwt{\msbf{A}}{n})_p \otimes \msbf{B}_p
)}
\Comp (\eq \otimes B) \tag*{by (\ref{eqclaim2})},
\end{align*}
which implies (\ref{conditiontnteneq}) by the orthogonality.

\bigskip

\noindent (Claim 2): 
If $\eq_A$ has left inverse $\eq_A^{\flat}$ in $\cC$, then 
(\ref{strngconditiontnteneq})  holds.\\
In the proof 
$\blacklozenge$ abbreviates 
$\eq_A \Comp (\oble{\msbf{A}}{n})_p$.
Any $y$ from RHS is written by 
$y=y' \Comp (\eq_A \otimes B)$ with $y'=y \Comp (\eq_A^{\flat} \otimes B)$
in terms of the left inverse.
We need to prove the following first orthogonality:
$\ort{y'}{}{
\blacklozenge 
\otimes \msbf{B}_p} \quad \mbox{iff} \quad 
 \ort{y}{}{
(\eq_A^{\flat} \otimes B)
\Comp 
( \blacklozenge 
\otimes \msbf{B}_p} )=
(\eq_A^{\flat} \Comp  \blacklozenge 
)
\otimes 
\msbf{B}_p$,
whose second orthogonality holds by the choice $y$ as 
$\eq_A^{\flat} \Comp 
\blacklozenge 
=(\oble{\msbf{A}}{n})_p$.
\end{proof}

\bigskip
On the other hand, for the distributivity over the limit,
\begin{prop}[condition for monoidal product $\otimes$ to distribute over the
  limit in $\T{J}{\cC}$] \label{condist}
The following condition 
in $\T{J}{\cC}$ 
is necessary and sufficient for the distribution of
  $\otimes$ over the limit L$_{\msbf{A}}$. 
\begin{align}
\ccrc{((\oble{\msbf{A}}{\infty})_p \otimes \msbf{B}_p)
}
=
\bigcap_{n \in \msbf{N}}
(p_{\infty, n} \otimes B)^{-1} \Comp 
\ccrc{((\oble{\msbf{A}}{n})_p \otimes \msbf{B}_p)
} \label{condition}
\end{align}
Notation: $f^{-1} \Comp U : = 
\{ x : I \longrightarrow X \mid f \Comp x \in U  \}$
for a morphism $f: X \longrightarrow Y$ in $\cC$
and a homset $U \subseteq \cC(I, Y)$.

The condition when
$\msbf{B}=\msbf{I}$ is automatically valid for any
$\T{J}{\cC}$. But not necessarily so for a general object $\msbf{B}$.
\end{prop}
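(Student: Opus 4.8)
The statement is an iff between the computed point-set $\ccrc{((\oble{\msbf{A}}{\infty})_p \otimes \msbf{B}_p)}$ being the limit's $p$-component and a concrete equation. The plan is to recall what the limit $\oble{\msbf{A}}{\infty}\otimes\msbf{B}$ of L$_{\msbf{A}}\otimes\msbf{B}$ must be in $\T{J}{\cC}$: its underlying $\cC$-object is $\oble{A}{\infty}\otimes B$ (since the forgetful functor preserves limits, using that $\otimes$ distributes over the limit in $\cC$ by (distribution of $\otimes$ over L$_A$) of Definition \ref{MTTfex}), and its $p$-component is forced, by the characterisation of limits in $\T{J}{\cC}$ analogous to Proposition \ref{proplimT}, to be the double-orthogonal closure of the set of $x:I\longrightarrow\oble{A}{\infty}\otimes B$ such that $\{(p_{\infty,n}\otimes B)\Comp x\}_n$ is a cone in $\T{J}{\cC}$ to the diagram $\{p_{n+1,n}\otimes B\}_n$ — equivalently, using the alternative formula noted after Proposition \ref{proplimT}, the set of $x$ with $(p_{\infty,n}\otimes B)\Comp x\in (\oble{\msbf{A}}{n}\otimes\msbf{B})_p$ for all $n$. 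By Proposition \ref{tencT}'s style of computation this target is exactly the right-hand side of (\ref{condition}): $\bigcap_n (p_{\infty,n}\otimes B)^{-1}\Comp\ccrc{((\oble{\msbf{A}}{n})_p\otimes\msbf{B}_p)}$.

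First I would verify that the right-hand side of (\ref{condition}) is always the $p$-component of an object of $\T{J}{\cC}$, i.e. is closed under $\ccrc{(~)}$: each $(p_{\infty,n}\otimes B)^{-1}\Comp \ccrc{(\cdots)}$ is a preimage of a double-orthogonally closed set, hence closed (preimages under a fixed morphism of closed sets are closed — this follows from the Galois connection, the same fact underlying Proposition \ref{proplimT}(ii)), and an intersection of closed sets is closed. Next I would check that this object, call it $\msbf{L}$, together with the maps $p_{\infty,n}\otimes B$, always forms a cone over $\{p_{n+1,n}\otimes B\}_n$ in $\T{J}{\cC}$ and is universal among $\T{J}{\cC}$-cones: universality is inherited from the limit in $\cC$ exactly as in the proof of Proposition \ref{proplimT}(ii), because any $\T{J}{\cC}$-cone $\{\tau_n:\msbf{C}\longrightarrow\oble{\msbf{A}}{n}\otimes\msbf{B}\}$ has its forgetful image mediated by a unique $\tau_\infty:C\longrightarrow\oble{A}{\infty}\otimes B$ in $\cC$, and for $c\in\msbf{C}_p$ one has $(p_{\infty,n}\otimes B)\Comp\tau_\infty\Comp c=\tau_n\Comp c\in(\oble{\msbf{A}}{n}\otimes\msbf{B})_p$ for every $n$, whence $\tau_\infty\Comp c\in\msbf{L}_p$ by definition of $\msbf{L}$. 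So $\msbf{L}$ is *always* the limit of L$_{\msbf{A}}\otimes\msbf{B}$ in $\T{J}{\cC}$.

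Therefore $\otimes$ distributes over L$_{\msbf{A}}$ in $\T{J}{\cC}$ — meaning the candidate object $\oble{\msbf{A}}{\infty}\otimes\msbf{B}$ of Definition \ref{stabmon}, whose $p$-component is $\ccrc{((\oble{\msbf{A}}{\infty})_p\otimes\msbf{B}_p)}$, actually *is* that limit — precisely when its $p$-component coincides with $\msbf{L}_p$, which is the asserted equation (\ref{condition}). For the last two sentences: when $\msbf{B}=\msbf{I}$, both sides reduce to $(\oble{\msbf{A}}{\infty})_p$ — the left because $\msbf{A}\otimes\msbf{I}\cong\msbf{A}$ and $\ccrc{((\oble{\msbf{A}}{\infty})_p\otimes\ccrc{\{\operatorname{Id}_I\}})}=(\oble{\msbf{A}}{\infty})_p$ by the stable-tensor computations, the right because $p_{\infty,n}\otimes I\cong p_{\infty,n}$ and the characterisation of $(\oble{\msbf{A}}{\infty})_p$ displayed just after Proposition \ref{proplimT} is exactly $\bigcap_n p_{\infty,n}^{-1}\Comp(\oble{\msbf{A}}{n})_p$; and the failure for general $\msbf{B}$ is flagged as a genuine gap (to be repaired later by the measure-theoretic uniform convergence theorem).

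The main obstacle I anticipate is not the forward bookkeeping but pinning down the *exact* characterisation of limits in $\T{J}{\cC}$ — i.e. justifying that $\msbf{L}_p$ as written (a double-closed intersection of preimages) is the *correct*, not merely *a*, $p$-component, so that the limit is literally $\msbf{L}$ and the distribution condition becomes the clean equation (\ref{condition}) with no residual closure discrepancy. This requires the monotonicity of $\ccrc{(~)}$, the closure of preimages, and the subtlety that $\ccrc{((\oble{\msbf{A}}{\infty})_p\otimes\msbf{B}_p)}$ is always $\subseteq$ the RHS (the easy inclusion, since $p_{\infty,n}\otimes B$ maps generators of the LHS into $(\oble{\msbf{A}}{n})_p\otimes\msbf{B}_p$ by functoriality of $\oble{(~)}{n}$ and the relation $\oble{f}{n}\Comp p_{\infty,n}=p_{\infty,n}\Comp\oble{f}{\infty}$, then apply Lemma \ref{ccrclem}), so that only the reverse inclusion $\supseteq$ is the substantive content — and it is precisely this $\supseteq$ that fails in general and is the crux addressed in the sequel.
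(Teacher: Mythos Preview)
Your proposal is correct and follows essentially the same approach as the paper: both identify the right-hand side of (\ref{condition}) as the $p$-component that the limit of L$_{\msbf{A}}\otimes\msbf{B}$ must carry in $\T{J}{\cC}$, so that distribution amounts precisely to equality with the tensor's $p$-component on the left. The paper compresses this into three lines---observing that the inclusion $\subseteq$ is tautological, rewriting (\ref{condition}) as the reverse implication (\ref{altcondition}), and reading (\ref{altcondition}) as exactly the statement that the $\cC$-mediating morphism lands in $\T{J}{\cC}$---whereas you spell out explicitly that the right-hand side is already $\ccrc{(~)}$-closed (via reciprocity: $f^{-1}\Comp V=\crc{(\crc{V}\Comp f)}$ for closed $V$) and that the resulting object $\msbf{L}$ is genuinely the limit, which is a helpful expansion of what the paper leaves implicit.
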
 
\begin{proof}
First note that  $\subset$ of (\ref{condition}) is tautological for any $\cC$
by definitions (\ref{equalnT}) and (\ref{liminf}).
Hence the condition (\ref{condition}) 
is equivalent to the following (\ref{altcondition}):
\begin{align}
\forall n \in \msbf{N} \, \, 
(p_{\infty, n} \otimes B) \Comp u \in
\ccrc{((\oble{\msbf{A}}{n})_p \otimes \msbf{B}_p)} \Longrightarrow 
u 
\in \ccrc{((\oble{\msbf{A}}{\infty})_p \otimes \msbf{B}_p)}
\label{altcondition}
 \end{align}
The condition (\ref{altcondition}) says that $x_\infty$
becomes the mediating morphism
for the cone $\{ x_n : \msbf{I} \longrightarrow \oble{\msbf{A}}{n}
\otimes \msbf{B} \}$ in $\T{J}{\cC}$,
hence derives the necessity and sufficiency.
In particular, (\ref{altcondition}) when $\msbf{B}=\msbf{I}$
is Proposition \ref{proplimT}, hence is valid in any $\T{J}{\cC}$.
\end{proof}

The main theorem of this section is obtained 
by Proposition \ref{condist}.
\begin{thm}[Free Exponential in $\T{J}{\cC}$] \label{mainsec1}
Suppose an orthogonality on a monoidal category $\cC$
is focused and satisfies the conditions
(\ref{conditiontnteneq}) and 
(\ref{condition}). Then,
whenever $\cC$ has the free exponential constructed by Definition
\ref{MTTfex}, it is also true for
the orthogonality category $\T{J}{\cC}$ so that
forgetful $\T{J}{\cC} \longrightarrow \cC$ preserves 
the free exponentials. 
\end{thm}

\section{\large Monoidal Category $\TsKer$ 
of s-finite Transition Kernels with Biproducts} \label{sect2}
This section concerns a measure theoretic study,
independent from  Section \ref{sect1}. 
The main sources of the section are Staton \cite{Stat}
and Hamano \cite{HamLEC}.  We also refer to Bauer's book \cite{Bau}
for general measure theory. 
Preliminary notions from measure theory
are also referenced in Panangaden's book  \cite{Prabook}.

\subsection{\normalsize Preliminaries from Measure Theory} \label{PfMT}
This subsection recalls some basic definitions and the monotone convergence theorem
from measure theory, necessary in this paper.

\noindent ({\bf Terminology})
$\mathbb{N}$ denotes the set of non negative integers.
$\mathbb{R}_+$ denotes the set of non negative reals.
$\zeroinf$ denotes $\mathbb{R}_+ \cup \{ \infty \}$.
$\mathfrak{S}_n$ denotes the symmetric group
over $\{1, \ldots ,n\}$.
For a subset $A$, $\chi_A$ denotes the characteristic function of $A$.
$\delta_{x,y}$ is the Kronecker delta.
$\uplus$ denotes the disjoint union of sets.

\smallskip

\begin{defn}[$\sigma$-field $\mc{X}$
and measurable space $\ms{X}$]~\\{\em
A {\em $\sigma$-field} over a set $X$
is a family $\mc{X}$ of subsets of $X$
containing $\emptyset$,
closed under the complement and countable union.
A pair $\ms{X}$ is called a {\em measurable space}.
The members of $\mc{X}$ are called {\em measurable sets}.
The measurable space is often written simply 
by $\mc{X}$, as $X$ is the largest element in $\mc{X}$.
For a measurable set $Y \in  \mc{X}$,
the measurable subspace $\mc{X} \cap Y$,
called the {\em  restriction on} $Y$,
is defined by $\mc{X} \cap Y := \{  A \cap Y \mid A \in \mc{X} \}$.
}
\end{defn}
\smallskip

\begin{defn}[$\sigma(\mc{F})$ and Borel $\sigma$-field $\borel$]
{\em For a family $\mc{F}$ of subsets of $X$,
$\sigma(\mc{F})$ denotes the $\sigma$-field generated by $\mc{F}$,
i.e., the smallest $\sigma$-field containing $\mc{F}$.
When $X$ is $\zeroinf$ and $\mc{F}$ is the family $\mc{O}_{\zeroinf}$
of the open sets in $\zeroinf$ (with the topology whose basis consists of 
the open intervals in $\mathbb{R}_+$
together with $(a, \infty):=\{ x \mid a  < x  \}$
for all $a \in \mathbb{R}_+$), the $\sigma$-field
is denoted by $\borel$, whose members
are called Borel sets over $\zeroinf$.
}\end{defn}
\smallskip
\begin{defn}[measurable function]{\em
For measurable spaces$\ms{X}$ and $\ms{Y}$,
a function $f: X \longrightarrow Y $ is {\em
 $(\mc{X},\mc{Y})$-measurable}
(often just {\em measurable})
if $f^{-1}(B) \in \mc{X}$ whenever $B \in \mc{Y}$.
In this paper,
a measurable function unless otherwise mentioned is  to the Borel set
$\borel$ over
$\zeroinf$ from some measurable
space $\ms{X}$. 
}\end{defn}
\smallskip
\begin{defn}[measure]{\em
A {\em measure} $\mu$ on a measurable space $\ms{X}$
is a function from $\mc{X}$ to $\zeroinf$
satisfying ({\em $\sigma$-additivity}):
If $\{ A_i \in \mc{X} \mid i \in I \}$
is a countable family of pairwise disjoint sets,
then $\mu(\bigcup_{i \in I} A_i)= \sum_{i \in I} \mu(A_i)$.
 }\end{defn}
\smallskip

\begin{notn}[Lebesgue integration (cf. Chapter 3.1 of \cite{Prabook})]{\em
For a measure $\mu$ on $\ms{X}$, and a
$(\mc{X}, \borel)$ -measurable function $f$, 
the Lebesgue integral of $f$  over $X$ wrt the measure $\mu$ is denoted by
$\int_X f(x) \mu(dx)$, which is simply written $\int_X f d \mu$.
It is also written $\int_X d \mu f$.
}\end{notn}

\begin{thm}[monotone convergence] \label{MC}
Let $\mu$ be a measure on a measurable space $\ms{X}$.
For an monotonic sequence $\{ f_n \}$
of $(\mc{X}, \borel)$-measurable functions,
if $f = \sup_n {f_n}$, then $f$ is measurable and $\sup \int_X f_n d \mu =
\int_X f d \mu$.
\end{thm}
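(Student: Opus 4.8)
The plan is to follow the classical route via simple-function approximation, since the statement is precisely the monotone convergence theorem. First I would dispose of the trivial case: if $\{f_n\}$ is non-increasing then $f=\sup_n f_n=f_1$ and, since $\{\int_X f_n\,d\mu\}$ is then also non-increasing with supremum $\int_X f_1\,d\mu$, both sides coincide; so from now on I assume $f_n\le f_{n+1}$ pointwise. Measurability of $f$ is immediate: for every $a\in\mathbb{R}_+$ one has $f^{-1}((a,\infty])=\bigcup_n f_n^{-1}((a,\infty])$, a countable union of members of $\mc{X}$, and such preimages generate $\borel$, so $f$ is $(\mc{X},\borel)$-measurable.

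Next, the easy inequality. Monotonicity gives $f_n\le f$ and $f_n\le f_{n+1}$, hence by monotonicity of the integral $\int_X f_n\,d\mu\le\int_X f_{n+1}\,d\mu\le\int_X f\,d\mu$; the left sequence is non-decreasing in $\zeroinf$, so its supremum exists and $\sup_n\int_X f_n\,d\mu\le\int_X f\,d\mu$.

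For the reverse inequality I would use the standard definition of $\int_X f\,d\mu$ as the supremum of $\int_X s\,d\mu$ over simple measurable $s$ with $0\le s\le f$. Fix such an $s$ and a constant $c\in(0,1)$, and put $A_n:=\{x\in X\mid f_n(x)\ge c\,s(x)\}$. Each $A_n\in\mc{X}$ (it is a level set built from the measurable functions $f_n$ and $s$), the sequence $\{A_n\}$ is non-decreasing because $\{f_n\}$ is, and $\bigcup_n A_n=X$: where $f(x)>0$ we have $c\,s(x)<s(x)\le f(x)=\sup_n f_n(x)$, so $f_n(x)\ge c\,s(x)$ for $n$ large, while where $f(x)=0$ we have $s(x)=0\le f_n(x)$. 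Then
\[
\int_X f_n\,d\mu\;\ge\;\int_{A_n} f_n\,d\mu\;\ge\;c\int_{A_n} s\,d\mu .
\]
Writing $s=\sum_i a_i\chi_{B_i}$ with the $B_i\in\mc{X}$ a finite partition, $\int_{A_n} s\,d\mu=\sum_i a_i\,\mu(B_i\cap A_n)$, and continuity of $\mu$ from below (applied to the increasing sets $B_i\cap A_n$ exhausting $B_i$) gives $\mu(B_i\cap A_n)\to\mu(B_i)$, hence $\int_{A_n} s\,d\mu\to\int_X s\,d\mu$. Passing to the supremum over $n$ yields $\sup_n\int_X f_n\,d\mu\ge c\int_X s\,d\mu$; letting $c\uparrow 1$ and then taking the supremum over all admissible $s$ gives $\sup_n\int_X f_n\,d\mu\ge\int_X f\,d\mu$, which together with the easy inequality closes the proof.

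The only genuinely delicate point is the convergence $\int_{A_n} s\,d\mu\to\int_X s\,d\mu$: it rests on continuity of measure from below, which must itself be derived from $\sigma$-additivity by telescoping $B_i=\bigcup_n(B_i\cap A_n)$ into a disjoint union, and on having already fixed the definition of the integral of a non-negative measurable function as a supremum over simple functions; everything else is bookkeeping with monotonicity. (Since this is the standard monotone convergence theorem, one may alternatively simply cite Bauer \cite{Bau}.)
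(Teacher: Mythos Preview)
Your argument is the classical textbook proof of the monotone convergence theorem and is correct in all essentials; the handling of the degenerate non-increasing case, the measurability via $f^{-1}((a,\infty])=\bigcup_n f_n^{-1}((a,\infty])$, and the core step with simple functions, the constant $c\in(0,1)$, and continuity of measure from below are all carried out properly.

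The paper, however, does not prove this statement at all: it is listed in Section~\ref{PfMT} among the measure-theoretic preliminaries, stated without proof and implicitly deferred to Bauer~\cite{Bau}. So your proposal is not so much a different route as a route where the paper supplies none. Your closing parenthetical remark --- that one may simply cite \cite{Bau} --- is exactly what the paper does; for the purposes of this paper that citation is the intended ``proof''.
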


\smallskip

\begin{defn}[push forward measure $\mu \Comp F^{-1}$
along a measurable function $F$]\label{pfm}{\em
For a measure $\mu$ on $\ms{Y}$ and a measurable function $F$ from
$\ms{Y}$ to $\ms{Y'}$, $\mu'(B'):= \mu (F^{-1}(B'))$  with $B' \in \mc{Y'}$
becomes a measure on
$\ms{Y'}$, called {\em push forward measure} of $\mu$
along $F$. The push forward measure $\mu'$ has the following property
for any measurable function $g$ on $\ms{Y'}$, called
{\em ``variable change of integral along push forward $F$''}:
\begin{align} 
& \textstyle \int_{Y'} g \, d \mu'= \int_{Y} (g \Comp F) \, d \mu. 
& \text{That is,} \quad \quad
\textstyle \int_{Y'} g (y') \,\mu' (d y')= \int_{Y} g (F(y)) \, \mu (dy)
\label{PFM}
\end{align}
 The push forward measure $\mu'$ is often denoted by $\mu \Comp F^{-1}$
 by abuse of notation. \\
Note: The abuse of notation will be resolved 
category theoretically in 
Section \ref{sect.4.1.1}.}
\end{defn}

\subsection{\normalsize Transition Kernels}
\begin{defn}[transition kernel \cite{Bau}]\label{TKer}{\em 
For measurable spaces $(X, \mathcal{X})$ and $(Y, \mathcal{Y})$,
a {\em transition kernel} from
$(X, \mathcal{X})$ to $(Y, \mathcal{Y})$ is a function
$$\begin{aligned}
\kappa  :  X \times \mathcal{Y}
\longrightarrow \zeroinf \quad \mbox{satisfying}
\end{aligned}$$ 
\begin{itemize}
 \item[(i)]
For each $x \in X$, 
the function $\kappa(x, -):
\mathcal{Y}
\longrightarrow \zeroinf$ is a measure on $\ms{Y}$.
\item[(ii)]
For each $B \in \mathcal{Y}$,
 the function $\kappa(-,B): X
\longrightarrow \zeroinf$ is measurable on $\ms{X}$.
\end{itemize}
}\end{defn}

\begin{defn}[operations $\kappa_*$ and $\kappa^*$ of a kernel $\kappa$
on measures and measurable functions] \label{opekernel} ~\\
{\em Let $\kappa: \ms{X} \longrightarrow \ms{Y}$
be a transition kernel.
\begin{itemize}
 \item 
For a measure $\mu$ on $\mathcal{X}$, 
$$\begin{aligned}
(\kappa_* \mu)(B) := \int_{X} \kappa (x,B) \mu( dx)
\end{aligned}$$
is a measure on $\mathcal{Y}$, where $B \in \mathcal{Y}$.
\item
For a measurable function $f$ on $\mathcal{Y}$, 
$$\begin{aligned}
(\kappa^* f)(x) := \int_{Y} f(y) \kappa (x, dy)
\end{aligned}$$
is measurable on $\mathcal{Y}$, where $x \in X$.
In particular, for a characteristic function $\chi_B$
for any $B \in \mathcal{Y}$,
\begin{align*}
(\kappa^* \chi_B)(x) := \kappa (x, B) 
\end{align*}
\end{itemize}
}\end{defn}
\noindent It is direct to check, by the monotone
convergence theorem \ref{MC}, that $\kappa^* f$ is measurable.

\begin{defn}[category $\TKer$ of transition kernels] ~
{\em 
$\TKer$ denotes the category where each object is a measurable space $\ms{X}$
and a morphism is a transition kernel $\kappa(x, B)$
from $\ms{X}$ to $\ms{Y}$.
The composition is the 
{\em convolution} of two kernels
$\kappa(x, B) : \ms{X} \longrightarrow
\ms{Y}$ and $\iota(y, C) : \ms{Y} \longrightarrow
\ms{Z}$:
\begin{align}
\iota \Comp \kappa (x,C) = &
\int_Y \kappa(x,dy) \iota(y,C) \label{compTKer}
\end{align}
$\operatorname{Id}_{\ms{X}}$ is {\em Dirac delta measure} $\delta:
\ms{X} \longrightarrow \ms{X}$, defined by
for $x \in X$ and $A \in \mc{X}$;
$$\begin{aligned}
\text{if $x \in A$ then $\delta(x,A) = 1$, else $\delta(x,A) = 0$.}
\end{aligned}$$  
}\end{defn}

\begin{rem}\label{remmesfun}{\em
Measures and measurable functions
both reside as morphisms in $\TKer$: Let $\ms{I}$ be the singleton
 measurable space with $I=\{ * \}$, hence $\mc{I}=\{ \emptyset, \{ *\}
 \}$, then 
$$\begin{aligned}
&  \TKer(\mc{I}, \mc{X}) &   = \{ \lambda A \in \mc{X}. \, \kappa(*,A) \mid 
\mbox{$\kappa$ is a kernel from $\mc{I}$} \} 
 =  \{ \mbox{ the measures $\mu$ on $\ms{X}$ }\} \\
&  \TKer(\mc{X}, \mc{I}) &  =  \{ \lambda x \in X . \, \kappa(x,\{ * \}) \mid
\mbox{$\kappa$ is a kernel to $\mc{I}$} \} \cup
   \{ \lambda x \in X . \, \kappa(x, \emptyset) =0 :
X \rightarrow \zeroinf \} \\
& & = \{ \mbox{\rm the measurable functions $f$ on $\ms{X}$ to $\borel \!$ }\} 
 \end{aligned}$$
The operations $\kappa_*$ and $\kappa^*$ of Definition 
\ref{opekernel} are respectively categorical precomposition and composition
with $\kappa$ in $\TKer$ so that $\begin{aligned}
\kappa_* \mu = \kappa \Comp \mu
\quad \text{and} \quad \kappa^* f = f \Comp \kappa.
\end{aligned}$
}\end{rem}

In the sequel, when a transition kernel $\kappa$ has the domain
(resp. co-domain) $\mc{I}$ in $\opTsKer$, then $\kappa(\{ * \}, x)$ (resp. $\kappa(X, *)$)
is simply written as a measurable function $\kappa(x)$ (resp.
measure $\kappa(X)$).

\begin{rem}[$\SRel$ \cite{Prabook,Prapaper}]
{\em The category $\SRel$ of {\em stochastic relations}
is a wide subcategory of $\TKer$ strengthening
the conditions of Definition \ref{TKer}
into (i) $\kappa(x, -)$ is a {\em (sub)probability} measure
(i.e., its domain is $[0,1]$)
 and (ii) $\kappa(-, B)$ is {\em bounded} measurable. 
The morphisms in $\SRel$ are called {\em (sub)Markov kernels}.
}\end{rem}
\noindent It is now well known that the composition (\ref{compTKer})
for Markov kernels comes from Giry's probabilistic monad,
generalising the power set monad of the relational composition
\cite{Giry, Prabook}.

\subsection{\normalsize Countable Biproducts in $\TKer$}

\begin{prop}[biproduct $\coprod$] \label{TKerbipro}
$\TKer$ has countable biproducts $\coprod$, which are 
defined for a countable family $ \{  \msi{X}{i} \}_i$ of measurable spaces
as follows:
\begin{align*}
&  \textstyle \coprod \limits_{i} \msi{X}{i} :=(\bigcup \limits_i \{ i
 \} \! \times \! X_i, \,  \biguplus \limits_i
\mathcal{X}_i),  
\end{align*}
where $
\biguplus_i \mathcal{X}_i := \{ \bigcup_i \{ i \} \! \times \! A_i \mid A_i \in
\mathcal{X}_i  \}$
is the $\sigma$-field generated by the measurable sets of each
summands.
\end{prop}
\noindent Consult the proof of Proposition 2.9 of \cite{HamLEC}
for the same assertion.

\smallskip

\noindent The unit of the biproduct is
the null measurable space 
$\mc{T}= (\emptyset, \{ \emptyset \})$.

\subsection{\normalsize Monoidal Product and Countable Biproducts in $\TsKer$ }


\begin{defn}[product of measurable spaces \cite{Bau, Prabook}] \label{prodobj}{\em
The product of measurable spaces $(X_1, \mc{X}_1)$ and $(X_2, \mc{X}_2)$
is the measurable space $(X_1 \times X_2, \mc{X}_1 \otimes \mc{X}_2)$,
where $\mc{X}_1 \otimes \mc{X}_2$ denotes the $\sigma$-field
over the cartesian product $X_1 \times X_2$
generated by {\em measurable rectangles} $A_1 \times A_2$'s 
such that $A_i \in \mc{X}_i$.}
 \end{defn}
In order to accommodate 
measures into the product of measurable spaces,
each measure $\mu_i$ on $(X_i, \mc{X}_i)$ needs to be extended uniquely
to the product.
The condition of {\em $\sigma$-finiteness} ensures this,
yielding the unique product measure over the product measurable space:
\begin{defn}[$\sigma$-finiteness \cite{Bau, Prabook}]{\em
A measure $\mu$ on $\ms{X}$
is {\em $\sigma$-finite} when the set $X$ is written as a countable union of sets of
finite measures. That is, $\exists A_1, A_2, \ldots \in \mc{X}$ such that
$\mu(A_i) < \infty$ and $X = \cup_{i=1}^{\infty} A_i$.
}\end{defn} 

\begin{defn}[product measure \cite{Bau, Prabook}]\label{prodmeasure}{\em
For $\sigma$-finite measures $\mu_i$ on $(X_i, \mc{X}_i)$ with $i=1,2$,
there exists a unique
measure $\mu$ on $(X_1 \times X_2, \mc{X}_1 \otimes \mc{X}_2)$
such that $\mu (A_1 \times A_2)=\mu_1 (A_1) \mu_2 (A_2)$.
$\mu$ is written $\mu_1 \otimes \mu_2$ and called the {\em product measure}
of $\mu_1$ and $\mu_2$.}\end{defn}

The product measure derived from $\sigma$-finite measures
guarantees the basic theorem in measure theory, stating
double integration is treated as iterated integrations.
\begin{thm}[Fubini-Tonelli \cite{Bau, Prabook}] \label{FT}
For $\sigma$-finite measures $\mu_i$ on $(X_i, \mc{X}_i)$ with $i=1,2$
and a 
$(\mc{X}_1 \otimes \mc{X}_2, \borel)$-measurable function $f$,
$$\begin{aligned}
\int_{X_1 \times X_2} \! \! \! \! \! \! f \, d (\mu_1 \otimes \mu_2) =
\int_{X_2} \! \!   d \mu_2  \int_{X_1} \!  f \, d \mu_1 =
\int_{X_1} \! \! d \mu_1  \int_{X_2} \! f \, d \mu_2
\end{aligned}$$
\end{thm}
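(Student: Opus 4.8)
The plan is to follow the textbook route: reduce the double integral to an iterated one first for characteristic functions, then for non-negative measurable simple functions by additivity, and finally for an arbitrary $(\mc{X}_1 \otimes \mc{X}_2, \borel)$-measurable $f$ by the monotone convergence theorem. Since every function in sight takes values in $\zeroinf$, only the non-negative case is relevant and no integrability hypothesis is needed.

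First I would treat $f = \chi_E$ with $E \in \mc{X}_1 \otimes \mc{X}_2$. For $x_1 \in X_1$ put $E_{x_1} := \{ x_2 \in X_2 \mid (x_1, x_2) \in E \}$. The core claim is that $E_{x_1} \in \mc{X}_2$ for every $x_1$, that the map $x_1 \mapsto \mu_2(E_{x_1})$ is $(\mc{X}_1, \borel)$-measurable, and that
\[
(\mu_1 \otimes \mu_2)(E) = \int_{X_1} \mu_2(E_{x_1}) \, \mu_1(dx_1),
\]
together with the symmetric identity got by exchanging the two factors. I would prove this by a Dynkin ($\pi$-$\lambda$) argument: the measurable rectangles $A_1 \times A_2$ form a $\pi$-system generating $\mc{X}_1 \otimes \mc{X}_2$, and for these everything is explicit, since $(A_1 \times A_2)_{x_1}$ is $A_2$ or $\emptyset$ and $\mu_2((A_1 \times A_2)_{x_1}) = \chi_{A_1}(x_1)\,\mu_2(A_2)$; meanwhile the collection of $E$ enjoying the three properties is a $\lambda$-system, \emph{provided} the ambient measures are finite, so that closure under proper differences and under increasing unions holds with no $\infty - \infty$ ambiguity. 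To pass from finite to $\sigma$-finite, fix increasing exhaustions $X_1 = \bigcup_j A_j$ and $X_2 = \bigcup_k B_k$ with $\mu_1(A_j), \mu_2(B_k) < \infty$, run the argument for the restricted finite measures on each $A_j \times B_k$, and take the limit in $j$ and $k$ using countable additivity of $\mu_1 \otimes \mu_2$ and the monotone convergence theorem (Theorem \ref{MC}).

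Granting the set identity, additivity extends it to every non-negative $(\mc{X}_1 \otimes \mc{X}_2, \borel)$-measurable simple function. For a general such $f$ I would choose simple functions $0 \le f_n \uparrow f$ and apply Theorem \ref{MC} three times: once for the inner integral, $x_1 \mapsto \int_{X_2} f_n(x_1, x_2)\,\mu_2(dx_2) \uparrow \int_{X_2} f(x_1, x_2)\,\mu_2(dx_2)$, which simultaneously re-establishes measurability of the inner integral in $x_1$; once for the outer integral over $X_1$; and once on the product space; concluding
\[
\int_{X_1 \times X_2} f \, d(\mu_1 \otimes \mu_2) = \int_{X_1} \mu_1(dx_1) \int_{X_2} f(x_1, x_2)\,\mu_2(dx_2).
\]
Swapping the two coordinates gives the remaining iterated integral, and by the uniqueness clause in Definition \ref{prodmeasure} both computations return the value of the \emph{same} measure $\mu_1 \otimes \mu_2$, first on measurable rectangles and hence on all of $\mc{X}_1 \otimes \mc{X}_2$.

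The main obstacle is the $\lambda$-system step: when $\mu_2$ is merely $\sigma$-finite, the family of $E$ satisfying the section identity need not be closed under complementation, so the $\sigma$-finiteness hypothesis must be invoked precisely here to localise to finite measures before Dynkin's lemma applies. Once that reduction is made, the rest is routine bookkeeping with monotone convergence.
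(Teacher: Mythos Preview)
Your argument is the standard textbook proof of Fubini--Tonelli and is correct as outlined. However, the paper does not supply its own proof of this theorem: it is quoted as a background result from the measure-theory references \cite{Bau, Prabook}, and no argument is given in the paper itself. So there is nothing to compare against beyond noting that your proposal matches the classical route found in those sources.
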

The Fubini-Tonelli Theorem becomes crucial
category theoretically to us for the following two 
(i) dealing with functoriality of morphisms
on the product measurable spaces (cf. Proposition
\ref{FTsfin} below)
(ii) giving a new instance of the orthogonality
using measure theory (cf. Proposition \ref{adjn}
of Section \ref{sect4}).

\smallskip
Since the $\sigma$-finiteness retained in the category $\TKer$
is not closed under the categorical composition,
we define the two classes of transition kernels,
{\em finiteness} and {\em s-finiteness}, 
respectively by tightening and loosening the $\sigma$-finiteness
so that the both classes are preserved under the composition of $\TKer$.

\begin{defn}[s-finite kernels \cite{Sharpe, Stat}] \label{sfinker}
{\em  
Let $\kappa$ be a transition kernel from $\ms{X}$ to $\ms{Y}$.
\begin{itemize}
 \item[-]
$\kappa$ 
is called {\em finite} when 
$
\sup_{x \in X} \kappa(x, Y) < \infty
$; i.e., the condition says that up to the scalar $0 < a < \infty$ factor
determined by the sup, $\kappa$ is Markovian.
\item[-]
$\kappa$ 
is called {\em s-finite} (i.e., sum of finite) when $\kappa = \sum_{i \in \mathbb{N}} \kappa_i$
where each $\kappa_i$ is a finite kernel from $\ms{X}$ to $\ms{Y}$
and the sum is defined by 
$(\sum_{i \in \mathbb{N}} \kappa_i )(x, B)
:=\sum_{i \in \mathbb{N}} \kappa_i (x, B)$.
This is well-defined because 
any countable sum of kernels from $\ms{X}$ to $\ms{Y}$
becomes a kernel of the same type.
\end{itemize}
}\end{defn}
\noindent In the definition of s-finiteness, 
note that
$(\sum_{i \in \mathbb{N}} \kappa_i)^*
= 
\sum_{i \in \mathbb{N}} \kappa_i^*$ and 
$(\sum_{i \in \mathbb{N}} \kappa_i)_*
= 
\sum_{i \in \mathbb{N}} (\kappa_i)_*$
for the operations of Definition \ref{opekernel}:
That is, the preservation of the operation $(~)^*$ (resp. of $(~)_*$) 
means the commutativity of integral
over countable sum of measures (resp. of measurable functions).
\begin{rem} \label{clopf}
{\em The both classes of the finite kernels and of the s-finite kernels are
closed under the categorical composition of $\TKer$.
This is directly calculated for the finite kernels, to which the
s-finite ones are reduced by virtue of
the note in the above paragraph. We refer 
to the proof of Lemma 3 of \cite{Stat} for the calculation.
}\end{rem}
The original  Fubini-Tonelli (Theorem \ref{FT})
for the $\sigma$-finite measures 
extends to
the s-finite measures:
\begin{prop}[Fubini-Tonelli extending for s-finite measures (cf. 
Proposition 5 of Staton \cite{Stat})]  \label{FTsfin}
Theorem \ref{FT} extends for s-finite measures
$\mu_1$ and $\mu_2$ (with the same $f$).
\end{prop}
\noindent See the proof of Proposition 2.18 for the same proposition.

It is derived
from Proposition \ref{FTsfin} that the s-finite transition kernels form a
monoidal category.
\begin{defn}[monoidal subcategories $\TsKer$ of s-finite kernels] \label{moncat}{\em
$\TsKer$ is a wide subcategory of $\TKer$,
whose morphisms are the {\em s-finite} transition kernels.
$\TsKer$ has a symmetric monoidal product $\otimes$: On objects 
is by Definition \ref{prodmeasure}.
Given morphisms $\kappa_1: \ms{X_{\mbox{\scriptsize $1$}}} \longrightarrow 
\ms{Y_{\mbox{\scriptsize $1$}}}$
and
$\kappa_2: \ms{X_{\mbox{\scriptsize $2$}}} \longrightarrow 
\ms{Y_{\mbox{\scriptsize $2$}}}$,
their product is defined explicitly: 
\begin{align*}
(\kappa_1 \otimes \kappa_2) ((x_1,x_2), C) :=
\! \int_{Y_1}  \! \!  \kappa_1 (x, dy_1) 
\! \int_{Y_2}  \! \!  \kappa_2 (x, dy_2) 
\, \chi_{C} ((y_1,y_2))
\end{align*}
Alternatively, thanks to Fubini-Tonelli (Proposition \ref{FTsfin}),
the product is implicitly defined as 
the unique transition kernel
$\kappa_2 \otimes \kappa_2: (X_1 \times X_1, \mc{X}_1 \otimes \mc{X}_2)
\longrightarrow (Y_1 \times Y_2, \mc{Y}_1 \otimes \mc{Y}_2)$
satisfying the following for any rectangle $B_1 \times B_2$
with $B_i \in \mc{X}_i$:  
\begin{align*}
(\kappa_1 \otimes \kappa_2) ((x_1,x_2), B_1 \times B_2)=
\kappa_1 (x_1,B_1) \kappa_2 (x_2,B_2) \quad
\end{align*}
}\end{defn}
\noindent The unit of the monoidal product is the singleton measurable
space $\ms{I}$.

\begin{prop}[The biproducts] \label{retbip}~\\
$\TsKer$ has countable biproducts
which are those in $\TKer$ residing inside the subcategory.
\end{prop}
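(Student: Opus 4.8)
The plan is to observe that $\TsKer$ is a \emph{wide} subcategory of $\TKer$, so it shares the objects of $\TKer$; hence the object $\coprod_i\msi{X}{i}$ constructed in Proposition~\ref{TKerbipro} already belongs to $\TsKer$, and the whole content of the proposition is that its structure maps and universal properties descend. Concretely it suffices to check: (a) the injections $\inj_j$ and projections $\pr_i$ are s-finite; (b) the mediating morphisms $\oplus_i f_i$ and $\&_i g_i$ attached to a cocone, resp.\ cone, of $\TsKer$ are s-finite; (c) uniqueness of the mediating morphism, the biproduct equations, and the zero object are inherited. For the notion of countable biproduct to even make sense one also notes that $\TsKer$ has countable sums of parallel morphisms, since a countable sum of s-finite kernels of a fixed type is again s-finite (re-index $\mathbb{N}\times\mathbb{N}\cong\mathbb{N}$ and use that a countable sum of kernels is a kernel, exactly as in Definition~\ref{sfinker} and the remark following it).

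For (a): by their definitions in the proof of Proposition~\ref{TKerbipro} the maps $\inj_j$ and $\pr_i$ are built from Dirac deltas and Kronecker deltas, hence bounded by $1$, hence finite and a fortiori s-finite. For (b): the crucial point is that the two mediating morphisms can be re-expressed in terms of the maps just shown to be s-finite, using closure of s-finiteness under composition (Remark~\ref{clopf}) and under countable sums. Indeed, unwinding the definitions one checks $(f_i\Comp\pr_i)((j,x),B)=\delta_{i,j}\,f_i(x,B)$, so that $\oplus_i f_i=\sum_i f_i\Comp\pr_i$; each summand is a composite of s-finite kernels, and the countable sum is then s-finite. Dually $(\inj_i\Comp g_i)(b,\bigcup_j\{j\}\times A_j)=g_i(b,A_i)$, whence $\&_i g_i=\sum_i \inj_i\Comp g_i$, again a countable sum of s-finite kernels, hence s-finite.

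For (c): since $\TsKer(\ms{X},\ms{Y})\subseteq\TKer(\ms{X},\ms{Y})$ and the mediating morphism is already the unique one in $\TKer$, it is unique in $\TsKer$; and the biproduct identities $\pr_i\Comp\inj_j=\delta_{i,j}\operatorname{Id}$ and $\sum_i\inj_i\Comp\pr_i=\operatorname{Id}$ hold in $\TsKer$ because they hold in $\TKer$ between the very same kernels. Finally the unit $\mc{T}=(\emptyset,\{\emptyset\})$ is an object of $\TsKer$ and its unique maps in and out are the zero kernels, which are finite, hence s-finite, so $\mc{T}$ remains a zero object. The only step requiring any attention is the closure of s-finiteness under countable sums invoked in (b) and in the enrichment; but this is immediate from the commutation of integral with countable sums noted after Definition~\ref{sfinker}, so no real obstacle arises --- the proposition asserts precisely that nothing new has to be constructed.
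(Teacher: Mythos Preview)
Your proof is correct and follows essentially the same strategy as the paper: the only nontrivial point is that the mediating morphism $\&_i g_i$ for an infinite family of s-finite $g_i$'s remains s-finite, and both arguments reduce this to a countable-over-countable reindexing. The paper carries this out explicitly by decomposing each $g_i=\sum_j g_{(i,j)}$ into finite kernels and setting $h_n=\sum_{i+j=n}\inj_i\Comp g_{(i,j)}$, so that $\&_i g_i=\sum_n h_n$ exhibits the s-finite form directly; you instead package the same reindexing into two black-box lemmas (closure of s-finiteness under composition, Remark~\ref{clopf}, and under countable sums via $\mathbb{N}\times\mathbb{N}\cong\mathbb{N}$) and invoke the identity $\&_i g_i=\sum_i\inj_i\Comp g_i$. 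Your route is a bit more modular and also handles $\oplus_i f_i$ symmetrically, whereas the paper simply asserts the coproduct side ``all works'' without elaboration.
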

\noindent Consult the proof of Proposition 2.20 of \cite{HamLEC}
for the same assertion.

\smallskip

\noindent {\bf Note:} In the sequel, only the product structure
 of $\coprod$ is employed.
  That is Sections \ref{sect3} and \ref{sect4}
 do not treat $\prod$ as biproduct 
but simply as product.
 Accordingly, $\prod$ is written by $\&$.

\section{\large The Free Exponential in $\opTsKer$ } \label{sect3}
This section constructs the free exponential structure
of $\opTsKer$.
The opposite setting is chosen accordingly to
\cite{HamLEC} by virtue of the asymmetry between the first
and the second arguments of the transition kernels.

\noindent {\bf Notation for morphisms in the opposite setting}:
In the opposite $\opTsKer$, a morphism
$\kappa : \ms{X} \longrightarrow
\ms{Y}$ is a transition kernel from $\ms{Y}$ to $\ms{X}$.
Accordingly a morphism $\kappa$ is denoted by $\kappa(A, y)$ meaning that
its left (resp. right) argument determines a measure (resp. a
measurable function). In particular, the Dirac delta
measure which is the identity morphism on $\ms{X}$
is written by $\Dd{x}{A}$.  
Recall  Remark \ref{remmesfun} 
for the opposite category that $\opTsKer(\mc{I}, \mc{X})$ (resp. $\opTsKer(\mc{X}, \mc{I})$)
consists of the measurable functions (resp. the s-finite measures)
on $\ms{X}$. 
The composition of two morphisms 
$\kappa(A, y) : \ms{X} \longrightarrow
\ms{Y}$ and $\iota(B, z) : \ms{Y} \longrightarrow
\ms{Z}$ in $\opTsKer$ is
$\begin{aligned}
\iota \Comp \kappa (A,z) = &
\textstyle \Int{Y}{\kappa(A,y)}{\iota(dy,z)} 
\end{aligned}$.
In what follows, the morphisms of the opposite category are also called kernels.

\smallskip

In our measure-theoretic framework,
the equaliser for the exponential is defined 
slightly more generally,
not only for a rooted object $\mc{X}_\bullet = \mc{X} \& I$ with the monoidal unit $\mc{I}$,
but also for a general object $\mc{X}$ of $\opTsKer$:
\begin{defn}[measurable space 
$\msn{X}{n}$] \label{ms(n)}
{\em
For any measurable space $\ms{X}$
and any natural number $n$,
a measurable space $\msn{X}{n}$
is defined by
\begin{align*}
&  X^{(n)}:= \{ x_1 \cdots x_n \mid x_i \in X \} \quad \mbox{and} \quad  
  \mc{X}^{(n)}  := \{ A \subseteq X^{(n)} \mid \FI (A) \in
	   \mc{X}^{\otimes n}  \}, \\
& \mbox{where $F:
X^{\otimes n} \longrightarrow X^{(n)} \quad (x_1, \ldots , x_n)
\longmapsto x_1 \ldots x_n$} \quad 
 \end{align*}
The members $x_1 \cdots x_n$s in $X^{(n)}$
are formal products whose order of factor is irrelevant
\footnote{In other words, each member is a multiset of the size $n$.
Cf. (\ref{ordirmult}) in the final subsection}.
On the other hand, the members of $X^{\otimes n}$ are
ordered sequences, hence the map $F$ forgets the order of factor. \\
Note  $\mc{X}^{(n)}$ is automatically a $\sigma$-field over $X^{(n)}$. 
That is, the measurable space $\msn{X}{n}$ is the projection of
the $n$-th direct product $(X^{\otimes n}, \mc{X}^{\otimes n})$
along the map $F: X^{\otimes n} \longrightarrow X^{(n)}$
forgetting the order.

\noindent {\bf Notation:}
An element $x_1 \cdots x_n \in X^{(n)}$ is abbreviated by $\msbf{x}$
when $n$ is clear from the context, while an element 
$(x_1,  \ldots , x_n) \in X^{ \otimes n}$ is abbreviated by 
$\vec{\msbf{x}}$.
}
 \end{defn}

\begin{prop}[$\mc{X}^{(n)}$ as equaliser] \label{eqTK}
In $\opTsKer$, the object $\mc{X}^{(n)}$ becomes the equaliser of the $n!$-symmetries of
$\, \mc{X}^{\otimes n}$. 
The transition kernel $\eq_{\mc{X}} : \mc{X}^{(n)} \longrightarrow \mc{X}^{\otimes n}$
is specified by
\begin{align}  \label{eqTKf1}
\eq_{\mc{X}} (-, (x_1, \ldots, x_n)) = \delta(-, x_1 \cdots x_n)
\end{align}
For any transition kernel $\kappa$ to $\mc{X}^{\otimes n}$
equalising the $n!$-symmetries, its unique factorisation
$\eq_{\mc{X}} \backslash \kappa$ via $\eq_{\mc{X}}$
is given by
\begin{align}  \label{eqTKf2}
(\eq_{\mc{X}} \backslash \kappa) (-, x_1 \cdots x_n) = \kappa(-, (x_1, \ldots , x_n))
\end{align}
$\eq_{\mc{X}} \backslash \kappa$ is well defined as a set map
(independently of the ordering $x_1 \cdots x_n$) 
because $\kappa(-, (x_1, \ldots , x_n)) =
\kappa(-, (x_{\sigma(1)}, \ldots , x_{\sigma(n)}))$
for all $\sigma \in \mathfrak{S}_n$. 
 \end{prop}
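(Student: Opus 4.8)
The plan is to verify directly that the triple $(\mc{X}^{(n)}, \eq_{\mc{X}}, \eq\backslash(-))$ of Proposition \ref{eqTK} satisfies the universal property of the equaliser of the $n!$-symmetries on $\mc{X}^{\otimes n}$ inside $\opTKer$. Recall that working in $\opTKer$ means that a morphism $\mc{Y}\longrightarrow \mc{Z}$ is a transition kernel from $\ms{Z}$ to $\ms{Y}$, and composition is convolution in the reversed order; I will phrase everything in $\opTKer$ but compute in $\TKer$ where convenient, using (\ref{compTKer}) and the variable-change formula (\ref{PFM}) for push forwards. First I would check that $\eq_{\mc{X}}$ as defined by (\ref{eqTKf1}) is a genuine transition kernel from $\ms{X}^{\otimes n}$ to $\ms{X}^{(n)}$: for fixed $\vec{\msbf{x}}=(x_1,\dots,x_n)$ it is the Dirac measure at the point $x_1\cdots x_n\in X^{(n)}$, and measurability in the first argument follows since, for $A\in\mc{X}^{(n)}$, the set $\{\vec{\msbf{x}}\mid x_1\cdots x_n\in A\}=\FI(A)$ lies in $\mc{X}^{\otimes n}$ by the very definition of $\mc{X}^{(n)}$ as the push-forward $\sigma$-field. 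In fact $\eq_{\mc{X}}$ is precisely the Dirac kernel transported along $F$, so it equals $\delta_{\mc{X}^{(n)}}\Comp F$ in the appropriate sense.

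Next I would show that $\eq_{\mc{X}}$ equalises the $n!$-symmetries. A symmetry $\sigma$ acts on $\mc{X}^{\otimes n}$ as the kernel $(\vec{\msbf{x}},-)\mapsto\delta(-,\sigma\cdot\vec{\msbf{x}})$; composing $\eq_{\mc{X}}$ with $\sigma$ in $\opTKer$ amounts to precomposing $F$ with the coordinate permutation, and since $x_{\sigma(1)}\cdots x_{\sigma(n)}=x_1\cdots x_n$ as unordered formal products, all the composites coincide. Then comes the universal property: given any $\kappa$ from $\mc{X}^{\otimes n}$ in $\opTKer$ — i.e. a kernel into $\ms{X}^{\otimes n}$ — that equalises the $n!$-symmetries, the hypothesis says $\kappa(-,(x_1,\dots,x_n))=\kappa(-,(x_{\sigma(1)},\dots,x_{\sigma(n)}))$ for every $\sigma$, so (\ref{eqTKf2}) is a well-defined function on $X^{(n)}$; I would verify it is a transition kernel into $\ms{X}^{(n)}$ (for a fixed source point it is a measure on $\mc{X}^{(n)}$ because $\kappa$ restricted to the image of $\FI$ determines a measure on the push-forward $\sigma$-field, and measurability in the source argument is inherited from $\kappa$), check the factorisation identity $\eq_{\mc{X}}\Comp(\eq\backslash\kappa)=\kappa$ by a one-line convolution computation against Dirac kernels, and finally check uniqueness: any other factorisation $\theta$ must satisfy $\eq_{\mc{X}}\Comp\theta=\kappa$, and evaluating this against the generating rectangles of $\mc{X}^{\otimes n}$ forces $\theta(-,x_1\cdots x_n)=\kappa(-,(x_1,\dots,x_n))$, since $\eq_{\mc{X}}$ pushes measures forward along $F$ injectively on the $\sigma$-field level.

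The main obstacle I anticipate is the careful bookkeeping around the push-forward $\sigma$-field $\mc{X}^{(n)}$: one must be sure that a countably additive set function defined on the "rectangle-like" traces coming from $\mc{X}^{\otimes n}$ really extends uniquely and consistently to all of $\mc{X}^{(n)}$, and that no information is lost or duplicated when one quotients by the $\mathfrak{S}_n$-action on coordinates. This is exactly where the push-forward-measure formalism of Definition \ref{pfm} and the fact that $\FI(A)\in\mc{X}^{\otimes n}$ for all $A\in\mc{X}^{(n)}$ do the real work, reducing every claim about $\mc{X}^{(n)}$-measurability to the corresponding claim about $\mc{X}^{\otimes n}$; once that translation is made explicit, the equaliser verification becomes a routine Dirac-kernel calculation. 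Everything else — the convolution identities, the symmetry invariance — is formal.
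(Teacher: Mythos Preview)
Your proposal is correct and follows essentially the same route as the paper. The paper's proof is much terser: it consists solely of the single convolution computation
\[
\eq_{\mc{X}}\Comp\tau\,(-,(x_1,\dots,x_n))
=\int_{X^{(n)}}\tau(-,\msbf{y})\,\delta(d\msbf{y},x_1\cdots x_n)
=\tau(-,x_1\cdots x_n),
\]
which simultaneously gives existence (take $\tau=\eq\backslash\kappa$) and uniqueness (any factoring $\tau$ is forced to agree with $\eq\backslash\kappa$). Your additional checks --- that $\eq_{\mc{X}}$ is a bona fide kernel, that it equalises the symmetries, and that $\eq\backslash\kappa$ is a kernel into the push-forward $\sigma$-field --- are left implicit in the paper but are exactly the points a careful write-up should address; the push-forward bookkeeping you flag as the main obstacle is handled precisely as you say, by reducing every $\mc{X}^{(n)}$-measurability question to $\mc{X}^{\otimes n}$ via $\FI$.
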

\begin{proof}{}
($\eq_{\mc{X}} \backslash \kappa$ is a kernel) 
We show measurability of $\eq_{\mc{X}} \backslash \kappa$
for the second argument as it is obviously  a measure for the first argument. For a fixed measurable $A$,
let $\kappa_A$ and
$(\eq_{\mc{X}} \backslash \kappa)_A$
denote respectively the functions
$\kappa(A, (x_1, \ldots , x_n)):
\mc{X}^{\otimes n} \rightarrow \zeroinf$
and
$(\eq_{\mc{X}} \backslash \kappa) (A, x_1 \cdots x_n):
\mc{X}^{(n)}
\rightarrow \zeroinf$.
As the measurable $\kappa_A$ equalises the $n!$-symmetries,
$\kappa_A^{\mbox{-}1}(V)$
is invariant under the permutations so that  
$\sigma(\kappa_A^{\mbox{-}1}(V))=\kappa_A^{\mbox{-}1}(V) \in \mc{X}^{\otimes n}$
for any $\sigma \in \mathfrak{S}_n$
with any measurable
$V \subset \zeroinf$.
This implies measurability of $(\eq_{\mc{X}} \backslash \kappa)_A$
so that
$(\eq_{\mc{X}} \backslash \kappa)_A^{\mbox{-}1}(V) \in \mc{X}^{(n)}$
as
$( \eq_{\mc{X}} \backslash \kappa)_A^{\mbox{-}1}(V)
=F(\kappa_A^{\mbox{-}1}(V))$
and $\FI(F(\kappa_A^{\mbox{-}1}(V)))
= \bigcup_{\sigma \in \mathfrak{S}_n}
\sigma(\kappa_A^{\mbox{-}1}(V))=
\kappa_A^{\mbox{-}1}(V)$.

\smallskip

\noindent (Uniqueness)
Precompose any transition kernel $\tau$ (of the
codomain $\mc{X}^{(n)}$)
 to $\eq_{\mc{X}}$:\\
$\begin{aligned}
&  \eq_{\mc{X}} \Comp \tau (-, (x_1, \ldots, x_n))
 =\textstyle  \int_{X^{(n)}} \tau (-, \msbf{y})
\, \eq ( d \msbf{y}, (x_1, \ldots, x_n))
= \textstyle \int_{X^{(n)}} \tau (-, \msbf{y})
\, \delta(d \msbf{y}, x_1 \cdots x_n) 
 = \tau (-, x_1 \cdots x_n)
\end{aligned}$ \\
Thus $\eq_{\mc{X}} \backslash \kappa$ gives
the unique factorisation of $\kappa$.
\end{proof}

The equaliser of
Proposition \ref{eqTK} 
acts also on morphisms for 
any $\kappa : \mc{Y} \longrightarrow \mc{X}$:
The transition kernel
$\kappa^{(n)} : \mc{Y}^{(n)} \longrightarrow \mc{X}^{(n)}$
is defined by the unique factorisation via $\mc{X}^{(n)}$ of
$\kappa^{\otimes n } \Comp \eq_{\mc{Y}}$ equalising the $n!$-symmetries
of $\mc{X}^{\otimes n}$: See the diagram;
\begin{align}
\xymatrix{
\mc{X}^{(n)}   \ar@/^18pt/[rrr]^{\eq_{\mc{X}}} &
\ar@{-->}[l]^{\kappa^{(n)}}
\mc{Y}^{(n)}
\ar[r]_{\eq_{\mc{Y}}} & \mc{Y}^{\otimes n} \ar[r]_{\kappa^{\otimes n}} &
\mc{X}^{\otimes n}
} \label{eqC}
\end{align} 
Below in Proposition \ref{actnlim}, the morphism $\kappa^{(n)}$
is described explicitly. 
Let us see a special simple example.
\begin{exam}
{\em 
$\pr_l^{(n)}: (\mc{X}_1 \& \mc{X}_2)^{(n)} \longrightarrow
 (\mc{X}_1)^{(n)}$ is 
 \begin{align}
 \pr_l^{(n)} (-, x_1 \cdots x_n) =
\delta (-, (1, x_1) \cdots  (1, x_n)) \label{pleft}
\end{align}
for the left projection $\pr_l: \mc{X}_1 \& \mc{X}_2 \longrightarrow \mc{X}_1$,
which is defined by $\pr_l (-, x)= \delta(-, (1,x))$.

\smallskip

\noindent (\ref{pleft}) is derived as follows using 
$(\pr_l)^{\otimes n} \Comp \eq_{\mc{X}_1 \& \mc{X}_2}=\eq_{\mc{X}_1} \Comp 
(\pr_l)^{(n)}: (\mc{X}_1 \& \mc{X}_2)^{(n)} \longrightarrow (\mc{X}_1)^{\otimes n}$ by (\ref{eqC}): \\
$\begin{aligned}
& (\pr_l)^{\otimes n} \Comp \eq (-, (x_1, \ldots , x_n))
= \textstyle \int_{ (X_1 \uplus X_2)^{\otimes n}} \eq (-, \vec{\msbf{z}})
\, (\pr_l)^{\otimes n} (d \vec{\msbf{z}}, (x_1, \ldots , x_n))
=  \\
& \textstyle \int_{ (X_1 \uplus X_2)^{\otimes n}} \eq (-,  \vec{\msbf{z}})
\, \delta^{\otimes n} (d \vec{\msbf{z}}, ((1, x_1), \ldots , (1, x_n))) 
  =  
\eq (-, ((1, x_1), \ldots , (1, x_n)))
\stackrel{by (\ref{eqTKf1})}{=}\mbox{LHS of (\ref{pleft})}, 
\end{aligned}$ \\
in which $\eq$ denotes $\eq_{\mc{X}_1 \& \mc{X}_2}$.

\smallskip
\noindent On the other hand, \\
$\begin{aligned}
 \eq_{\mc{X}_1} \Comp 
\pr_l^{(n)} (-, (x_1, \ldots , x_n)) & =
\textstyle \int_{X_1^{(n)}} \pr_l^{(n)} ( - , \msbf{x})
\, \eq_{\mc{X}_1} (d \msbf{x}, (x_1, \ldots , x_n)) \\
& 
\stackrel{by (\ref{eqTKf1})}{=}
\textstyle \int_{X_1^{(n)}} \pr_l^{(n)} ( - , \msbf{x})
\, \delta (d \msbf{x}, x_1 \cdots  x_n) = 
\pr_l^{(n)} (-, x_1 \cdots  x_n) = \mbox{RHS of (\ref{pleft})}
\end{aligned}$

}\end{exam}

\bigskip

Proposition \ref{eqTK} directly makes  
the equaliser $\oble{\mc{X}}{n}$ for E$_\mc{X}$
of Definition \ref{MTTfex}  in $\opTsKer$
definable by
\begin{align*}
\oble{\mc{X}}{n} := \msnw{X}{n}
\end{align*}
Then, we observe that the canonical
$p_{n+1,n}: \msnw{X}{n+1} \longrightarrow
\msnw{X}{n}$ is described in $\opTsKer$ by
\begin{align}
p_{n+1,n} (-, z_1 \cdots z_n ) & 
= \delta(-, z_1 \cdots z_n (2,*)) \label{pn1n}
\end{align}
because LHS of (\ref{pn1n}) becomes  
$\eq \Comp p_{n+1,n} (-, (z_1, \ldots , z_n))$ by (\ref{eqTKf2}), but 
$\eq \Comp p_{n+1,n}  =
( \mc{X}_\bullet^{\otimes n} \otimes \pr_r) \Comp \eq$,
whose RHS is calculated as follows with  
$\mc{X}_\bullet$ abbreviating $\mc{X} \& \mc{I}$:
\begin{align*}
& ( \mc{X}_\bullet^{\otimes n}  \otimes \pr_r) \Comp \eq
\, \, (-, (z_1, \ldots , z_n)) \\
& \textstyle \! \!  = \! \! 
\int_{(X \uplus I)^{\otimes (n+1)}}
\eq(-, (\vec{\msbf{y}}, y_{n+1})) \, \, 
( \mc{X}_\bullet^{\otimes n} \otimes \pr_r) (d (\vec{\msbf{y}}, y_{n+1}) , (z_1,
 \ldots , z_n)) \\
& \tag*{where $(\vec{\msbf{y}}, y_{n+1})=
(y_1, \cdots , y_n, y_{n+1})$}  \\
& \textstyle =
\int_{(X \uplus I)^{\otimes n}}
\int_{X \uplus I}
\eq(-, (\vec{\msbf{y}}, y_{n+1}))
\, \, ( \mc{X}_\bullet^{\otimes n} \otimes \pr_r) ((d \vec{\msbf{y}}, dy_{n+1}), (z_1, \ldots
  , z_n)) \tag*{by Fubini-Tonelli} \\
&  \textstyle  =
\int_{(X \uplus I)^{\otimes n}}
\int_{X \uplus I}
\eq(-, (\vec{\msbf{y}}, y_{n+1}))
 \, \, \mc{X}_\bullet^{\otimes n} (d \vec{\msbf{y}}, (z_1, \ldots
  , z_n)) \, \pr_r (dy_{n+1}, *) \\
& \textstyle =
\int_{(X \uplus I)^{\otimes n}}
\eq(-, (\vec{\msbf{y}}, (2,*)))
 \, \, \mc{X}_\bullet^{\otimes n} (d \vec{\msbf{y}}, (z_1, \ldots , z_n)) \\
& = 
\eq(-, (z_1, \cdots , z_n, (2,*))) = 
\mbox{RHS of (\ref{pn1n})} \tag*{by (\ref{eqTKf1})}
 \end{align*}
Iterating the above (\ref{pn1n}) yields
the following description of 
$p_{m, n} := p_{n+1, n} \Comp \cdots \Comp  p_{m-1, m-2} \Comp p_{m, m-1}
:\msnw{X}{m} \longrightarrow
\msnw{X}{n}$
for natural numbers $m > n$:
\begin{align}
p_{m,n}(-, z_1 \cdots z_n ) 
= \delta(-, z_1 \cdots z_n \overbrace{(2, *) \cdots (2, *)}^{m-n} ) \label{pmn}
\end{align}

\bigskip

Intuitively, the transition kernel $p_{n+1,n}$ may be seen to  
forget the rooted element $(2,*)$.
This leads us to define the limit of the sequential diagram of the (rooted
with $\mc{I}$)
equalisers $\oble{\mc{X}}{n}=(\mc{X} \& \mc{I})^{(n)}$
as the countable infinite products
of the (rootless) equalisers $\mc{X}^{(n)}$.

\begin{defn}[measurable space $! \, \mc{X}$]\label{defexob}{\em 
For any measurable space $\mc{X}$,
the following 
measurable space
of the countable infinite products 
of $\mc{X}^{(k)}$s is denoted by
\begin{align*}
! \, \mc{X} := 
 (\underset{k \in \mathbb{N}}{\mathlarger{\uplus}}
 X^{(k)}, 
\underset{k \in \mathbb{N}} {\mathlarger{\&}} \mc{X}^{(k)}) 
\end{align*}}
 \end{defn}

\bigskip

In order to show this provides the limit,
the following measurable function is prepared.

A function $G_{n,
\infty}: \oble{\mc{A}}{n} \longrightarrow \oble{\mc{A}}{\infty} $ is defined by 
\begin{align}
&  G_{n,\infty}: \oble{A}{n} \longrightarrow  \biguplus_{k \in \mathbb{N}} A^{(k)} 
&  (1,a_1) \cdots (1,a_k) (2, *) \cdots (2,*) \longmapsto
(k, a_1 \cdots a_k) \label{Ginfn} 
\end{align}
Note in (\ref{Ginfn}), each element of the underlying set
of $\msnw{A}{n}$ is written
$(1, a_1) \cdots (1, a_{k}) \underbrace{(2, *) \cdots (2,*)}_{n-k}$ with a
 certain $k\leq n$
such that $a_i \in A$ with $i=1, \ldots, k$. 
On the other hand, 
$(k, a_1 \cdots a_{k})$ designates an element 
from the $k$-th component of 
$\underset{n \in \mathbb{N}}{\mathlarger{\uplus}}
 A^{(n)}$ which is the underlying set of 
$\underset{n \in \mathbb{N}}{\mathlarger{\&}}
 \mc{A}^{(n)}$. 
The function $G_{n,\infty}$ is one to one (but not surjective)
and $(\oble{\mc{A}}{n}, \oble{\mc{A}}{\infty})$-
measurable.  For example, $G_{n,\infty}$ makes
(\ref{pmn}) definable deterministically in terms of the Dirac delta.

\begin{exam} For $n < m$, for any $- \in \oble{\mc{A}}{m}$
and  $\msbf{z} \in (A \uplus I)^{(n)}$,  
$$p_{m,n}(-, \msbf{z}) = \delta (G_{m, \infty}(-), G_{n, \infty}(\msbf{z}))$$
\end{exam}
This example may suggest the following definition
$p_{\infty, n}$, for which $m$ tends to $\infty$, causing
$G_{m, \infty}$ to become the identity intuitively.

\begin{thm}[limit L$_{\mc{X}}$ for $\opTsKer$] \label{limTK}
For any object $\mc{X}$ in $\opTsKer$, 
$! \, \mc{X}$ of Definition \ref{defexob}
with the following kernels $\{ p_{\infty, n} \}_n$,
definable from the measurable functions $G_{n,
\infty}$ and Dirac delta becomes the limit of L$_{\mc{X}}$:
\begin{align}
& \oble{\mc{X}}{\infty} = ! \, \mc{X} \quad \mbox{and} \quad p_{\infty, n} 
\big( -, \msbf{z} \big)
:= \delta \big( -,  G_{n, \infty} (\msbf{z}) \big):
\textstyle \underset{k \in \mathbb{N}}{\bigwith} \mc{X}^{(k)}
 \longrightarrow \oble{\mc{X}}{n}   
 \nonumber \\ 
 & \text{
 To be explicit,} \quad    p_{\infty, n} \big( -, (1, x_1) \cdots (1, x_{k}) (2, *) \cdots (2,*)
 \big)
:=  \delta \big( -,  (k, x_1 \cdots x_{k}) \big) \label{pinf} 
\end{align}
\end{thm}

\begin{proof}{}
It is direct from the definition $p_{\infty, n}$
indexed by $n$ provides a cone. 
Then the proof consists of the following two claims.



\noindent (Claim 1 on factorisation of cone) \\
Any cone $\{ \tau_n \}_n$ factors through the cone $\{ p_{\infty, n} \}_n$ by
the following countable infinite product morphism $\tau$
whose codomain is 
$\underset{n \in \mathbb{N}}{\bigwith} \mc{X}^{(n)}$; 
\begin{align}
& \tau := \underset{k \in \mathbb{N}}{\bigwith} \, \,  \big(
\pr_l^{(k)}
 \Comp \tau_k \big), \, \mbox{where $\pr_l^{(k)} : 
\msnw{X}{k} \longrightarrow  \mc{X}^{(k)}$}, \label{factotau}
\end{align}

See the diagram below for the construction of the mediating $\tau$:
$$
\xymatrix@R=15pt
{   \mc{X}^{(n)} & \mc{X}^{(n+1)} \cdots &    \\
\msnw{X}{n} \ar[u]^{\pr_l^{(n)}}  & \msnw{X}{n+1}  \ar[l]_{p_{n+1,n}}
\ar[u]^(.6){\pr_l^{(n+1)}}  \cdots
&   \underset{k \in \mathbb{N}}{\mathlarger{\&}} \mc{X}^{(k)}
\ar[ull]_(.8){\pr_n}   \ar[ul]_{\pr_{n+1}}
\ar@/^9pt/[ll]^(.7){p_{\infty, n}}   \ar@/^10pt/[l]^{p_{\infty n+1}}  
\\   & & \ar@/^18pt/[ull]^{\tau_n}   \ar@/^18pt/[ul]_{\tau_{n+1}}  
\ar@{-->}[u]_(.4){\tau = \underset{k \in \mathbb{N}}{\mathlarger{\&}} \, \,  
\big(
\pr_l^{(k)}
 \Comp \tau_k \big)}
}
$$
\noindent Proof of Claim 1:
By the definition of $p_{\infty, n}$, it needs to prove for any $n$
and for any $\msbf{z} \in (X \uplus I)^{(n)}$, 
\begin{align*}
\tau (-, G_{n, \infty} (\msbf{z}))  = \tau_n (-,\msbf{z}) 
\end{align*}
Each $\msbf{z} =(1, x_1) \cdots (1, x_{k}) (2, *) \cdots
 (2,*) \in (X \uplus I)^{n}$ for certain $k \leq n$.
Then the following starts with LHS and ends with RHS.
\begin{align*}
\tau(-, (k, x_1 \cdots x_k)) 
& =
\pr_l^{(k)}
 \Comp \tau_k (-, x_1 \cdots x_k ) =  \textstyle 
\int_{(X \uplus I)^{(k)}} 
\tau_k (-, \msbf{y}) \, \pr_l^{(k)} (d \msbf{y}, x_1 \cdots x_k) \\
& =
\textstyle 
\int_{(X \uplus I)^{(k)}} 
\tau_k (-, \msbf{y}) \, \delta (d \msbf{y}, (1, x_1) \cdots (1, x_k)) 
\tag*{by (\ref{pleft})} \\
& = 
\tau_k (-, (1, x_1) \cdots (1, x_k)) \\ & 
 = \textstyle 
\int_{(X \uplus I)^{(n)}}  
\tau_n (-, \msbf{y}) 
\, p_{n, k} (d \msbf{y}, 
(1, x_1) \cdots (1, x_k)) \tag*{
as $\tau_k = p_{n, k} \Comp \tau_n$} \\
& =
\textstyle 
\int_{(X \uplus I)^{(n)}}  
\tau_n (-, \msbf{y}) 
\, \delta (d \msbf{y}, 
(1, x_1) \cdots (1, x_k) (2, *) \cdots (2, *)) \tag*{by (\ref{pmn})} \\
& = \tau_n (-, (1, x_1) \cdots (1, x_k) (2, *) \cdots (2, *)) 
\end{align*}

\smallskip

\noindent (Claim 2 on uniqueness of factorisation)
For any $n$, 
\begin{align}
  \pr_l^{(n)} \Comp p_{\infty, n} = \pr_n, \label{lik}
\end{align}
where $\pr_n$ is the $n$-th projection of the product
$\underset{n \in \mathbb{N}}{\bigwith} \mc{X}^{(n)}$.
See again the above diagram.

\noindent Proof of Claim 2:
By the following starting from 
$LHS(-, x_1 \cdots x_n)$ and ending with
$RHS(-, x_1 \cdots x_n )$ 
\begin{align*}
 \textstyle
\int_{{(X \uplus I)^{(n)}}} 
p_{\infty, n} (-, \msbf{y})
\, \pr_l^{(n)} (d \msbf{y}, x_1 \cdots x_n)   
& = \textstyle 
\int_{{(X \uplus I)^{(n)}}} 
p_{\infty, n} (-,  \msbf{y})
\, \delta (d \msbf{y}, (1, x_1) \cdots (1, x_n)) \tag*{by (\ref{pleft})} \\
& = 
p_{\infty, n} (-,  (1, x_1) \cdots (1, x_n))
 =  \delta(-, (n, x_1 \cdots x_n)) \tag*{by (\ref{pinf})} 
\end{align*}

\smallskip

\noindent Claim 2 guarantees uniqueness of the factorisation:
Given any factorisation $\tau'$ such that $p_{\infty, n} \Comp \tau'= \tau_n$
of Clam 1,
composing $\pr_l^{(n)}$ to the equality yields 
$\pr_n \Comp \tau'= \pr_l^{(n)} \Comp  \tau_n$ by Clam 2.
Hence by the universality of the product,
$\tau' =  \underset{n \in \mathbb{N}}{\mathlarger{\&}}
(\pr_n \Comp \tau') = 
\underset{n \in \mathbb{N}}{\mathlarger{\&}}
 (\pr_l^{(n)} \Comp  \tau_n )= 
\tau$.
\end{proof}

The action of the limit of Definition \ref{actmor} 
is described concretely in $\opTsKer$:
\begin{prop}[morphisms $\kappa^{(n)}$ and $\oble{\kappa}{\infty}$ ] \label{actnlim}
Let $\kappa: \mc{Y} \longrightarrow \mc{X}$ in $\opTsKer$. 
\begin{itemize}
 \item[(i)] The morphism $\kappa^{(n)}: \mc{Y}^{(n)} \longrightarrow 
\mc{X}^{(n)}$ of (\ref{eqC}) is described explicitly as follows
for $x_1 \cdots x_n \in X^{(n)}$:
\begin{align*}
\kappa^{(n)} (-, x_1 \cdots x_n):=
    \kappa^{\otimes n} (\FI(-), (x_1, \ldots , x_n))
\end{align*} 
This is well defined independently of any enumeration of
the unordered product.
This in particular by Definition \ref{actmor} stipulates $(\kappa \&
	    \mc{I})^{(n)}=\oble{\kappa}{n}$.

\item[(ii)] 
The unique morphisms $\oble{\kappa}{\infty}: \,  \oble{\mc{Y}}{\infty} 
\longrightarrow \oble{\mc{X}}{\infty}$
is explicitly described as follows:
\begin{align*}
 & \oble{\kappa}{\infty} =   \underset{n \in \mathbb{N}}{\mathlarger{\&}}
\, (\kappa^{(n)} \Comp \pr_n) :
\quad  \oble{\mc{Y}}{\infty}  \longrightarrow 
\underset{n \in \mathbb{N}}{\mathlarger{\&}} \mc{X}^{(n)}  = \oble{\mc{X}}{\infty},
\end{align*}
\noindent in which 
$\xymatrix{\kappa^{(n)} \Comp
 \pr_n:
\oble{\mc{Y}}{\infty} = \underset{k \in \mathbb{N}}{\mathlarger{\&}} \mc{Y}^{(k)}
 \ar[r]^(.7){\pr_n} & \mc{Y}^{(n)} \ar[r]^{\kappa^{(n)}}
& \mc{X}^{(n)}}$.
\end{itemize}
\end{prop}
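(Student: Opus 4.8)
The plan is to establish the two explicit formulas by unwinding the universal properties that define $\kappa^{(n)}$ (from diagram~(\ref{eqC})) and $\oble{\kappa}{\infty}$ (from Definition~\ref{actmor}), and then checking the claimed kernels satisfy the defining equations by direct integral computation using the Dirac-delta formulas (\ref{eqTKf1}), (\ref{eqTKf2}), (\ref{pinf}) together with Fubini--Tonelli (Proposition~\ref{FTsfin}). For part~(i), recall $\kappa^{(n)}$ is the unique factorisation of $\kappa^{\otimes n}\Comp\eq_{\mc Y}$ through $\eq_{\mc X}$, so by (\ref{eqTKf2}) it suffices to compute $\kappa^{\otimes n}\Comp\eq_{\mc Y}$ precomposed appropriately and read off the value at $x_1\cdots x_n$. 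First I would compute, for the candidate $\tau(-,x_1\cdots x_n):=\kappa^{\otimes n}(\FI(-),(x_1,\ldots,x_n))$, the composite $\eq_{\mc X}\Comp\tau$ and verify it equals $\kappa^{\otimes n}\Comp\eq_{\mc Y}$ as a kernel $\mc Y^{(n)}\longrightarrow\mc X^{\otimes n}$; this is the computation already modelled in the $\pr_l^{(n)}$ Example and in the derivation of~(\ref{pn1n}). Well-definedness (independence of the ordering $x_1\cdots x_n$) follows because $\kappa^{\otimes n}$ equalises the $n!$-symmetries, exactly as in Proposition~\ref{eqTK}. The identity $(\kappa\&\mc I)^{(n)}=\oble{\kappa}{n}$ is then immediate from unwinding Definition~\ref{actmor} against $\oble{\mc X}{n}=\msnw{X}{n}$ and the just-established formula, noting $\pr_l,\pr_r$ act as Dirac deltas on the two summands.

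For part~(ii), recall from Definition~\ref{actmor} that $\oble{\kappa}{\infty}$ is the unique mediating morphism for the cone $\{\oble{\kappa}{n}\Comp p_{\infty,n}\}_n$ into the limit $\oble{\mc X}{\infty}=!\,\mc X$. By Theorem~\ref{limTK} the limit is the product $\underset{k}{\bigwith}\,\mc X^{(k)}$, so a morphism into it is determined by its composites with the projections $\pr_n$, and the mediating morphism for a cone $\{\sigma_n\}$ is $\underset{n}{\bigwith}\sigma_n$. Thus I would show that $\pr_n\Comp\big(\underset{k}{\bigwith}(\kappa^{(k)}\Comp\pr_k)\big)=\kappa^{(n)}\Comp\pr_n$, which is immediate from the product structure, and then verify the cone condition $\pr_n\Comp\oble{\kappa}{\infty}=\oble{\kappa}{n}\Comp p_{\infty,n}$ reduces, via the identity (\ref{lik}) that $\pr_l^{(n)}\Comp p_{\infty,n}=\pr_n$ established inside the proof of Theorem~\ref{limTK}, to the equation $\kappa^{(n)}\Comp\pr_n = \oble{\kappa}{n}\Comp p_{\infty,n}$; here one uses part~(i), i.e. $\oble{\kappa}{n}=(\kappa\&\mc I)^{(n)}$, and the explicit form of $p_{\infty,n}$ from (\ref{pinf}) to match both sides by a short Dirac-delta integral computation, again appealing to Fubini--Tonelli to iterate the integrals.

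The main obstacle I anticipate is purely bookkeeping rather than conceptual: keeping the index conventions straight between the \emph{ordered} tuples $\vec{\msbf x}\in X^{\otimes n}$ and the \emph{unordered} formal products $\msbf x\in X^{(n)}$, and correctly tracking the rooted padding $(2,*)\cdots(2,*)$ in the domain objects $\msnw{X}{n}$ when one composes $\oble{\kappa}{n}$ (which acts through $(\kappa\&\mc I)^{(n)}$ and therefore must send the padding entries $(2,*)$ to themselves) with $p_{\infty,n}$ (which strips the padding). Concretely, the delicate step is verifying $\kappa^{(n)}\Comp\pr_n = \oble{\kappa}{n}\Comp p_{\infty,n}$ as kernels $!\,\mc X\longrightarrow\mc X^{(n)}$ — wait, the codomains differ, so in fact the correct cone equation is $\pr_l^{(n)}\Comp\oble{\kappa}{n}\Comp p_{\infty,n}=\kappa^{(n)}\Comp\pr_n$; disentangling which projection is which and checking the $(2,*)$-entries cancel on the nose is where I would be most careful. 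Everything else is a routine application of the Dirac-delta calculus and s-finite Fubini--Tonelli already exercised in Propositions~\ref{eqTK}--\ref{limTK}.
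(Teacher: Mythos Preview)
Your proposal is correct and follows essentially the same route as the paper: for part~(i) the paper computes $\kappa^{\otimes n}\Comp\eq_{\mc Y}(-,(x_1,\ldots,x_n))$ directly and then reads off $\kappa^{(n)}$ via~(\ref{eqTKf2}), and for part~(ii) the key identity you isolate after your self-correction, namely $\pr_l^{(n)}\Comp\oble{\kappa}{n}\Comp p_{\infty,n}=\kappa^{(n)}\Comp\pr_n$, is exactly what the paper proves (feeding it into the explicit mediating-morphism formula~(\ref{factotau})). The only tactical difference is that the paper obtains this identity purely categorically---by functoriality of $(-)^{(n)}$ applied to $\pr_l\Comp(\kappa\cdot\pr_l\;\&\;\mc I\cdot\pr_r)=\kappa\Comp\pr_l$, together with~(\ref{lik})---so the Dirac-delta bookkeeping with the $(2,*)$-padding you anticipate is never actually needed.
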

\begin{proof}
(i)
\begin{align*}
  \kappa^{\otimes n} \Comp \eq_{\mc{Y}} (-, (x_1, \ldots , x_n)) 
& = \textstyle  \int_{Y^{\otimes n}} \eq_{\mc{Y}} (-, (y_1, \ldots , y_n))
\, \kappa^{\otimes n} (d (y_1, \ldots , y_n), (x_1, \ldots , x_n)) 
\tag*{by (\ref{eqTKf1})} \\
& = \textstyle  \int_{Y^{\otimes n}} \delta  (-, y_1 \cdots y_n)
\, \kappa^{\otimes n} (d (y_1, \ldots , y_n), (x_1, \ldots , x_n)) \\
& = \textstyle  \int_{Y^{\otimes n}} \delta  ( \FI(-), (y_1, \ldots , y_n))
\, \kappa^{\otimes n} (d (y_1, \ldots , y_n), (x_1, \ldots , x_n)) \\
& = \kappa^{\otimes n} (\FI (-), (x_1, \ldots , x_n))
\end{align*}
In the 3rd line, 
$\delta  ( \FI(-), (y_1, \ldots , y_n))$
replaces $\delta  (-, y_1 \cdots y_n)$ equivalently.

\bigskip

\noindent (ii)
$\oble{\kappa}{\infty}$ by Definition \ref{actmor} is 
the unique factorisation of the cone $\{
 (\kappa \cdot \pr_l \,  \&  \, \mc{I} \cdot \pr_r)^{(n)} \Comp  p_{\infty, n} \}_n$.
Hence by (\ref{factotau}), $\oble{\kappa}{\infty}$ is
\begin{align*}
& \underset{n \in \mathbb{N}}{\mathlarger{\&}} \, 
\big( \pr_l^{(n)} \Comp
( \kappa \cdot \pr_l \,  \&  \, \mc{I} \cdot \pr_r)^{(n)} \Comp
p_{\infty, n} 
 \big) 
 = \underset{n \in \mathbb{N}}{\mathlarger{\&}} \, 
  (\kappa^{(n)} \Comp \pr_n)
\end{align*}
The equality is by the commutativity of the following diagram,
in which the right square commutes by the functoriality of
$(-)^{(n)}$ and by $\pr_l \Comp (\kappa \cdot \pr_l \,  \&  \, \mc{I} \cdot
 \pr_r) = \kappa \Comp \pr_l$, and the left triangle does
by (\ref{lik} ).

$\xymatrix@R=15pt
{& \mc{Y}^{(n)} \ar[rr]^{\kappa^{(n)}} & &  \mc{X}^{(n)}
\\ ! \mc{Y} \ar[r]^{p_{\infty, n}} \ar[ur]^{\pr_n} &
\msnw{Y}{n} \ar[u]^{\pr_l^{(n)}}
\ar[rr]^{(\kappa \cdot \pr_l \& \mc{I} \cdot \pr_r)^{(n)}}
& & \msnw{X}{n} \ar[u]_{\pr_l^{(n)}}
}$

\end{proof}

\bigskip

In $\opTsKer$, Fubini-Tonelli Theorem guarantees the distribution
properties for the free exponential.

\begin{prop}[distribution of $\otimes$ 
over the equalisers and the limits in $\opTsKer$] \label{tencMS}
For any measurable space $\mc{Z}$, the following holds in
$\opTsKer$.
\begin{enumerate}
 \item[(i)]
The monoidal product distributes over the equaliser E$_{\mc{X}}$ so that 
$ ( \eq_{\mc{X}} \otimes \mc{Z}, \mc{X}^{(n)} \otimes \mc{Z})$
becomes the equaliser of ($n!$-symmetries) $\otimes \mc{Z}$ of
$\mc{X}^{\otimes n} \otimes \mc{Z}$. Thus
for any transition kernel $\kappa$ to $\mc{X}^{\otimes n} \otimes \mc{Z}$
equalising the ($n!$-symmetries) $\otimes \mc{Z}$, its unique factorisation
$(\eq_{\mc{X}} \otimes \mc{Z}) \backslash \kappa$ via $\eq_{\mc{X}} \otimes \mc{Z}$
is given by
\begin{align}
(\eq_{\mc{X}} \otimes \mc{Z}) \backslash \kappa (-, (x_1 \cdots x_n, z)) = \kappa(-, (x_1, \ldots , x_n, z))
\end{align}

\item[(ii)]
The monoidal product distributes over the limit 
\mbox{L}$_{\mc{X}}$.
\end{enumerate}
\end{prop}
\begin{proof}
A direct generalisation of the respective proofs
of Proposition \ref{eqTK} and 
of Theorem \ref{limTK} by 
consistently
replacing $\int_{(-)}$ with double integration  $\int_{(-) \times Z}
=\int_{(-)} \int_{Z}$ by Fubini-Tonelli of Proposition \ref{FTsfin}.

\bigskip

\noindent (i)
See Appendix \ref{proofdist} for the proof.

\smallskip

\noindent (ii) 
The two claims in the proof of Theorem \ref{limTK}
are directly generalised as follows:



\noindent (Claim 1) 
Any cone $\{ \tau_n \}_n$ to the sequential diagram 
$\{ p_{n,n+1} \otimes \mc{Z} \}_n$ 
factors by
the following countable infinite product morphism $\tau$:
\begin{align}
\tau :=\underset{k \in \mathbb{N}}{\mathlarger{\&}} 
(\pr_l^{(k)}  \Comp \tau_k ) : \, \,  \oble{\mc{Y}}{\infty} \longrightarrow
\underset{k \in \mathbb{N}}{\mathlarger{\&}} (\mc{X}^{(k)} \otimes \mc{Z})
\label{genfactotau} 
\end{align} 
See Appendix \ref{proofdist} for the proof of Claim 1.

\bigskip

\noindent (Claim 2)
The claim $(\pr_l^{(n)} \otimes \mc{Z}) \Comp (p_{\infty, n} \otimes \mc{Z}) = \pr_n \otimes \mc{Z}$ for any $n$ is direct from the original one.

\smallskip
Hence, (\ref{genfactotau}) gives the mediating morphism of the cone
via the following isomorphism of its codomain:
\begin{align*}
\underset{k \in \mathbb{N}}{\mathlarger{\&}} (\mc{X}^{(k)} \otimes
 \mc{Z}) \cong 
(\underset{k \in \mathbb{N}}{\mathlarger{\&}} \mc{X}^{(k)}) \otimes
 \mc{Z} & \quad \quad 
(k, (x_1 \cdots x_k, z)) \longmapsto ((k,(x_1 \cdots x_k)), z) 
\end{align*}
\end{proof}

\bigskip

By Propositions \ref{eqTK} and \ref{tencMS} and  
Theorem \ref{limTK};
\begin{thm} \label{ftsect3} 
The monoidal category $\opTsKer$ with biproducts
has the free exponential.
\end{thm}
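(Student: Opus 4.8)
The plan is to verify that the four structural requirements of Melli\`{e}s-Tabareau-Tasson (Definition~\ref{MTTfex}) are all met by the constructions already assembled in this section, namely the equaliser $\oble{\mc{X}}{n}=\msnw{X}{n}$, the limit $\oble{\mc{X}}{\infty}=!\,\mc{X}$ of Definition~\ref{defexob}, and the two distribution properties. First I would recall that $\opTsKer$ is a symmetric monoidal category with finite products: the monoidal product $\otimes$ on objects is the product of measurable spaces of Definition~\ref{moncat} (transported to the opposite category, where it stays a tensor), its unit is $\mc{I}$, and the finite products $\&$ in $\opTsKer$ are the biproducts of Proposition~\ref{retbip} read as products (which, as the \textbf{Note} after that proposition remarks, is how $\prod$ is used henceforth). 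So the ambient hypotheses of Definition~\ref{MTTfex} are in place.

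Then I would assemble the four items one by one. For (E$_{\mc{X}}$), Proposition~\ref{eqTK} gives exactly that $\mc{X}^{(n)}$ with $\eq_{\mc{X}}$ is the equaliser of the $n!$-symmetries on $\mc{X}^{\otimes n}$; applying it to the rooted object $\mc{X}\&\mc{I}$ yields $\oble{\mc{X}}{n}=\msnw{X}{n}=(\mc{X}\&\mc{I})^{(n)}$, which is the required $\oble{(\mc{X}\&\mc{I})}{n}$ — noting here the harmless but deliberate slight generalisation flagged in the text, that the equaliser is built for an arbitrary object, not only a rooted one. For (distribution of $\otimes$ over E$_{\mc{X}}$), the first bullet of Proposition~\ref{tencMS} states precisely that $\mc{X}^{(n)}\otimes\mc{Z}$ is the equaliser of the $n!$-symmetries $\otimes\mc{Z}$ on $\mc{X}^{\otimes n}\otimes\mc{Z}$, hence the monoidal product distributes over E$_{\mc{X}}$. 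For (L$_{\mc{X}}$), Theorem~\ref{limTK} shows $!\,\mc{X}$ together with the cone $\{p_{\infty,n}\}_n$ of (\ref{pinf}) is the limit of the sequential diagram of the $p_{n+1,n}$; and one checks that the $p_{n+1,n}$ used here, given explicitly by (\ref{pn1n}), are indeed the canonical maps of Definition~\ref{MTTfex} — this was verified in the displayed computation preceding Definition~\ref{defexob}, where $(\mc{X}_\bullet^{\otimes n}\otimes\pr_r)\Comp\eq$ is shown to factor through $\oble{\mc{X}}{n}$ as $p_{n+1,n}$. For (distribution of $\otimes$ over L$_{\mc{X}}$), the second bullet of Proposition~\ref{tencMS} gives it directly.

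Having matched all four items, the free exponential $!\,\mc{X}$ of $\opTsKer$ is the one produced by the Melli\`{e}s-Tabareau-Tasson machine, which by their theorem (quoted as Definition~\ref{MTTfex}) is uniquely determined; so the proof is essentially a bookkeeping argument citing Propositions~\ref{eqTK} and~\ref{tencMS} and Theorem~\ref{limTK}. The only place where something genuinely measure-theoretic is doing work — and hence the step I would treat as the main obstacle — is the distribution over the limit, i.e.\ the second assertion of Proposition~\ref{tencMS}: commuting the countable limit (an infinite product of measurable spaces) past the monoidal product $\otimes\mc{Z}$ relies on the isomorphism $\bigwith_{k}(\mc{X}^{(k)}\otimes\mc{Z})\cong(\bigwith_{k}\mc{X}^{(k)})\otimes\mc{Z}$ together with Fubini-Tonelli for s-finite kernels (Proposition~\ref{FTsfin}) to push $\int_{(-)\otimes Z}$ through as $\int_{(-)}\int_{Z}$. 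That isomorphism and the s-finiteness of the resulting mediating kernel are precisely what make $\TsKer$ (rather than the finite or Markov sub-theories) the right setting, so I would make sure the citation to Proposition~\ref{tencMS} is accompanied by a reminder that s-finiteness is what keeps the mediating morphism $\bigwith_k(\pr_l^{(k)}\Comp\tau_k)$ inside the category. With that in hand the theorem follows.
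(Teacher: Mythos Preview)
Your proposal is correct and matches the paper's approach exactly: the paper's proof consists of nothing more than the citation ``By Propositions~\ref{eqTK} and~\ref{tencMS} and Theorem~\ref{limTK}'', which is precisely the bookkeeping you spell out. Your additional commentary on why s-finiteness is the right hypothesis for the distribution over the limit is accurate and useful, though the paper leaves it implicit.
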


\section{\large The Orthogonality Category $\T{\mc{I}}{\opTsKer}$} \label{sect4}
This section starts with presenting
a focused orthogonality in $\opTsKer$
in terms of Lebesgue integral.
This in particular allows a categorical reformulation of
the variable change of integral along push forward.
Accordingly, a general categorical construction of
Section \ref{sect1} is shown to apply to the free exponential of
Section \ref{sect3}
in order to obtain the free exponential in $\T{\mc{I}}{\opTsKer}$,
which is a main goal Corollary \ref{corconhold}
in Section \ref{sect4.2}. 
The orthogonality is shown in Sections
\ref{sect4diseq} and \ref{sect4.2} to satisfy
the conditions formulated in
Section \ref{sect1.3} on distribution of monoidal product
respectively over
the equalisers and the limit.
Section \ref{sect4.3} characterises concretely the equalisers
within $\T{\mc{I}}{\opTsKer}$ and 
Section \ref{sect4.5} follows to characterise their limit.
Section \ref{sect4.4} shows that 
our free exponential 
can be considered as a continuous extension
of the exponential of $\Pcoh$. 

In the sequel, the object $J$ for the orthogonality (Definition \ref{defort}) is the monoidal unit $\mc{I}$.

\subsection{\normalsize Focused Orthogonality between Measures and
Measurable Functions in $\opTsKer$}
\begin{defn}[inner product]\label{inprod}{\em
For a measure $\mu \in \opTKer(\mc{X}, \mc{I})$ and a measurable function $f \in
\opTKer(\mc{I}, \mc{X})$, we define
$$\begin{aligned}
\inpro{f}{\mc{X}}{\mu} := \int_{X} f d\mu 
\end{aligned}$$
}
\end{defn}

\noindent Then the two operators in Definition 
\ref{opekernel} become characterised as follows:
 \begin{prop}[reciprocity between $\kappa^*$ and $\kappa_*$] \label{adjn}
In $\opTKersfin$, for any 
measure $\mu: \mc{X} \longrightarrow \mathcal{I}$,
any measurable function
$f: \mc{I} \longrightarrow \mathcal{Y}$ and 
any transition kernel $\kappa: \mc{Y} 
\longrightarrow \mc{X}$,  
\begin{align*}
& \inpro{f}{\mc{Y}}{\kappa_* \mu }
   = \inpro{\kappa^*  f}{\mc{X}}{\, \mu}
& \mbox{\em Equivalently,} \quad  \inpro{f}{\mc{Y}}{\mu \Comp \kappa}
    = \inpro{\kappa \Comp f}{\mc{X}}{\, \mu}
\end{align*}
\end{prop}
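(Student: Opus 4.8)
The plan is to unfold the definition of the inner product and of the two operators $\kappa_*$ and $\kappa^*$, and to reduce the claimed equality to an application of the Fubini--Tonelli theorem for s-finite measures (Proposition~\ref{FTsfin}). Concretely, I would first recall that by Definition~\ref{opekernel} and Definition~\ref{inprod} we have, on the one hand,
\begin{align*}
\inpro{f}{\mc{Y}}{\kappa_* \mu}
= \int_Y f(y)\, (\kappa_* \mu)(dy)
= \int_Y f(y) \left( \int_X \kappa(x, dy)\, \mu(dx) \right),
\end{align*}
and on the other hand,
\begin{align*}
\inpro{\kappa^* f}{\mc{X}}{\mu}
= \int_X (\kappa^* f)(x)\, \mu(dx)
= \int_X \left( \int_Y f(y)\, \kappa(x, dy) \right) \mu(dx).
\end{align*}
Thus the statement is exactly the assertion that these two iterated integrals agree.

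The second step is to justify swapping the order of integration. The joint integrand here is measured against the kernel $\kappa$ together with $\mu$; more precisely one considers, on the product space $\mc{X} \otimes \mc{Y}$, the measure built from $\mu$ and $\kappa$, and the equality of the iterated integrals is precisely the content of Fubini--Tonelli once s-finiteness is in place. Since we are working in $\opTKersfin$, the kernel $\kappa$ is s-finite and the measure $\mu$ is s-finite, so $\kappa(x,-)$ and $\mu$ each decompose as a countable sum of finite measures; by the remark following Definition~\ref{sfinker} the operations $(-)_*$ and $(-)^*$ commute with these countable sums, which reduces the problem to the finite (indeed, up to scalars, Markovian) case, where the classical Fubini--Tonelli theorem applies to the $\sigma$-finite pieces. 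Proposition~\ref{FTsfin} packages exactly this reduction, so I would invoke it directly rather than redo the bookkeeping.

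Finally, the ``Equivalently'' clause is immediate from Remark~\ref{remmesfun}, which identifies $\kappa_* \mu$ with the categorical composite $\mu \Comp \kappa$ in $\opTKer$ (equivalently $\kappa \Comp \mu$ read in $\TKer$) and $\kappa^* f$ with $f \Comp \kappa$; substituting these identifications turns the first displayed equality into the second, with no further work.

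The main obstacle is purely a matter of care rather than depth: one must make sure the integrands are genuinely nonnegative and measurable on the relevant product $\sigma$-fields so that the s-finite Fubini--Tonelli theorem applies with no integrability hypothesis. Nonnegativity is automatic since everything takes values in $\zeroinf$, and measurability of $x \mapsto \int_Y f(y)\kappa(x,dy)$ is exactly the content of the remark after Definition~\ref{opekernel} (proved via the monotone convergence theorem, Theorem~\ref{MC}); measurability of $y \mapsto \int_X \kappa(x,dy)\mu(dx)$ is the statement that $\kappa_*\mu$ is a measure, also from Definition~\ref{opekernel}. Once these routine measurability facts are cited, the proof is a one-line appeal to Proposition~\ref{FTsfin}.
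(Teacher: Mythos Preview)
Your proposal is correct and follows exactly the same route as the paper: unfold $\inpro{f}{\mc{Y}}{\kappa_*\mu}$ and $\inpro{\kappa^* f}{\mc{X}}{\mu}$ via Definitions~\ref{opekernel} and~\ref{inprod}, then swap the order of integration by the s-finite Fubini--Tonelli (Proposition~\ref{FTsfin}). The paper's proof is the one-line chain of equalities you anticipate at the end; your additional remarks on measurability and the ``Equivalently'' clause are accurate but not needed beyond what the paper records.
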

\begin{proof}
The following starts from LHS and ends with RHS of the assertion,
using Fubini-Tonelli (Proposition \ref{FTsfin}): 
$\begin{aligned}
& \textstyle \int_{Y} f (y) (\kappa_* \mu) (dy) 
=
\int_Y f(y) \int_X \kappa(dy, x) \mu(dx)
 \textstyle = 
\int_X \mu(dx)  \int_Y f(y) \kappa(dy,x) 
=  \int_{X} (\kappa^* f)(x) \mu(dx)
\end{aligned}$
\end{proof}

\begin{defn}[orthogonality in terms of integral] 
For a measurable function $f \in
\opTKersfin(\mc{I}, \mc{X})$ and 
a measure $\mu \in \opTKersfin(\mc{X}, \mc{I})$, the relation $\bot_{\mc{X}}
\subset \opTKersfin(\mc{I},\mc{X}) \times \opTKersfin(\mc{X},\mc{I})$ is defined
\begin{align}
\ort{f}{\mathcal{X}}{\mu} \quad \text{if and only if} \quad 
\inpro{f}{\mc{X}}{\mu} \, \, \leq 1  \label{ortint}
 \end{align}
\end{defn}
\begin{lem}
The relation (\ref{ortint}) gives a focused orthogonality
in $\opTKersfin$. 
\end{lem}
\begin{proof}
By Proposition \ref{adjn}.
\end{proof}

\subsection{\normalsize Push Forward Integral as Reciprocity} \label{sect.4.1.1}
The measure-theoretic  variable variable change along push forward
in Definition \ref{pfm}
is reformulated category theoretically in $\opTsKer$
in terms of the reciprocity of orthogonality.
The category theory first avoids the measure-theoretic
abuse of the notation $\mu \Comp F^{-1}$
for the push forward measure, which measure is obtained in 
$\opTsKer$ by precomposing a certain morphism
to a measure $\mu: \mc{Y} \longrightarrow \mc{I}$.
In this subsection two kinds of variable changes of integrals
over the push forward measures are shown to be 
characterised  in terms of the reciprocity of the categorical morphims
$\eq$ and $p_{\infty, n}$, respectively.
Recall that the two were the main ingredients in previous Section \ref{sect3}
to construct the equaliser and the limit in $\opTsKer$.

\smallskip

First it is direct to describe the push forward along
the measurable order forgetting map $F$ of Definition \ref{ms(n)}
using the categorical morphism $\eq$.
\begin{prop}[variable change along the push forward
$F :  \mc{X}^{\otimes n} \longrightarrow \mc{X}^{(n)}$
as reciprocity of $\eq$]
{\em The variable change property along the push forward $F : \mc{X}^{\otimes n}
\longrightarrow \mc{X}^{(n)}$ for Definition \ref{ms(n)} 
is reformulated in terms of 
the categorical composition and precomposition of
$\eq: \mc{X}^{(n)} \longrightarrow  \mc{X}^{\otimes n}$ in $\opTsKer$ as
follows:
\begin{align}
\int_{X^{(n)}} f \, \, d (\mu \Comp \eq)
= 
\int_{X^{\otimes n}} (\eq \Comp f) \, \, d \mu \label{pfmpcat}
\end{align}
The reformulation is obtained because the push forward measure $\mu \Comp F^{-1}$ (resp. the measurable function $f \Comp
 F$) in (\ref{PFM}) in Definition \ref{pfm} is
$\mu \Comp \eq$ (resp. $\eq \Comp f$) when
any measure $\mu$ (resp. measurable function $f$) is seen as 
a morphism $\mc{X}^{\otimes n} \longrightarrow \mc{I}$
(resp. $ \mc{I} \longrightarrow \mc{X}^{\otimes n}$) in $\opTsKer$. \\
(\ref{pfmpcat}) is obviously a reciprocity of the orthogonality for $\opTsKer$
as the equality is 
$$\inpro{f}{X^{(n)}}{\mu \Comp \eq} = 
\inpro{ \eq \Comp f}{X^{\otimes n}}{\mu}$$

Alternatively putting $f=\eq \backslash g$ with a measurable
 function $g$ on $\mc{X}^{\otimes n}$ equalising the $n!$ symmetries,
\begin{align*}
\int_{X^{(n)}} (\eq \backslash g) \, \, d (\mu \Comp \eq)
= 
\int_{X^{\otimes n}} g \, \, d \mu 
\end{align*}
}\end{prop}

\bigskip

Second, 
the push forward along the measurable function
$G_{n, \infty}$ defined by (\ref{Ginfn} )
is described by a
reciprocity of the categorical morphism  
$p_{\infty, n}$ defined in Theorem \ref{limTK}.

\begin{prop}[variable change along
$G_{n, \infty} \times \mc{B}: \oble{\mc{A}}{n} \otimes \mc{B} \longrightarrow 
\oble{\mc{A}}{\infty} \otimes \mc{B}$
as reciprocity for $p_{\infty, n} \otimes
\mc{B}$] \label{recippB}
For any measurable function
$f : \mc{I} \longrightarrow \oble{\mc{A}}{\infty} \otimes \mc{B}$ 
and any measure $\mu : \oble{\mc{A}}{n} \otimes \mc{B}  \longrightarrow \mc{I}$ in $\opTsKer$, the variable change property along the 
push forward $G_{n,\infty} \times B$ 
\begin{align*}
\textstyle 
\int_{\oble{A}{\infty} \times B} f ( \msbf{y})\, \, (\mu \Comp (G_{n,\infty}
 \times B)^{-1}) (d \msbf{y}) =
\textstyle
\int_{\oble{A}{n} \times B}  f ( (G_{n,\infty} \times B) (\msbf{y}')) \, \,
\mu (d \msbf{y}')
\end{align*}
is the reciprocity in $\opTsKer$  
\begin{align*}
& \int_{\oble{A}{\infty} \times B} f \, d (\mu \Comp (p_{\infty, n}
 \otimes \mc{B})) 
= \int_{\oble{A}{n} \times B}  ( (p_{\infty, n} \otimes \mc{B} ) \Comp f) \, d \mu \end{align*}
\end{prop}
\begin{proof}
The assertion is direct by the following (i) and (ii) respectively 
on composing and on precomposing with $p_{\infty, n} \otimes \mc{B}$: 

\noindent (i)
For any measurable function 
$f : \mc{I} \longrightarrow \oble{\mc{A}}{\infty} \otimes \mc{B}$ in $\opTsKer$,
\begin{align*}
((p_{\infty, n} \otimes \mc{B}) \Comp f) (I, \msbf{y})
= f(I,  (G_{n, \infty} \times B) (\msbf{y}))
\quad \mbox{for $\msbf{y} \in \oble{A}{n} \times B$}
\end{align*}

\noindent (ii) 
For any measure
$\mu : \oble{\mc{A}}{n} \otimes \mc{B}  \longrightarrow \mc{I}$ in $\opTsKer$,
\begin{align*}
\mu \Comp (p_{\infty, n} \otimes \mc{B}) = \mu \Comp (G_{n, \infty}
 \times B)^{-1} 
\end{align*}
That is,  the composition $\mu \Comp (p_{\infty,n} \otimes \mc{B})$ is the push forward
	    measure
of $\mu$ along 
	    $G_{n,\infty} \times B$.

 \noindent (i) is direct by the definition of
$ p_{\infty, n}$ in terms of $G_{n,\infty}$ in Theorem \ref{defexob}.

\noindent (ii) holds by the following whose
third equality is
by variable change along $G_{n, \infty} \times \mc{B}$.
\begin{align*}
\textstyle (\mu \Comp (p_{\infty, n} \otimes \mc{B})
) (-, *) 
& \textstyle  =
\int_{\oble{A}{n} \times B} 
(p_{\infty, n} \otimes \mc{B}) (-, (\msbf{x},y)) \, 
\mu(d (\msbf{x}, y), *)  \\
& 
\textstyle = 
\int_{\oble{A}{n} \times B} \delta (-, (G_{n,  \infty} \times B) (\msbf{x},y)) \, 
\mu(d (\msbf{x},y), *) 
\\ & 
\textstyle = \int_{\oble{A}{\infty} \times B} \delta (-, (\msbf{x}',y)) \, 
\mu( (G_{n,  \infty} \times B)^{-1} (d (\msbf{x}',y)), *) =  
\mu ( (G_{n,  \infty} \times B)^{-1}(-), *)
\end{align*}
\end{proof}

\subsection{\normalsize
Distribution of Monoidal Product over Equalisers in
$\T{\mc{I}}{\opTsKer}$} \label{sect4diseq}
The goal of this subsection is to show the distributivity of
the tensor product over the equalisers E$_{\msbf{A}}$ in
$\T{I}{\cC}$ when $\cC=\opTsKer$. The main technical ingredient 
is Proposition \ref{invexis}. This is preceded by introducing barycenters 
in Definition \ref{defbary}, which will also become crucial in the next
subsection \ref{sect4.3} to characterise the equalisers in the orthogonality
category over our measure theoretic framework.
 
We start with defining that the equaliser 
$A^{(n)}$ with $\cC=\opTsKer$ of Proposition \ref{eqTK}
lifts to that $\msbf{A}^{(n)}$ in $\T{\mc{I}}{\opTsKer}$,
the same as Proposition \ref{PropequalnT}.
\begin{defn}[equaliser $\msbf{A}^{(n)}$ in $\T{\mc{I}}{\opTsKer}$] 
\label{genequalnT}
{\em In $\T{\mc{I}}{\opTsKer}$ for every object $\msbf{A}$, 
the following object $\msbf{A}^{(n)}$ with $\eq$ of  Proposition \ref{eqTK}
becomes the equaliser of
the $n!$-symmetries of $\msbf{A}^{\otimes n}$:
\begin{align}
& \msbf{A}^{(n)} = (A^{(n)}, (\msbf{A}^{(n)})_p), \quad
 \mbox{where} \nonumber  \\
(\msbf{A}^{(n)})_p
& :=\ccrc{ 
\left\{
  \eq \backslash h \mid h \in 
(\msbf{A}^{\otimes n})_p
\, \,  \mbox{equalises the $n!$-} 
\mbox{symmetries  of $\msbf{A}^{\otimes n}$} 
\right\}
} \label{rtleequalnT} 
\end{align}
Obviously the definition is general enough to subsume
$(\oble{\msbf{A}}{n})_p$ of (\ref{equalnT})
in  Proposition \ref{PropequalnT}
when $\msbf{A}$ is instantiated in particular with $\msbf{A} \& \msbf{I}$.
}\end{defn}

\smallskip

\begin{defn}[barycenter $s_n(g)$ as composing an endomorphism $s_n$
on $\mc{A}^{\otimes n}$ ] \label{defbary}
{\em
In the category $\opTsKer$,
the $n$-th barycenter $s_n$ is defined as the 
following endomorphism on $\mc{A}^{\otimes n}$:
\begin{align}
 s_n(-, (a_1, \ldots , a_n))=
\frac{1}{n!} \sum_{\sigma \in \mathfrak{S}_n}
\delta (-, (a_{\sigma(1)}, \ldots , a_{\sigma(n)}))
 \label{formdefbary}
\end{align}
For a measurable function $g: \mc{I} \longrightarrow  \mc{A}^{\otimes
 n}$ in $\opTsKer$, its barycenter
$s_n(g): \mc{I}  \longrightarrow  \mc{A}^{\otimes n}$
is defined to be the categorical composition $s_n \Comp g$ in $\opTsKer$:
 \begin{align*}
s_n (g) &    := s_n \Comp g = 
\frac{1}{n!} \sum_{\sigma \in \mathfrak{S}_n} \sigma (
 g) \tag*{where $\sigma (g) (a_1, \ldots , a_n)
= g (a_{\sigma(1)}, 
\ldots , a_{\sigma(n)})$}  
\end{align*}
In particular when putting $g=f_1 \otimes \cdots \otimes f_n$
of the domain $\mc{I}^{\otimes n} \cong \mc{I}$
with $f_i : \mc{I} \longrightarrow \mc{A}$ for $i=1, \ldots , n$;
\begin{align*}
s_n (f_1 \otimes \cdots \otimes f_n)
=
\frac{1}{n!} \sum_{\sigma \in \mathfrak{S}_n} f_{\sigma(1)} \otimes
 \cdots \otimes f_{\sigma(n)}
\end{align*}
It is direct that $(s_n (g)) (a_1, \ldots , a_n)
=s_n \Comp g \,  (I, (a_1, \ldots , a_n)) =
\frac{1}{n!} \sum_{\sigma \in \mathfrak{S}_n} g(I, (a_{\sigma(1)}, 
\ldots , a_{\sigma(n)}))$. 
}\end{defn}

\begin{rem}{\em 
The barycenters characterise the invariant morphisms: 
$s_n(g)=g$ if and only if $g$ equalises 
the $n!$-symmetries of $\mc{A}^{\otimes n}$.
Moreover, in the orthogonality category, as $s_n$ is an endomorphism on 
$\msbf{A}^{\otimes n}$, 
$s_n(g)$ equalises $n!$-symmetries of $\msbf{A}^{\otimes n}$
for any $g \in (\msbf{A}^{\otimes n})_p$, hence 
$\eq \backslash s_n(g) \in \msbf{A}^{(n)}$.
In the next subsection it is shown in Proposition \ref{corgene}
that the barycenters generate $(\msbf{A}^{(n)})_p$
in order to characterise $\msbf{A}^{(n)}$.  
}\end{rem}

\smallskip

\begin{prop}[Each equaliser in $\opTsKer$ has a left inverse] \label{invexis}
In $\opTsKer$ for any object $\mc{X}$,
the equaliser $\eq_{\mc{X}}$ has a left inverse 
$\eq^{\flat}_{\mc{X}}$
defined to be the unique factorisation of the $n$-th barycenter $s_n$
of Definition \ref{defbary}.
\begin{align}
\eq^{\flat}_{\mc{X}} :=
\eq_{\mc{X}} \backslash s_n : \mc{X}^{\otimes n} \longrightarrow  \mc{X}^{(n)} 
\label{definvexis}
\end{align}
\end{prop}
\begin{proof}{}
\begin{align*}
\eq^{\flat}_{\mc{X}} \Comp \eq_{\mc{X}} (-, x_1 \cdots x_n) & = \textstyle  \int_{X^{\otimes n}} \eq_{\mc{X}}(-, 
(y_1, \ldots , y_n)) \,  \eq^{\flat}_{\mc{X}}(d (y_1, \ldots, y_n), x_1 \cdots x_n)  \\
& =
\textstyle \int_{X^{\otimes n}} 
\,  
\eq_{\mc{X}}(-, 
(y_1, \ldots , y_n)) \,
\frac{1}{n!} 
\sum_{\sigma \in \mathfrak{S}_n}
\delta( d (y_1, \ldots, y_n), (x_{\sigma(1)}, \ldots, x_{\sigma(n)})) 
\tag*{by (\ref{formdefbary}) 
and (\ref{eqTKf2})} \\
& = 
\textstyle \frac{1}{n!} 
\sum_{\sigma \in \mathfrak{S}_n}
 \int_{X^{\otimes n}} 
\,  
\eq_{\mc{X}}(-, 
(y_1, \ldots , y_n)) \,
\delta( d (y_1, \ldots, y_n), (x_{\sigma(1)}, \ldots, x_{\sigma(n)})) \\
& =
\textstyle
\frac{1}{n!} 
\sum_{\sigma \in \mathfrak{S}_n}
\eq_{\mc{X}}(-,(x_{\sigma(1)}, \ldots, x_{\sigma(n)})) \\
&  
= \operatorname{Id}_{\mc{X}^{(n)}} \tag*{
by (\ref{eqTKf1}) as each summand is equal by the equaliser $\eq_{\mc{X}}$
} 
 \end{align*} 
\end{proof}

\begin{rem}{\em 
To be explicit, 
$\eq^{\flat}_{\mc{X}}$ of
(\ref{definvexis}) is descried
for each $x_1 \cdots x_n  \in X^{(n)}$ with any $n$ by  
\begin{align*}
\eq^{\flat}_{\mc{X}} (-, x_1 \cdots x_n) =
\frac{|S_{\vec{\msbf{x}}}|}{n!}
\sum_{\sigma \in \mathfrak{S}_n / S_{\vec{\msbf{x}}} }
\delta(-, (x_{\sigma(1)}, \ldots,  x_{\sigma(n)})),  
\end{align*}
where 
$S_{\vec{\msbf{x}}}:=\{ \sigma \in \mathfrak{S}_n \mid 
(x_1, \ldots , x_n) = (x_{\sigma(1)}, \ldots , x_{\sigma(n)})
\}$ is the stabiliser subgroup fixing
 $\vec{\msbf{x}}=(x_1, \ldots ,
x_n)$. \\
In particular, the explicit description yields the measurability of
the function
mapping every finite multiset $\msbf{x} \in \mc{X}_e$ to 
the cardinality $n! / |S_{\vec{\msbf{x}}}|$ 
of all the enumeration of
$\msbf{x}$.
}\end{rem}

\smallskip

\begin{prop} \label{PROPconditiontnteneq}
The condition (\ref{conditiontnteneq}) holds in $\T{J}{\opTsKer}$
so that $\otimes$ distributes over  E$_{\msbf{A}}$.
\end{prop}
\begin{proof}
By Proposition \ref{invexis} and Example \ref{excondiT}.
\end{proof}

\smallskip

\subsection{\normalsize Characterising Equalisers
$\oble{\msbf{A}}{n}$ in
$\T{\mc{I}}{\opTsKer}$}  \label{sect4.3}
This subsection is concerned with characterising the equaliser
$\oble{\msbf{A}}{n}$ abstractly defined in Proposition \ref{PropequalnT}
in terms of generators
for double orthogonal in $\T{\mc{I}}{\opTsKer}$
(Proposition \ref{corgene}). 
The barycenter construction by
Crubill\'{e} et al \cite{Crubille} 
is directly applied to our categorical framework $\opTsKer$.
The characterisation will be also used in Section
\ref{sect4.2} to show the distributivity of
the tensor product over the limit in $\T{I}{\cC}$.

\smallskip

The following lemma is prepared for
Proposition \ref{corgene}.

\begin{lem} \label{gencomcc}
\begin{itemize}
 \item[(i)]
For any homset $V \subseteq
\opTsKer(\mc{A}^{\otimes n}, \mc{I)}$,
\begin{align*}
 \crc{\{ \eq_{\mc{A}} \backslash s_n (g) \mid g \in \ccrc{V} 
   \}}
=
 \crc{\{ \eq_{\mc{A}g} \backslash s_n (g) \mid g \in V 
   \}}
\end{align*}
\item[(ii)]
For any homset $U \subset \opTsKer(I, \mathcal{X})$ and
any $g \in \cC(I, \mathcal{X})$ (i.e., 
a measurable
function $g$ on $\mathcal{X}$), 
\begin{align*}
g \leq \exists \,   g' \in \ccrc{U}
\Longrightarrow g \in \ccrc{U},
\end{align*}
in which the order $\leq$ is pointwise order between measurable
 functions.
\end{itemize}
\end{lem}

\begin{proof}
(i) We prove ($\supset$) as the converse is tautological.
Take $y$ from RHS, which means
$\ort{y}{\mc{A}^{(n)}}{\eq \backslash (s_n \Comp V)}$,
but $y= y \Comp \eq^{\flat}_{\mc{A}} \Comp
\eq_{\mc{A}}$ by Proposition \ref{invexis}.
Then by reciprocity, 
$\ort{y \Comp \eq^{\flat}_{\mc{A}}}{\mc{A}^{\otimes n}}{
\eq_{\mc{A}} \Comp \eq_{\mc{A}} \backslash
(s_n \Comp V) =s_n \Comp V}$
iff by reciprocity
$\ort{y \Comp \eq^{\flat}_{\mc{A}} \Comp s_n 
}{\mc{A}^{\otimes n}}{V}$
iff 
$\ort{y \Comp \eq^{\flat}_{\mc{A}} \Comp s_n 
}{\mc{A}^{\otimes n}}{\ccrc{V}}$
by $\crc{V}=V^{\circ \circ \circ }$
(i.e., $\ort{r}{}{V}$ iff $\ort{r}{}{\ccrc{V}}$).
Now the same reciprocities applied back,
$\ort{y \Comp \eq^{\flat}_{\mc{A}}}{\mc{A}^{\otimes n}}{s_n \Comp  \ccrc{V}}
= \eq_{\mc{A}} \Comp (\eq_{\mc{A}} \backslash (s_n \Comp  \ccrc{V}))$
as the barycenters are invariant under the symmetries,
iff
$\ort{y= y \Comp \eq^{\flat}_{\mc{A}} \Comp
\eq_{\mc{A}}}{\mc{A}^{(n)}}{\eq_{\mc{A}} \backslash (s_n \Comp  \ccrc{V})}$.

\noindent (ii)
Obvious:
$\forall \nu \in \crc{U}$ $\int_X g \, d \nu \leq \int_X g' \, d \nu \leq 1$.
Thus $g \in \ccrc{U}$.
\end{proof}

\smallskip

We are ready to characterise the equalisers in 
$\T{I}{\opTsKer}$. In what follows, when $h$ has a domain $I^{\otimes n}$,
the domain $I^{(k)}$ of
$\eq \backslash h$ is identified with $I \cong I^{(k)}$.

\begin{prop}[barycentric generators for $(\msbf{A}^{(n)})_p$
and for $(\oble{\msbf{A}}{n})_p$] \label{corgene}
In $\T{I}{\opTsKer}$, 
\begin{itemize}
 \item[(i)]
(generators for the equaliser $\msbf{A}^{(n)}$)
\begin{align*}
(\msbf{A}^{(n)})_p
& =
\ccrc{\{ \eq \backslash s_n(g) \mid g \in 
\ccrc{(\msbf{A}_p^{\otimes n})}  \}}  =
\ccrc{\{ \eq \backslash s_n(g) \mid g \in 
\msbf{A}_p^{\otimes n}  \}} \\
 & = 
\ccrc{\{ \eq \backslash s_n(f_1 \otimes \cdots \otimes f_n) \mid 
\forall i \, f_i \in 
\msbf{A}_p  \}} \mbox{, where $\eq: A^{(n)} \longrightarrow A^{\otimes n}$,}
\end{align*}
in which $\msbf{A}_p^{\otimes n}$ is a short for $(\msbf{A}_p)^{\otimes n}$.

\item[(ii)]
(generators for the equaliser $\oble{\msbf{A}}{n}$)
\begin{align*}
 (\oble{\msbf{A}}{n})_p &   =  
\ccrc{
\left \{  \eq \backslash s_n( (f_1 \& \iota_1) \otimes \cdots
 \otimes (f_n \& \iota_n )) \mid 
\forall i  \, \, (f_i \in 
\msbf{A}_p  \, \, \mbox{and} \, \,  \iota_i \in \ccrc{\{
 \operatorname{Id}_I\} }) 
\right \} } \\ & 
= 
\ccrc{\{ \eq \backslash  s_n( (f_1 \& \operatorname{Id}_I) \otimes \cdots
 \otimes (f_n \& \operatorname{Id}_I )) \mid 
 \forall i \, f_i \in 
\msbf{A}_p   \} }
\mbox{, where
$\eq: \oble{A}{n} \longrightarrow \obwt{A}{n}$.}
\end{align*}
\end{itemize}
\end{prop}
\begin{proof}
\noindent (i)
The first equality is by the characterisation of the invariant morphisms by the barycenters and the second equality is
by Lemma \ref{gencomcc} (i).

\noindent (ii)
We prove the second line as
the first one is by (i).
First note that $\iota_i \leq \operatorname{Id}_I$
as $\iota_i \in \ccrc{\{
 \operatorname{Id}_I \}}$ is a 
function on $\{* \}=I$ to $\interval{0}{1}$ (while
 $\operatorname{Id}_I$ maps $*$ to 1).
Thus 
$f_i \otimes \iota_i \leq f_i \otimes \operatorname{id}_I$
for all $i=1, \ldots ,n$,  hence 
$s_n( (f_1 \& \iota_1) \otimes \cdots
 \otimes (f_n \& \iota_n ))
\leq 
s_n( (f_1 \& \operatorname{Id}_I) \otimes \cdots
 \otimes (f_n \& \operatorname{Id}_I ))$.
Hence by the downward closed Lemma \ref{gencomcc} (ii), 
$\eq \backslash s_n( (f_1 \& \iota_1) \otimes \cdots
 \otimes (f_n \& \iota_n ))$ belongs to 
RHS of the first equation of (ii), which has shown the assertion.
\end{proof}

\subsection{\normalsize
Distribution of Monoidal Product over Limit in
$\T{\mc{I}}{\opTsKer}$} \label{sect4.2}
This subsection is concerned with showing the distributivity of
the tensor product over the limit L$_{\msbf{A}}$ in
$\T{I}{\cC}$ when $\cC=\opTsKer$.
Using the monotone convergence theorem,
Lebesgue integral over the limit measurable
space is shown to be a convergence of sequence of integrals over the
equalisers along $p_{\infty,n} \otimes \mc{B}$s
(Proposition \ref{approxnun}). For the convergence,
the reciprocity of $p_{\infty, n}$ in Proposition \ref{recippB} above 
plays a crucial role.
The convergence leads to the satisfaction of the distribution
condition (Theorem \ref{conhold}).
The satisfaction is demonstrated by estimating the convergence 
over the barycentoric generators studied above in Section \ref{sect4.3}.


\begin{defn}[sequence of measures $\nu_n$
on $\oble{\mc{A}}{n} \otimes \mc{B}$ for a measure $\nu$ on
$\oble{\mc{A}}{\infty} \otimes \mc{B}$] \label{nun} {\em 
For any measure $\nu: \oble{\mc{A}}{\infty} \otimes \mc{B}
 \longrightarrow \mc{I}$ in $\opTsKer$ and a natural number $n$,
a measure $\nu_n$ on $\oble{\mc{A}}{n} \otimes \mc{B}$ is defined
as follows using
$G_{n, \infty}$ of (\ref{Ginfn});
\begin{align*}
& \nu_n : \oble{\mc{A}}{n} \otimes \mc{B} \longrightarrow \mc{I}
 \quad \quad X \times Y \longmapsto 
\nu(G_{n, \infty}(X) \times Y) 
\end{align*}
for every  
rectangle $X \times Y$ with measurable sets $X \in
\oble{\mc{A}}{n}$ and $Y \in \mc{B}$.

Note the measure $\nu_n \Comp
(p_{\infty, n} \otimes \mc{B})$, which
by Proposition \ref{recippB}, 
is the push forward measure 
$\nu_n \Comp (G_{n, \infty} \times B)^{-1}$
of $\nu_n$ along $G_{n, \infty} \times B$.
Thus $\nu_n \Comp
(p_{\infty, n} \otimes \mc{B})$ consequently 
gives the restriction of the measure $\nu$ 
to the family of measurable subsets
$(\underset{k \leq n}{\bigwith}
\mc{A}^{(k)}) \otimes  \mc{B}$ in 
$\oble{\mc{A}}{\infty}  \otimes \mc{B}$.
}\end{defn}

\bigskip
In terms of $\nu_n$ defined above,
a sequence of measures is constructed to converge to $\nu$.
\begin{prop}[$\{ \nu_n \Comp
(p_{\infty, n} \otimes \mc{B}) \}_{n \in \mathbb{N}}$
converges to $\nu$ when  $n \to
 \infty$] \label{approxnun}
For any measurable function $f: \mc{I} \longrightarrow 
\oble{\mc{A}}{\infty} \otimes \mc{B}$ in $\opTsKer$,
\begin{align}
\int_{\oble{A}{\infty} \times B} f \, \, d \nu 
& = \lim_{n \to \infty}
\int_{\oble{A}{\infty} \times B}
f \, \, d ( \nu_n \Comp (p_{\infty, n} \otimes \mc{B}))
\, \,  \label{unun} \\
& \geq \int_{\oble{A}{\infty} \times B}
f \, \, d ( \nu_n \Comp (p_{\infty, n} \otimes \mc{B}))
\, \,  \nonumber 
\end{align}
The equation (\ref{unun}) stipulates that the measures 
$\nu_n \Comp (p_{\infty, n} \otimes \mc{B})$ converge
to the measure $\nu$ when $n$ tends to infinity.
\end{prop}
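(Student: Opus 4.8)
The plan is to express the measure $\nu$ on $\oble{\mc{A}}{\infty} \otimes \mc{B}$ as an increasing limit of the restricted measures $\nu_n \Comp (p_{\infty, n} \otimes \mc{B})$, and then invoke the monotone convergence theorem (Theorem~\ref{MC}) to commute the limit with the Lebesgue integral against $f$. The key observation, already recorded in Definition~\ref{nun}, is that $\nu_n \Comp (p_{\infty, n} \otimes \mc{B})$ is, via Lemma~\ref{lemG}(ii), the push forward $\nu_n \Comp (G_{n,\infty} \times B)^{-1}$, hence coincides with the restriction $\nu \restriction$ of $\nu$ to the coarsened sub measurable space $sub_n(\oble{\mc{A}}{\infty}) \otimes \mc{B}$. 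Because the map $F$ defining $\oble{\mc{A}}{\infty}=!\,\mc{A}= (\uplus_k A^{(k)}, \&_k \mc{A}^{(k)})$ splits it as a disjoint union indexed by $k \in \mathbb{N}$, the $\sigma$-field $\&_{k \le n} \mc{A}^{(k)} \, \& \, \&_{n<k}\{\emptyset, A^{(k)}\}$ of $sub_n$ is an increasing sequence (in $n$) of sub-$\sigma$-fields of $\&_k \mc{A}^{(k)}$ whose union generates it; correspondingly the measures $\nu_n \Comp (p_{\infty, n} \otimes \mc{B})$ increase to $\nu$ on every measurable rectangle, and hence (by $\sigma$-additivity and a standard Dynkin/monotone-class argument) on every measurable set of $\oble{A}{\infty} \times B$.

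\textbf{Step 1.} I would first verify the pointwise monotonicity $\nu_n \Comp (p_{\infty, n} \otimes \mc{B}) \le \nu_{n+1} \Comp (p_{\infty, n+1} \otimes \mc{B}) \le \nu$ as set functions. On a generating rectangle $G_{n,\infty}(X) \times Y$ with $X \in \oble{\mc{A}}{n}$, the value $\nu(G_{n,\infty}(X)\times Y)$ is exactly what $\nu_n$ assigns, and the image sets $G_{n,\infty}(\oble{A}{n})$ increase with $n$ inside $\uplus_k A^{(k)}$ (the element $(1,a_1)\cdots(1,a_k)(2,*)\cdots(2,*)$ of length $n$ and the same padded to length $n+1$ map to the same $(k,a_1\cdots a_k)$), so monotonicity is immediate. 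This already yields the inequality $\ge$ in the second line of the statement.

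\textbf{Step 2.} Next I would show $\sup_n \bigl(\nu_n \Comp (p_{\infty, n} \otimes \mc{B})\bigr)(E) = \nu(E)$ for every $E \in \&_k \mc{A}^{(k)} \otimes \mc{B}$. For $E = (k, \cdot) $-supported rectangles this holds by construction once $n \ge k$; the general case follows because such rectangles generate the product $\sigma$-field and both sides are measures (using countable additivity of $\nu$ together with the fact that $\uplus_k A^{(k)}$ is a countable disjoint union, so any measurable set is the disjoint countable union of its pieces in each $A^{(k)}$).

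\textbf{Step 3.} Finally, apply the monotone convergence theorem to the increasing sequence of integrands $f \cdot \chi$ — more precisely, since the measures increase, $\int f\, d(\nu_n\Comp(p_{\infty,n}\otimes\mc{B}))$ is a nondecreasing sequence of extended nonnegative reals bounded above by $\int f\, d\nu$, and Theorem~\ref{MC} (applied in its form for increasing sequences of measures, or equivalently by writing $f$ as a sup of simple functions and exchanging two suprema) gives $\lim_n \int f\, d(\nu_n\Comp(p_{\infty,n}\otimes\mc{B})) = \int f\, d\nu$. Alternatively one can use Corollary~\ref{recippB} to rewrite each right-hand integral as $\int_{\oble{A}{n}\times B} ((p_{\infty,n}\otimes\mc{B})\Comp f)\, d\nu_n$ and argue on that side.

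\textbf{Main obstacle.} The only delicate point is Step~2: making precise that the restrictions $\nu_n\Comp(p_{\infty,n}\otimes\mc{B})$ genuinely converge \emph{up to} $\nu$ and not to something smaller — i.e.\ that the union over $n$ of the sub-$\sigma$-fields of $sub_n(\oble{\mc{A}}{\infty})$ is rich enough to pin down $\nu$. This is where the precise description of $!\,\mc{A}$ as a countable coproduct of the $\mc{X}^{(k)}$'s (Definition~\ref{defexob}) and the explicit form of $G_{n,\infty}$ (Lemma~\ref{lemG}) are essential; everything else is a routine application of monotone convergence.
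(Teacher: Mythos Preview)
Your argument is correct and rests on the same underlying observation as the paper's: the push-forward $\nu_n \Comp (p_{\infty,n}\otimes\mc{B})$ coincides with $\nu$ on the piece $(\uplus_{k\le n}A^{(k)})\times B$ and vanishes on its complement, so the countable disjoint decomposition of $!\,\mc{A}=\uplus_k A^{(k)}$ together with monotone convergence yields the limit identity. The organisation differs, however. The paper truncates the \emph{integrand}: it sets $f_n$ equal to $f$ on components $k\le n$ and $0$ beyond, checks $(p_{\infty,n}\otimes\mc{B})\Comp f=(p_{\infty,n}\otimes\mc{B})\Comp f_n$, and then uses Corollary~\ref{recippB} twice to pass from $\int f_n\,d\nu$ to $\int f\,d(\nu_n\Comp(p_{\infty,n}\otimes\mc{B}))$, applying Theorem~\ref{MC} to the increasing sequence $f_n\uparrow f$. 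You instead truncate the \emph{measure}, showing directly that $\nu_n\Comp(p_{\infty,n}\otimes\mc{B})\uparrow\nu$ setwise and invoking monotone convergence for measures. Your route is a little shorter and avoids the reciprocity detour; the paper's route keeps Corollary~\ref{recippB} visible, which is consistent with its emphasis on reciprocity as the structural mechanism. The two are dual presentations of the same computation, and your identification of Step~2 as the only substantive point is accurate --- it is exactly what the paper records informally in Definition~\ref{nun}.
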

\begin{proof}
We prove Equation (\ref{unun}) using the
monotone convergence theorem (as the inequality is direct
for any $n$ by the
definition $\nu_n$). Given $f$,
we define an increasing
sequence $0 \leq f_0 \leq f_1 \leq \cdots \leq f_n \leq  \cdots$
of measurable functions $f_n:
\mc{I} \longrightarrow
\oble{\mc{A}}{\infty} \otimes \mc{B}$ by
\begin{align*}
f_n (((k, a_1 \cdots a_k),b)) :=
\left\{ \begin{array}{lc}
0 & \mbox{for $k > n$} \\
f (((k, a_1 \cdots a_k),b))
 & \mbox{for $k \leq n$}
   \end{array} \right. 
\end{align*}
This yields
\begin{align}
(p_{\infty, n} \otimes \mc{B}) \Comp f
=
(p_{\infty, n} \otimes \mc{B}) \Comp f_n \quad \mbox{for any $n$} \label{ffn}
\end{align}
Obviously from the definition,
\begin{align*}
\lim_{n \to \infty} f_n =f
\end{align*}
The following first equation is 
by the definition $\nu_m$ of Definition \ref{nun},
\begin{align*}
 \int_{\oble{A}{\infty} \times B } f_n \,  d \nu
& = 
 \int_{\oble{A}{\infty} \times B} f_n \,
d ( \nu_n \Comp (p_{\infty, n} \otimes \mc{B}))\\
& 
 = 
\int_{\oble{A}{n} \times B} 
(p_{\infty,n} \otimes \mc{B}) \Comp
f_n \, d \nu_n  \tag*{by Prop \ref{recippB} } \\
&  = 
\int_{\oble{A}{n} \times B} 
(p_{\infty,n} \otimes \mc{B}) \Comp
f \, d \nu_n  \tag*{by (\ref{ffn})} \\
& 
=  \int_{\oble{A}{\infty} \times B} f \,
d ( \nu_n \Comp (p_{\infty, n} \otimes \mc{B}))
\tag*{by Prop \ref{recippB} } 
\end{align*}
Taking the limit $n \to \infty$ yields the assertion with a bypass of
the Lebesgue monotone convergence theorem commuting the limit and the
integral:
\begin{align*}
\int_{\oble{A}{\infty} \times B} f \,  d \nu
& =
\lim_{n \to \infty} \int_{\oble{A}{\infty} \times B}  f_n \, d \nu
 =  
\lim_{n \to \infty} \int_{\oble{A}{\infty} \times B} f \,
d ( \nu_n \Comp (p_{\infty, n} \otimes \mc{B}))
\end{align*}
\end{proof}
Finally with Proposition \ref{approxnun},
we are ready to prove the goal of this subsection.
For the goal, we prepare two technical lemmas, in which
$\bar{f}$ denotes $f \& \operatorname{Id}_I$.

\begin{lem} \label{pnk} 
For any natural numbers $k \leq n$
and measurable functions
$f_i : \mc{I} \longrightarrow \mc{A}$ with $i=1, \ldots ,n$,
the following
equality holds in $\mc{I} \longrightarrow \oble{\mc{A}}{k}$
\begin{align*}
p_{n, k} \Comp \bra{f}{n}=
\frac{(n-k)!}{n!}
\, \eq \backslash \! \! \!    
\sum_{\iota : \setone{k} \hookrightarrow \setone{n}
} 
\bar{f}_{\iota(1)} \otimes \cdots \otimes \bar{f}_{\iota(k)}, 
\end{align*}
where 
$p_{n,k} :=p_{k+1,k} \Comp \cdots \Comp p_{n,n-1}$
and $\iota$ is an inclusion function with $\setone{m}$ denoting the set $\{1, \ldots , m \}$.
\end{lem} 
\begin{proof}
The following starts from $LHS$
and ends with $RHS$ under the instantiating at $(I, x_1 \cdots x_k)$ with
$x_1 \cdots x_k  \in \oble{\mc{A}}{k}$.
\begin{align*}
p_{n.k} \Comp
\bra{f}{n} (I, x_1 \cdots x_k) 
&  = \textstyle \bra{f}{n} (I, x_1 \cdots x_k \overbrace{(2,*) \cdots (2,
 *)}^{n-k}) \tag*{by (\ref{pmn})}\\
& \textstyle = \frac{1}{n!}
\sum_{\sigma \in \mathfrak{S}_n}
\bar{f}_{\sigma(1)}(I, x_1) \cdots \bar{f}_{\sigma(k)}(I, x_k)
\operatorname{Id}_I (I, *) \cdots \operatorname{Id}_I (I, *)  \\
&  \textstyle = \frac{1}{n!}
\sum_{\sigma \in \mathfrak{S}_n} \prod_{i=1}^k \bar{f}_{\sigma(i)} (I, x_i)
=  \frac{(n-k)!}{n!}
\sum_{\iota :  \setone{k}  \hookrightarrow \setone{n}
} \prod_{i=1}^k \bar{f}_{\iota(i)} (I, x_i) \\
& \textstyle = 
\frac{(n-k)!}{n!}
\, \eq \backslash     
\sum_{\iota : \setone{k} \hookrightarrow \setone{n}}
\bar{f}_{\iota(1)} \otimes \cdots \otimes \bar{f}_{\iota(k)}
(I, x_1 \cdots x_k) 
\end{align*}
\end{proof}

\smallskip

\begin{lem} \label{lemdistlim} 
For any measure $\nu$,
\begin{enumerate}
 \item
If $\ort{\nu}{\oble{\mc{A}}{\infty} \otimes \mc{B}
}{\textstyle \bigcup_{n \in \mathbb{N}} \{  \underset{k \in \mathbb{N}}{\bigwith}
(\frac{f_1 + \cdots + f_n}{n})^{(k)} \mid 
	      \, \, \forall i \, f_i \in \msbf{A}_p  \}
\otimes \msbf{B}_p
}$, then 
\begin{align}
\ort{
\frac{m!}{m^k (m-k)!}
\nu_k}{\oble{\mc{A}}{k} \otimes \mc{B}
}{ (p_{m,k} \otimes B) \Comp (\oble{\msbf{A}}{m} \otimes \msbf{B})_p 
} \quad \quad \mbox{for all $m > k$}\label{mknu}
\end{align}
\item
In particular, if $\nu \in \crc{((\oble{\msbf{A}}{\infty} \otimes \msbf{B})_p)}$, then the orthogonality (\ref{mknu}) holds.
\end{enumerate}
\end{lem}
\begin{proof}{}
(2) is direct from (1) as RHS of the premise ortogonality of (1)
is contained in
$(\oble{\msbf{A}}{\infty} \otimes \msbf{B})_p$
(cf. the obvious parts $(ii) \subset (iii) \subset (i)$ in the proof of
Theorem \ref{charlim}). \\
We prove (1).
The premise of (1) is the following first inequality for any $b \in \msbf{B}_p$:
\begin{align*}
 1 & \geq
\inpro{ 
(  \underset{n \in \mathbb{N}}{\with}
(\frac{\bar{f}_1 + \cdots + \bar{f}_m}{m})^{(n)} ) \otimes b}{\oble{\mc{A}}{\infty} \otimes \mc{B}}{ \nu}  \\
&  = 
\lim_{k \to \infty}
\inpro{ 
(  \underset{n \in \mathbb{N}}{\with}
(\frac{\bar{f}_1 + \cdots + \bar{f}_m}{m})^{(n)} ) \otimes b}{
\oble{\mc{A}}{\infty} \otimes \mc{B}}{ \nu_k 
\Comp (p_{\infty, k} \otimes B)}  \tag*{by Prop \ref{approxnun}} \\
& \geq \inpro{ 
(  \underset{n \in \mathbb{N}}{\with}
(\frac{\bar{f}_1 + \cdots + \bar{f}_m}{m})^{(n)} ) \otimes b}{\oble{\mc{A}}{\infty} \otimes \mc{B}}{ \nu_k 
\Comp (p_{\infty, k} \otimes B)}  \tag*{by the def $p_{\infty,k}$} \\
&  = \inpro{ 
(  (p_{\infty, k} \otimes B) \Comp \underset{n \in \mathbb{N}}{\with}
(\frac{\bar{f}_1 + \cdots + \bar{f}_m}{m})^{(n)} ) \otimes b}{
\oble{\mc{A}}{k} \otimes \mc{B}}{\nu_k} \tag*{by reciprocity}  \\
&  = \inpro{ 
(\frac{\bar{f}_1 + \cdots + \bar{f}_m}{m})^{(k)}  \otimes b}{
\oble{\mc{A}}{k} \otimes \mc{B}}{ \nu_k}
\\
& \geq  
\frac{1}{m^k}
\inpro{(\eq \backslash \! \! \!     \sum_{\iota : \setone{k} \hookrightarrow \setone{m}} 
\bar{f}_{\iota(1)} \otimes \cdots \otimes \bar{f}_{\iota(k)} ) \otimes b 
}{\oble{\mc{A}}{k} \otimes \mc{B}}{\nu_k}
\tag*{by the pointwise order between measurable functions  
$(\bar{f}_1 + \cdots + \bar{f}_m)^{\otimes k} \geq \sum_{i : \setone{k}
\hookrightarrow \setone{m}} \bar{f}_{\iota(1)} \otimes \cdots \otimes
 \bar{f}_{\iota(k)}$} \\
& \geq  
\frac{1}{m^k} \frac{ m!}{(m-k)!}
\inpro{
(p_{m, k} \otimes B) \Comp (\bra{f}{m} \otimes b)
}{\oble{\mc{A}}{k} \otimes \mc{B}}{\nu_k}
\tag*{by Lemma \ref{pnk}}
\end{align*}
The above means  
\begin{align*}
\ort{
\frac{m!}{m^k (m-k)!}
\, \nu_k}{}{ (p_{m,k} \otimes B) \Comp (G \otimes \msbf{B}_p)
}, \end{align*}
where $G$ denotes the scope of the double orthogonal of
(ii) in Proposition \ref{corgene} so that
$\ccrc{G}=(\oble{\msbf{A}}{m})_p$.
\\ This implies by reciprocity and $\crc{X}=\cccrc{X}$
(i.e., $\ort{r}{}{X}$ iff $\ort{r}{}{\ccrc{X}}$),  
\begin{align*}
& \ort{
\frac{m!}{m^k (m-k)!}
\nu_k}{}{ (p_{m,k} \otimes B) \Comp \ccrc{(G \otimes \msbf{B}_p)}
} 
\end{align*}
This is the assertion as 
\begin{align*}
\ccrc{(G \otimes \msbf{B}_p)}=
\ccrc{(G \otimes \ccrc{(\msbf{B}_p)})}=
\ccrc{(\ccrc{G} \otimes \msbf{B}_p )}=
\ccrc{((\oble{\msbf{A}}{m})_p \otimes \msbf{B}_p )}=
(\oble{\msbf{A}}{m} \otimes \msbf{B})_p,
\end{align*}
whose second equation is by the stable tensor of Lemma \ref{intstabten}. 
\end{proof}

\begin{thm}[$\otimes$ distributes over L$_{\msbf{A}}$
in $\T{\mc{I}}{\opTsKer}$] \label{conhold}
The condition (\ref{condition}) in Proposition \ref{condist}
holds for all objects $\msbf{A}$ and $\msbf{B}$ in $\T{\mc{I}}{\opTsKer}$.
That is, the distributivity of the monoidal product over L$_{\msbf{A}}$
is retained in the orthogonal category.
 \end{thm}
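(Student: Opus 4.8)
The plan is to verify the concrete form (\ref{altcondition}) of the distribution condition for $\cC=\opTsKer$. By Proposition \ref{ortint} and the definitions of $\crc{(-)}$ and $\ccrc{(-)}$, for any object $\msbf{C}$ of $\T{\mc{I}}{\opTsKer}$ and any set $V$ of points one has $\ccrc{V}=\{\,w\mid \forall\rho\in\crc{V}\ \ \inpro{w}{}{\rho}\le 1\,\}$; so $\ccrc{((\oble{\msbf{A}}{\infty})_p\otimes\msbf{B}_p)}$ is exactly the set of measurable functions $u$ on $\oble{A}{\infty}\otimes B$ pairing $\le 1$ with every measure $\nu$ orthogonal to $(\oble{\msbf{A}}{\infty})_p\otimes\msbf{B}_p$. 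Hence, assuming $(p_{\infty,n}\otimes B)\Comp u\in\ccrc{((\oble{\msbf{A}}{n})_p\otimes\msbf{B}_p)}$ for all $n$, I would fix one such $\nu$ and prove $\int_{\oble{A}{\infty}\times B}u\,d\nu\le 1$.

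The first two moves are routine. By Proposition \ref{approxnun}, $\int_{\oble{A}{\infty}\times B}u\,d\nu=\lim_{n\to\infty}\int_{\oble{A}{\infty}\times B}u\,d(\nu_n\Comp(p_{\infty,n}\otimes\mc{B}))$; by Corollary \ref{recippB}, the $n$-th term equals $\int_{\oble{A}{n}\times B}\big((p_{\infty,n}\otimes\mc{B})\Comp u\big)\,d\nu_n$. Since $(p_{\infty,n}\otimes B)\Comp u$ lies in $\ccrc{((\oble{\msbf{A}}{n})_p\otimes\msbf{B}_p)}$, this $n$-th term is $\le 1$ as soon as the measure $\nu_n$ belongs to $\crc{((\oble{\msbf{A}}{n})_p\otimes\msbf{B}_p)}$; granting that, $\int_{\oble{A}{\infty}\times B}u\,d\nu$ is a limit of numbers $\le 1$, hence $\le 1$, which is the assertion.

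So the heart of the matter — and the step I expect to be the main obstacle — is the \emph{key lemma}: if $\nu$ is orthogonal to $(\oble{\msbf{A}}{\infty})_p\otimes\msbf{B}_p$, then $\nu_n$ is orthogonal to $(\oble{\msbf{A}}{n})_p\otimes\msbf{B}_p$. Using that a reciprocal orthogonality stabilises $\otimes$, it is enough to bound $\inpro{g\otimes h}{}{\nu_n}$ for $g$ a \emph{generator} $\eq\backslash h_0$ of $(\oble{\msbf{A}}{n})_p$ as in (\ref{equalnT}) and $h$ a generator of $\msbf{B}_p$. For any measurable $\bar g$ on $\oble{A}{\infty}$ restricting to $g$ along the injection $G_{n,\infty}$ of Lemma \ref{lemG}, one has $g\otimes h=(\bar g\otimes h)\circ(G_{n,\infty}\times B)$, so the push-forward identity of Lemma \ref{lemG}(ii) gives $\inpro{g\otimes h}{}{\nu_n}=\int_{\oble{A}{\infty}\times B}(\bar g\otimes h)\,d(\nu_n\Comp(p_{\infty,n}\otimes\mc{B}))$; and since $\nu_n\Comp(p_{\infty,n}\otimes\mc{B})$ is the restriction of $\nu$ to $sub_n(\oble{\mc{A}}{\infty})\otimes\mc{B}$ (Definition \ref{nun}), hence pointwise $\le\nu$, this is $\le\int_{\oble{A}{\infty}\times B}(\bar g\otimes h)\,d\nu$. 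Everything thus reduces to producing, for each generator $g$ of $(\oble{\msbf{A}}{n})_p$, an extension $\bar g\in(\oble{\msbf{A}}{\infty})_p$ along $G_{n,\infty}$; then $\inpro{\bar g\otimes h}{}{\nu}\le 1$ because $\nu$ is orthogonal to $(\oble{\msbf{A}}{\infty})_p\otimes\msbf{B}_p$, and the chain closes.

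For the lift I would take $\bar g$ to be the extension of $g$ by $0$ to the summands $X^{(k)}$ with $k>n$ of $\oble{\mc{A}}{\infty}=!\mc{A}$, and verify $\bar g\in(\oble{\msbf{A}}{\infty})_p$ via the cone characterisation of Proposition \ref{proplimT}, checking $p_{\infty,m}\Comp\bar g\in(\oble{\msbf{A}}{m})_p$ for all $m$. For $m\le n$, $p_{\infty,m}\Comp\bar g$ is the composite $p_{m+1,m}\Comp\cdots\Comp p_{n,n-1}$ applied to the point $g$ (the identity when $m=n$), hence a point of $\oble{\msbf{A}}{m}$ since every $p_{k+1,k}$ is a $\T{\mc{I}}{\opTsKer}$-morphism. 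For $m>n$ one must see that the zero-extension of $g$ inside $\oble{\mc{A}}{m}=(\mc{A}\&\mc{I})^{(m)}$ is still a point; I would obtain this by realising it as $\eq\backslash(\eta^{*}h_0)$ for the finite kernel $\eta$ that discards the $\mc{I}$-coordinates, pads with roots up to length $n$, and returns the zero measure whenever the input has more than $n$ $\mc{A}$-valued coordinates — so that $\eta^{*}h_0$ is again a point of $(\msbf{A}\&\msbf{I})^{\otimes m}$, the check reducing once more to $\otimes$-generators by stability and to the downward closure of $\ccrc{}$-closed homsets. This last point — that zero-extensions of equaliser points remain equaliser points — is the genuinely technical part, and is where the measure-theoretic input (the explicit form of $!\mc{A}$, $p_{\infty,n}$, $G_{n,\infty}$, and the monotone convergence underlying Proposition \ref{approxnun}) is indispensable.
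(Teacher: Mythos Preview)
Your plan coincides with the paper's proof: both reduce condition~(\ref{altcondition}) to the \emph{key lemma} that $\nu_n\in\crc{((\oble{\msbf{A}}{n})_p\otimes\msbf{B}_p)}$ for the approximating measures $\nu_n$ of Definition~\ref{nun}, and then bound $\int u\,d\nu$ as a limit of terms $\int((p_{\infty,n}\otimes\mc{B})\Comp u)\,d\nu_n\le 1$ using Proposition~\ref{approxnun} and Corollary~\ref{recippB}. The only difference is how the key lemma is argued. The paper dispatches it in one sentence: it asserts that an arbitrary element of $(\oble{\msbf{A}}{n})_p\otimes\msbf{B}_p$ can be written as $(p_{\infty,n}\Comp x)\otimes y$ with $x\in(\oble{\msbf{A}}{\infty})_p$, then uses the inequality of Proposition~\ref{approxnun} (giving $\nu_n\Comp(p_{\infty,n}\otimes\mc{B})\in\crc{((\oble{\msbf{A}}{\infty})_p\otimes\msbf{B}_p)}$) together with reciprocity.

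You have correctly spotted that this assertion---that every level-$n$ point lifts along $p_{\infty,n}$ to a level-$\infty$ point---is exactly the substance of the lemma and is not immediate from Definition~(\ref{liminf}). Your proposal to lift a generator $g=\eq\backslash h_0$ by the zero-extension $\bar g$ and to verify $p_{\infty,m}\Comp\bar g\in(\oble{\msbf{A}}{m})_p$ is the right shape; the $m\le n$ case is indeed handled by the $\T{\mc{I}}{\opTsKer}$-morphisms $p_{k+1,k}$. But your treatment of $m>n$ is only a sketch: the kernel $\eta$ is described informally, and ``the check reducing to $\otimes$-generators by stability and downward closure'' needs the actual inequality against a concrete dominating element of $((\msbf{A}\&\msbf{I})^{\otimes m})_p$ to be written down. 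So you and the paper take the same route, and the step you flag as ``the genuinely technical part'' is precisely the one the paper passes over without proof; neither account fully closes it on the page.
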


\begin{proof}
 We shall show the following (\ref{altcondition}) for any $u$,
 equivalent to (\ref{condition});
\begin{align*}
\ort{
\forall n \, (p_{\infty,n} \otimes B) \Comp u}{}{
\crc{((\oble{\msbf{A}}{n})_p \otimes \msbf{B}_p)} }
\Longrightarrow 
\ort{u}{}{
\crc{((\oble{\msbf{A}}{\infty})_p \otimes \msbf{B}_p)} }
\end{align*}
Take any $\nu \in \crc{((\oble{\msbf{A}}{\infty})_p \otimes \msbf{B}_p)}$
in order to show $\ort{u}{}{\nu}$.  The premise of the assertion is 
for all $m$; 
$$ u_m:= (p_{\infty,m} \otimes B) \Comp u \in
\ccrc{((\oble{\msbf{A}}{m})_p \otimes \msbf{B}_p)}
= (\oble{\msbf{A}}{m} \otimes \msbf{B})_p$$
Thus the premise implies by the second assertion of Lemma \ref{lemdistlim} 
\begin{align*}
\frac{m^k (m-k) !}{m!} \geq 
\int_{\oble{A}{k} \times B}  ((p_{m, k} \otimes B) \Comp u_m) \, d \nu_k 
\end{align*}
Taking the limit on $m$,
\begin{align}
1= \lim_{m \to \infty} \frac{m^k (m-k) !}{m!} \geq 
\lim_{m \to \infty} \int_{\oble{A}{k} \times B}  ((p_{m, k} \otimes B) \Comp u_m) \, d \nu_k =
\int_{\oble{A}{k} \times B}  ((p_{\infty, k} \otimes B) \Comp u) \, d \nu_k
\label{firstlimit}
\end{align}
The last equality is Lebesgue monotone convergence and the first equality is by
\begin{align}
\textstyle \lim_{m \to \infty} \frac{(m-k)! n^k}{m!}=
\lim_{m \to \infty} \frac{m^k}{m(m-1) \cdots (m-(k-1))}  =
\textstyle \lim_{m \to \infty} \frac{1}{1 (1 - \frac{1}{m}) \cdots (1 -
 \frac{k-1}{m})}=1 \label{mkconv}
\end{align}
Thus taking the limit of (\ref{firstlimit}) now on $k$,
\begin{align*}
1 \geq \lim_{k \to \infty}
\int_{\oble{A}{k} \times B}  ((p_{\infty, k} \otimes B) \Comp u) \, d \nu_k
= 
\lim_{k \to \infty}
\int_{\oble{A}{\infty} \times B}  u \, d \nu_k \Comp (p_{\infty, k} \otimes B) 
= \int_{\oble{A}{\infty} \times B} u \, d \nu
\end{align*}
\end{proof}

By Theorem \ref{conhold} and Proposition \ref{PROPconditiontnteneq},
we finally have  
\begin{cor} \label{corconhold}
$\T{\mc{I}}{\opTsKer}$
has the free exponential whose forgetful image by $\T{\mc{I}}{\opTsKer}
\longrightarrow \opTsKer$ is the free exponential of Theorem \ref{ftsect3}.
\end{cor}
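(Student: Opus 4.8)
The plan is to derive the corollary directly from Theorem \ref{mainsec1}, taking $\cC := \opTsKer$ and the fixed object $J := \mc{I}$, and then checking that every hypothesis of that theorem has already been verified. First I would record the ambient structure: by Definition \ref{moncat} and Proposition \ref{retbip}, $\opTsKer$ is a symmetric monoidal category with finite products, so both Definition \ref{MTTfex} and the orthogonality-category construction of Section \ref{subsect1.2} apply to it. Next, the orthogonality $\bot_{\mc{X}}$ defined by $\inpro{f}{\mc{X}}{\mu}\leq 1$ is reciprocal, which is Proposition \ref{ortint}; this rests on the Fubini--Tonelli identity $\inpro{f}{\mc{Y}}{\kappa_*\mu}=\inpro{\kappa^* f}{\mc{X}}{\mu}$ for s-finite kernels (Proposition \ref{adjn}). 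By the lemmas of Section \ref{subsect1.2}, this reciprocity already guarantees that the orthogonality stabilises $\otimes$ and that the finite products (Proposition \ref{protjc}) and the monoidal product (Definition \ref{stabmon}) of $\opTsKer$ lift to $\T{\mc{I}}{\opTsKer}$.

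The second hypothesis is that $\opTsKer$ carries the free exponential of Definition \ref{MTTfex}; this is Theorem \ref{ftsect3}, whose data are the equalisers $\oble{\mc{X}}{n}=\msnw{X}{n}$ (Proposition \ref{eqTK}), the limit $\oble{\mc{X}}{\infty}=\,!\,\mc{X}$ (Theorem \ref{limTK}, Definition \ref{defexob}) and the two distribution properties of Proposition \ref{tencMS}. The remaining, and only substantive, hypothesis is the distribution condition (\ref{condition}) of Proposition \ref{condist}, equivalently its reformulation (\ref{altcondition}), required for every pair of objects $\msbf{A},\msbf{B}$ of $\T{\mc{I}}{\opTsKer}$; this is exactly Theorem \ref{conhold}, obtained from the monotone-convergence approximation of $\nu$ by the sequence $\{\nu_n\Comp(p_{\infty,n}\otimes\mc{B})\}_n$ (Proposition \ref{approxnun}) together with the push-forward reciprocity of Corollary \ref{recippB}.

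With these facts in hand, Theorem \ref{mainsec1} applies verbatim: $\T{\mc{I}}{\opTsKer}$ has the free exponential, and the forgetful functor $\abs{\, }\colon\T{\mc{I}}{\opTsKer}\longrightarrow\opTsKer$ sends it to the free exponential of $\opTsKer$, which by Theorem \ref{ftsect3} is the one built from the spaces $\mc{X}^{(n)}$ and $!\,\mc{X}$. Unwinding Propositions \ref{PropequalnT}, \ref{proplimT} and \ref{tencT}, the exponential of an object $\msbf{A}=(A,\msbf{A}_p)$ is the pair $(\,!\,A,\,(\oble{\msbf{A}}{\infty})_p)$ whose point set is the one displayed in (\ref{liminf}), with its commutative-comonoid structure inherited from $\opTsKer$ along the forgetful functor.

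As for the difficulty: there is essentially no new mathematics at this stage, since all the labour is carried by Theorems \ref{mainsec1} and \ref{conhold}. The one point that still deserves attention is a bookkeeping matter --- confirming that the concretely constructed free exponential of Section \ref{sect3} genuinely instantiates the abstract Definition \ref{MTTfex}, i.e.\ that Propositions \ref{eqTK} and \ref{tencMS} and Theorem \ref{limTK} together supply all four of (E$_{\mc{X}}$), (distribution of $\otimes$ over E$_{\mc{X}}$), (L$_{\mc{X}}$) and (distribution of $\otimes$ over L$_{\mc{X}}$) --- but this was already settled when Theorem \ref{ftsect3} was proved.
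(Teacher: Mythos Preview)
Your proposal is correct and follows exactly the paper's approach: the paper's proof is the single line ``By Theorem \ref{mainsec1},'' and you have simply made explicit the verification of its hypotheses (reciprocity from Proposition \ref{ortint}, the free exponential in $\opTsKer$ from Theorem \ref{ftsect3}, and condition (\ref{condition}) from Theorem \ref{conhold}). There is nothing to add or correct.
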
 
\begin{proof}
By Theorem \ref{mainsec1} whose conditions (\ref{condition}) and
(\ref{conditiontnteneq}) hold respectively by  
Theorem \ref{conhold} and Proposition \ref{PROPconditiontnteneq}.
\end{proof}

\subsection{
\normalsize Characterising Limit $\oble{\msbf{A}}{\infty}$ in
$\T{\mc{I}}{\opTsKer}$} \label{sect4.5}
This subsection is concerned with a concrete representation of
the free exponential of 
$\T{\mc{I}}{\opTsKer}$ guaranteed abstractly in Corollary
\ref{corconhold} above.
Using the characterisation of the equalisers in 
Section \ref{sect4.3} globally over natural numbers $n$,
a characterisation
of the limit $\oble{\msbf{A}}{\infty}$ 
is obtained in Theorem \ref{charlim} in direct terms of
$\msbf{A}_p$ within $\T{\mc{I}}{\opTsKer}$. 
The limit characterisation directly leads to the coincidence of the free
exponential with the exponential of
the linear comonad in our preceding study \cite{HamLEC}. 

Applying Proposition \ref{proplimT} to $\cC=\opTsKer$,
whereby the set $(\oble{\msbf{A}}{\infty})_p$
is specified using  the description (\ref{factotau}) 
in Theorem \ref{limTK} of 
the mediating morphism $x_\infty$ for $\opTsKer$, we have;
\begin{align}
(\oble{\msbf{A}}{\infty})_p =
\ccrc{ \left \{
\begin{aligned} 
 \, \,  x_\infty = \textstyle \underset{k \in \mathbb{N}}{\bigwith} 
\, (\pr_l^{(k)} \Comp x_k) \, \, 
\mid \, \{ x_n: \msbf{I} \longrightarrow \oble{\msbf{A}}{n} \}_n 
\, \, \mbox{is } \\
\, \, \mbox{a cone to the diagram $\{ p_{n+1,n} \}_n$ in 
$\T{\mc{I}}{\opTsKer}$} 
\end{aligned}
\right \} } \label{xinf}
\end{align}

\begin{thm}[characterising the limit $\oble{\msbf{A}}{\infty}$
in $\T{\mc{I}}{\opTsKer}$] \label{charlim}
The following three subsets of $\opTsKer(\mc{I}, \oble{\mc{A}}{\infty})$
all coincide:
\begin{enumerate}
 \item[(i)] $(\oble{\msbf{A}}{\infty})_p$
\item[(ii)]  
$\begin{aligned}
\textstyle \ccrc{\left(
\bigcup_{n \in \mathbb{N}} \{  \underset{k \in \mathbb{N}}{\bigwith}
(\frac{f_1 + \cdots + f_n}{n})^{(k)} \mid 
	      \, \, \forall i \, f_i \in \msbf{A}_p  \}
\right)}
\end{aligned}$
\item[(iii)]
$\begin{aligned}
\ccrc{\{ \, \,  \textstyle
 \underset{k \in \mathbb{N}}{\bigwith} g^{(k)}
 \, \, \mid 
g \in 
\msbf{A}_p  \} }
\end{aligned}$
\end{enumerate}
\end{thm}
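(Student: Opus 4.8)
The plan is to prove the triple coincidence by a cycle of inclusions, say (iii) $\subseteq$ (ii) $\subseteq$ (i) $\subseteq$ (iii), exploiting that each of the three sets is already of the form $\ccrc{(-)}$, so by monotonicity of $\ccrc{(-)}$ it suffices to compare the generating sets inside the closure (and, where needed, to show a generating set already lies inside one of the closed sets). The inclusion (iii) $\subseteq$ (ii) is immediate: taking $f_1 = \cdots = f_n = g$ in the $n$-indexed family of (ii) gives $\bigwith_k (\frac{ng}{n})^{(k)} = \bigwith_k g^{(k)}$, so every generator of (iii) is a generator of (ii). For (ii) $\subseteq$ (i), I would use Proposition~\ref{corgene}(ii) together with the concrete description of $\oble{f}{\infty}$ in Proposition~\ref{actnlim}(ii): for a single morphism $f\in\msbf{A}_p$ the element $\bigwith_k f^{(k)}$ is (up to the iso $I\cong I^{(k)}$) exactly $\oble{f}{\infty}\Comp\epsilon_\infty$, which belongs to $(\oble{\msbf{A}}{\infty})_p$ by Remark~\ref{remproplimT}. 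More generally, $\bigwith_k (\frac{f_1+\cdots+f_n}{n})^{(k)}$ should be recognised as $\bigwith_k (\pr_l^{(k)}\Comp x_k)$ for the cone $x_k = \eq\backslash s_k(\bar f_1\otimes\cdots)\ldots$ — precisely the barycentric generator of $(\oble{\msbf{A}}{k})_p$ from Proposition~\ref{corgene}(ii) — once one checks via Lemma~\ref{pnk} that these $x_k$ form a cone to $\{p_{n+1,n}\}_n$; then (\ref{xinf}) places $\bigwith_k(\frac{f_1+\cdots+f_n}{n})^{(k)}$ inside $(\oble{\msbf{A}}{\infty})_p$.

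The remaining and hardest inclusion is (i) $\subseteq$ (iii). Here I would argue at the level of orthogonals rather than generators: it suffices to show that any $\nu\in\opTsKer(\oble{\mc{A}}{\infty},\mc{I})$ orthogonal to every $\bigwith_k g^{(k)}$ with $g\in\msbf{A}_p$ is already orthogonal to every element of $(\oble{\msbf{A}}{\infty})_p$, i.e. $\crc{(\text{iii's generators})}\subseteq\crc{((\oble{\msbf{A}}{\infty})_p)}$, and then close both sides. Using the description (\ref{xinf}) of $(\oble{\msbf{A}}{\infty})_p$ as a closure of mediating morphisms $x_\infty=\bigwith_k(\pr_l^{(k)}\Comp x_k)$ of cones in $\T{\mc{I}}{\opTsKer}$, and Lemma~\ref{imageAinfp} bounding $p_{\infty,m}\Comp(\oble{\msbf{A}}{\infty})_p$ by $\bigcap_{m\le n}\ccrc{\{p_{n,m}\Comp\bra{f}{n}\mid f_i\in\msbf{A}_p\}}$, the Lemma~\ref{pnk} formula rewrites $p_{n,m}\Comp\bra{f}{n}$ as an average $\frac{(n-m)!}{n!}\eq\backslash\sum_{\iota}\bar f_{\iota(1)}\otimes\cdots\otimes\bar f_{\iota(m)}$ of monomials in the $f_i$. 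The crucial point — the analogue of the barycenter/combinatorial lemma of Crubill\'e et al.~\cite{Crubille} — is that such symmetric averages over all $f_i\in\msbf{A}_p$ generate, under $\ccrc{(-)}$, the same set as the "diagonal" generators $\eq\backslash s_m(g^{\otimes m})$ with a single $g$; one extracts $g$ as (a suitable scalar multiple of) $\frac{f_1+\cdots+f_n}{n}$ and lets $n\to\infty$, invoking the downward-closedness Lemma~\ref{downclosed} to absorb the error terms. This is where I expect the real work: controlling, uniformly in $k$, that the componentwise powers $(\frac{f_1+\cdots+f_n}{n})^{(k)}$ approximate from below (in the pointwise order on measurable functions, hence inside the double-closed set by Lemma~\ref{downclosed}) the generic cone component $\pr_l^{(k)}\Comp x_k$, so that passing to the limit over the cone index is legitimate.

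To make the $n\to\infty$ passage rigorous I would mirror the technique of Section~\ref{sect4.2}: test against an arbitrary $\nu\in\crc{(\text{iii})}$, use reciprocity (Proposition~\ref{ortint}) to move $\nu$ across $p_{\infty,m}$ and the barycenter $s_m$ (as in Lemma~\ref{gencomcc} and Lemma~\ref{eqpresort}), reducing the orthogonality $\ort{x_\infty}{\oble{\mc{A}}{\infty}}{\nu}$ to a family of integral inequalities $\int f_n\,d\nu_m\le 1$ for the finite-level measures $\nu_m$ of Definition~\ref{nun}, and finally apply the monotone convergence Theorem~\ref{MC} (exactly as in Proposition~\ref{approxnun}) to conclude $\int x_\infty\,d\nu\le 1$. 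The monotonicity needed for the monotone convergence step is supplied by the fact that the barycentric generators at level $m$ dominate their images under $p_{n,m}$ for $n>m$ together with Lemma~\ref{downclosed}. Once $\crc{(\text{iii})}\subseteq\crc{((\oble{\msbf{A}}{\infty})_p)}$ is established, applying $\crc{(-)}$ once more and using $(\oble{\msbf{A}}{\infty})_p=\ccrc{((\oble{\msbf{A}}{\infty})_p)}$ (double orthogonality) and (iii) $=\ccrc{(\text{iii's generators})}$ yields (i) $\subseteq$ (iii), closing the cycle. The main obstacle, as flagged, is the combinatorial identification of symmetric multi-function averages with single-function diagonal generators at every finite level simultaneously and compatibly with the projections $p_{n,m}$ — i.e. the limit-level version of the Crubill\'e–Ehrhard–Pagani–Tasson barycenter argument.
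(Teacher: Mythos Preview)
Your cycle $(iii)\subseteq(ii)\subseteq(i)\subseteq(iii)$ is the reverse of the paper's $(i)\subset(ii)\subset(iii)\subset(i)$, but since $\frac{f_1+\cdots+f_n}{n}\in\msbf{A}_p$ by linearity of the pairing, the generating sets of (ii) and (iii) are literally equal, so the hard inclusion is the same either way. One local error: your step $(ii)\subseteq(i)$ via a ``barycentric cone'' $x_k=\eq\backslash s_k(\bar f_1\otimes\cdots)$ does not work. At level $k$ the element $(\frac{f_1+\cdots+f_n}{n})^{(k)}$ sums over \emph{all} maps $[k]\to[n]$, whereas (via Lemma~\ref{pnk}) the barycentric expression sums only over injections; they differ by the repeat terms, and for $k>n$ there is no barycenter of $n$ functions at all, so no cone is produced. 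The one-line fix is the route you already gave for a single $f$: with $g=\frac{f_1+\cdots+f_n}{n}\in\msbf{A}_p$, Remark~\ref{remproplimT} places $\bigwith_k g^{(k)}$ directly in $(\oble{\msbf{A}}{\infty})_p$. This is the paper's $(iii)\subset(i)$.

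For the hard step your outline has the right ingredients (Lemma~\ref{imageAinfp}, Lemma~\ref{pnk}, the measures $\nu_m$, Proposition~\ref{approxnun}) but misidentifies the mechanism. The paper does \emph{not} show that symmetric multi-$f$ averages and diagonal single-$g$ generators have the same $\ccrc{(-)}$ at each level $m$; and Lemma~\ref{downclosed} cannot ``absorb the error terms'', because the relevant bound exceeds~$1$. What actually happens is the pointwise inequality $(\bar f_1+\cdots+\bar f_n)^{\otimes m}\ge\sum_{\iota:[m]\hookrightarrow[n]}\bar f_{\iota(1)}\otimes\cdots\otimes\bar f_{\iota(m)}$ (expand and drop non-injective terms), which with Lemma~\ref{pnk} yields, for $\nu$ orthogonal to the generators and its associated $\nu_m$,
\[
\inpro{\,p_{n,m}\Comp\bra{f}{n}\,}{\oble{A}{m}}{\,\nu_m\,}\ \le\ \frac{n^m(n-m)!}{n!}\,.
\]
This bound is $>1$ for finite $n$ but tends to $1$ as $n\to\infty$ with $m$ fixed; since Lemma~\ref{imageAinfp} lands in the intersection over all $n\ge m$, the limit gives $\nu_m\in\crc{(p_{\infty,m}\Comp(\oble{\msbf{A}}{\infty})_p)}$, and Proposition~\ref{approxnun} (whose monotone convergence concerns truncations of the integrand, not barycenters) finishes. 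Make this explicit combinatorial limit the centrepiece of your hard step and the argument goes through as in the paper.
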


\begin{proof} Circular inclusions 
$(i) \subset (ii) \subset (iii) \subset (i)$
are shown in the proof.

\noindent ($(ii) \subset (iii)$) Obvious: For all $n$, 
$\frac{f_1 + \cdots + f_n}{n} \in \msbf{A}_p$ as
 $\forall i \,  \inpro{f_i}{}{\nu}
\leq 1 \Rightarrow \inpro{\frac{f_1 + \cdots + f_n}{n}}{}{\nu} \leq 1$.

\noindent ($(iii) \subset (i)$) 
Straightforward because the generators for (i)
contain those for (iii). That is, $\{ \oble{g}{n}:
\msbf{I} \longrightarrow \oble{\msbf{A}}{n}\}_n$ forms a cone
whose mediating morphism belonging to (\ref{xinf})
is described by the specific form as follows:
\begin{align*}
\textstyle \underset{k \in \mathbb{N}}{\mathlarger{\&}} (\pr_l^{(k)} \Comp \oble{g}{k}) & = \textstyle
\underset{k \in \mathbb{N}}{\mathlarger{\&}} (\pr_l^{(k)} \Comp (g \& I)^{(k)}) 
 \textstyle = \underset{k \in \mathbb{N}}{\mathlarger{\&}} (\pr_l \Comp (g \& I))^{(k)} =
\underset{k \in \mathbb{N}}{\mathlarger{\&}} g^{(k)}. 
\end{align*}

\noindent ($(i) \subset (ii)$) Since both $(i)$ and $(ii)$ are double
 orthogonal homsets in $\opTsKer$, it suffices to prove
\begin{align}
\textstyle
\crc{ (
(\oble{\msbf{A}}{\infty})_p)}
\supset 
\crc{ \left(
\bigcup_{n \in \mathbb{N}} \{ \underset{k \in \mathbb{N}}{\bigwith} 
(\frac{f_1 + \cdots + f_n}{n})^{(k)} \mid 
	      \, \, \forall i \, f_i \in \msbf{A}_p  \}
\right)} \label{1impl2}
\end{align}
Take an arbitrary measure $\nu$ from RHS of (\ref{1impl2}).
Take any generator $x_\infty \in (\oble{\msbf{A}}{\infty})_p$
of (\ref{liminf}) of Lemma \ref{proplimT} 
in order to show $\ort{\nu}{}{x_\infty}$. By Proposition  \ref{approxnun} 
\begin{align*}
\inpro{x_\infty}{\oble{A}{\infty}}{\nu}
= \lim_{k \to \infty}
\inpro{x_\infty}{\oble{A}{\infty}}{\nu_k \Comp p_{\infty, k}}
\stackrel{reciproc}{=}
\lim_{k \to \infty}
\inpro{p_{\infty, k} \Comp x_\infty}{\oble{A}{m}}{\nu_k} 
\end{align*}
On the other hand by the second assertion of Lemma \ref{lemdistlim} (with $\mathbb{B}=
\msbf{I}$)  
\begin{align*}
& \ort{ 
\frac{m!}{m^k (m-k)!}
\nu_k}{}{ p_{\infty, k} \Comp x_\infty} \in  p_{m, k} \Comp
p_{\infty, m} \Comp  (\oble{\msbf{A}}{\infty})_p \subset
p_{m, k} \Comp (\oble{\msbf{A}}{m})_p
\end{align*}
This yields the following inequality, followed by the equality
(\ref{mkconv}).
\begin{align*}
\lim_{k \to \infty}
\inpro{p_{\infty, k} \Comp x_\infty}{\oble{A}{m}}{\nu_k} 
\leq \lim_{k \to \infty} \frac{m^k (m-k)!}{m!} = 1 
\end{align*}

\end{proof}

\begin{rem}[On Theorem \ref{charlim}] \label{instcharlim}
{\em Given that each generator of (iii)
is a measurable function on the measurable space $\oble{\mc{A}}{\infty}
=\underset{n \in \mathbb{N}}{\bigwith} \mc{A}^{(n)}$,
its instantiation at each point $(n, x_1 \cdots x_n)$
from the $n$-th componential space with $x_1 \cdots x_n \in \mc{A}^{(n)}$ is explicitly calculated as follows:
\begin{align*}
    (\textstyle \underset{k \in \mathbb{N}}{\bigwith} 
\, g^{(k)})(I, (n, x_1 \cdots x_n)) &   = 
g^{(n)}(I, x_1 \cdots x_n) \\
& 
=  
g^{\otimes n}(F^{-1}(I^{(n)}), (x_1, \ldots, x_n)) \\
& 
= 
g^{\otimes n}(I^{\otimes n}, (x_1, \ldots , x_n)) \tag*{
by $F^{-1}(I^{(n)}) = I^{\otimes n}$}\\
& 
= \textstyle \prod_{i=1}^n g(I, x_i)
\end{align*}
}\end{rem}

\smallskip

The exponential constructed in the present paper coincides with
that for linear exponential command independently studied in our preceding 
\cite{HamLEC} (free from  Melli\`{e}s-Tabareau-Tasson construction).
\begin{cor} \label{corinstcharlim}
The free exponential of $\T{\mc{I}}{\opTsKer}$
coincides with the exponential structure 
for the linear exponential comonad of the tight (double glueing)
category ${\bf T}(\opTsKer)$ in \cite{HamLEC}. 
\end{cor}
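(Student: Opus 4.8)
The plan is to identify the two exponential structures level by level, reducing everything to the explicit coordinate descriptions already obtained in this section. First I would recall from \cite{HamLEC} the precise data carried by its exponential on $\GopTsKer$: the underlying measurable space of ${!}\msbf{A}$, the measurable-function component of the orthogonality pair, the counit (dereliction) ${!}\msbf{A}\longrightarrow\msbf{A}$, the comultiplication (digging) ${!}\msbf{A}\longrightarrow{!}{!}\msbf{A}$, the commutative comonoid multiplication ${!}\msbf{A}\longrightarrow{!}\msbf{A}\otimes{!}\msbf{A}$, and the Seely isomorphisms ${!}\msbf{A}\otimes{!}\msbf{B}\cong{!}(\msbf{A}\,\&\,\msbf{B})$; all of these are presented there by explicit Dirac/convolution formulas over $\opTsKer$.

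Secondly, the underlying measurable spaces already agree: Theorem~\ref{limTK} identifies $\oble{\mc{X}}{\infty}$ with the space ${!}\,\mc{X}$ of Definition~\ref{defexob}, which is verbatim the carrier used in \cite{HamLEC}. The orthogonality-pair components agree as well. By Theorem~\ref{charlim}(iii) we have $(\oble{\msbf{A}}{\infty})_p=\ccrc{\{\,\underset{k \in \mathbb{N}}{\bigwith} g^{(k)}\mid g\in\msbf{A}_p\,\}}$, and by Remark~\ref{instcharlim} the generator $\underset{k \in \mathbb{N}}{\bigwith} g^{(k)}$ evaluated at a point $(n,x_1\cdots x_n)$ of the $n$-th component equals $\prod_{i=1}^{n}g(I,x_i)$. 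This is exactly the promotion formula by which \cite{HamLEC} generates the point-set of its exponential, and on both sides the point-set is the closure $\ccrc{(\,\cdot\,)}$ of this same generating family with respect to the same orthogonality, namely the Lebesgue-integral orthogonality of Proposition~\ref{ortint}. Hence the objects ${!}\msbf{A}$ coincide, first in $\T{\mc{I}}{\opTsKer}$ and therefore in $\GopTsKer$.

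It then remains to match the comonad structure maps. The cleanest route invokes \emph{uniqueness}: by Melli\`{e}s-Tabareau-Tasson the data (E$_A$), (L$_A$) together with the two distributivity clauses of Definition~\ref{MTTfex} determine the free commutative $\otimes$-comonoid on $\msbf{A}$, and this free comonoid canonically induces the whole linear-exponential-comonad structure, uniquely up to canonical isomorphism; so it suffices to exhibit the \cite{HamLEC} exponential as a linear exponential comonad whose underlying comonoid is free, i.e.\ to verify that its counit and comonoid multiplication satisfy the universal property of the limit in Theorem~\ref{limTK} (equivalently, of Definition~\ref{MTTfex}) --- and this is forced once one observes, using Proposition~\ref{actnlim}, that the comonoid multiplication of \cite{HamLEC} factors through the projections $p_{\infty,n}$ of (\ref{pinf}) exactly as the mediating morphism (\ref{factotau}) prescribes. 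Alternatively one computes the \cite{HamLEC} structure maps directly in coordinates on $\biguplus_k A^{(k)}$ and checks them against the maps induced by Definition~\ref{MTTfex} and Definition~\ref{actmor}: the projections $p_{\infty,n}$, the functorial action $\oble{\kappa}{\infty}$ of Proposition~\ref{actnlim}, and the mediating morphism of (\ref{factotau}); each comparison is a Dirac/convolution computation of the kind performed throughout Sections~\ref{sect3} and~\ref{sect4}.

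The main obstacle I expect is the bookkeeping of the second route: matching the digging and the Seely isomorphisms forces one to re-index multisets $x_1\cdots x_n\in A^{(n)}$ as multisets of multisets and as pairs of multisets, with the relevant universal maps coming from the Melli\`{e}s-Tabareau-Tasson limit, which is notationally heavy. I therefore favour the uniqueness route: once the carriers and the comonoid multiplications are matched, uniqueness of the free commutative comonoid in \cite{MTT} promotes the agreement to the entire linear-exponential-comonad structure, and combining this with Corollary~\ref{corconhold} yields the asserted coincidence with \cite{HamLEC}.
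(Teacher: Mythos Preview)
Your second paragraph contains exactly the argument the paper gives: by Theorem~\ref{charlim}(iii) and Remark~\ref{instcharlim} the generators of $(\oble{\msbf{A}}{\infty})_p$ evaluate pointwise as $\prod_{i=1}^{n} g(I,x_i)$, and this is precisely the natural transformation $\mathsf{k}:\opTsKer(\mc{I},(-))\longrightarrow\opTsKer(\mc{I},(-)_e)$ of \cite{HamLEC}, Definition~4.4, which is the datum from which \cite{HamLEC} (Section~4.1) builds its glued exponential. The paper's proof stops there: since both lifted exponentials sit over the same $\opTsKer$ exponential and are obtained by closing the \emph{same} generating family under the \emph{same} orthogonality, they coincide.

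Your paragraphs~3--4 (matching counit, digging, Seely maps either by direct coordinate computation or by invoking uniqueness of the free commutative comonoid) are not wrong, but they are extra work the paper does not perform and does not need. The paper's economy comes from recognising that in \cite{HamLEC} the \emph{entire} exponential structure of the double glueing is produced from the single natural transformation $\mathsf{k}$ (making $\opTsKer(\mc{I},(-))$ linearly distributive); hence identifying $\mathsf{k}$ with the generators of Remark~\ref{instcharlim} already identifies the full structures, without separately inspecting the comonad maps. Your uniqueness-of-free-comonoid route would also work, but it requires you to independently verify that the \cite{HamLEC} comonoid is free on $\msbf{A}$, which is essentially re-proving what \cite{HamLEC} packages into $\mathsf{k}$; the paper simply cites that packaging.
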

\begin{proof}
The instantiation of Remark \ref{instcharlim} coincides with 
the natural transformation $\sf{k} : \opTsKer
(\mc{I}, (-))  \longrightarrow \opTsKer
 (\mc{I}, (-)_e)$ defined in Definition 4.4 of \cite{HamLEC} making $\opTsKer (\mc{I}, (-))$ linear
 distributive, where $(~)_e$ is the exponential action directly formulated in
 \cite{HamLEC}  using the exponential measurable
 spaces in \cite{HamLEC}.
See Section 4.1 of \cite{HamLEC} how the natural transformation
 $\sf{k}$ (Definition 4.4 of \cite{HamLEC}) yields the exponential structure
of the tight double glueing.
\end{proof}

\begin{rem}[Comparison to Crubill\'{e} et al 
\cite{Crubille}'s characterisation of
the limit for $\Pcoh$]{\em
As seen in Corollary \ref{corinstcharlim},
our limit characterisation in this subsection directly 
coincides with the linear exponential comonad in our preceding \cite{HamLEC}.
In contrast in \cite{Crubille} for probabilistic coherent
spaces (shown to be a discretisation of 
$\T{\mc{I}}{\opTsKer}$ in Section \ref{sect4.4} below),
Crubill\'{e} et al first employed a different
characterisation (from our (ii) and (iii) of
Theorem \ref{charlim}) of the free exponential, and 
then separately showed its equivalence to the original exponential of
\cite{DE}.
Accordingly in $\T{\mc{I}}{\opTsKer}$,
our methodology of Theorem \ref{charlim} can accommodate 
their characterisation
$\dbra{f}{n}$ giving an inverse image of
$p_{n,m} \Comp \eq \backslash s_n (\bar{f}_1 \otimes \cdots \otimes
 \bar{f}_n)$ under $p_{\infty,m}$. 
See Appendix \ref{accodbra} how to accommodate
their characterisation into our Theorem \ref{charlim}.}\end{rem}

\subsection{\normalsize Discretisation} \label{sect4.4}
The purpose of this final subsection is to show that the category of probabilistic coherent spaces arises from $\T{\mc{I}}{\opTsKer}$ by discretisation.
In order to see this precisely, we start with defining a continuous subcategory.  The definition simply employs a generalisation of
the technical conditions (non-zero and 
boundedness) imposed on each object of
probabilistic coherence spaces \cite{DE}.

\begin{defn}[The subcategory $\Sub$ of positively non-zero bounded objects]{\em 
The subcategory $\Sub$ of $\T{\mc{I}}{\opTsKer}$
consists of the objects 
$\msbf{A}=(\mc{A}, \msbf{A}_p)$ such that (\ref{nonzeroboun}) holds for all $a \in A$,
where $A$ is the underlying set of $\mc{A}$.
\begin{eqnarray}
0 < \sup \msbf{A}_p(a) := \sup  \{ f(a) \mid f \in \msbf{A}_p \} < \infty
\label{nonzeroboun}   \end{eqnarray}
Note each $f \in \msbf{A}_p$ is an element of $
\opTsKer (\mc{I}, \mc{A})$,
hence a measurable function on $\mc{A}$. \\
In what follows, for an object $\msbf{A}$ of the subcategory
and $a \in A$, 
$\C{\msbf{A}}{a}$ denotes $\sup \msbf{A}_p(a)$ belonging to $(0, \infty)$. 
}\end{defn}

\begin{prop} \label{submonofree}
\begin{itemize} 
 \item[(i)]
$\Sub$ is a monoidal subcategory with cartesian product of
$(\T{\mc{I}}{\opTsKer}, \otimes, \msbf{I})$ with the $\&$.
\item[(ii)]
The free exponential for $\T{\mc{I}}{\opTsKer}$
of Corollary \ref{corconhold}
becomes by the restriction that of the subcategory $\Sub$.
\end{itemize}

\end{prop}
\begin{proof}
(i) The both products are shown to be preserved
inside the subcategory.

\noindent (monoidal product $\otimes$) \\
First, $\msbf{I}$ is an object of the subcategory as any element $\crc{\{ \operatorname{Id}_I \}}$ is subMarkov in an obvious sense. Second, we prepare the following claim:
For any object $\msbf{A}$ in $\Sub$ and $a \in A$,
$\C{\msbf{A}}{a}^{-1} \delta_a \in \crc{(\mathbb{A}_p)}$, where $\delta_a$
is the Dirac delta measure. The clam is valid
as $\forall f \in \mathbb{A}_p$, $\inpro{f}{}{\C{\msbf{A}}{a}^{-1} \delta_a} = \C{\msbf{A}}{a}^{-1} \int f d \delta_a =
 \C{\msbf{A}}{a}^{-1} f(a) \leq 1$.

We show that 
\begin{eqnarray*}
0 \not = \C{A \otimes B}{(a,b)} \leq \C{\msbf{A}}{a} \C{\msbf{B}}{b} 
\end{eqnarray*}
(nonzero) Obvious:
Given any $(a,b)$, we can take $f \in \mathbb{A}_p$ and $g \in \mathbb{B}_p$
which are non zero at $a$ and $b$, respectively. then $f \otimes g \in 
\in \mathbb{A}_p \otimes \mathbb{B}_p$ is non zero at $(a,b)$.

\noindent 
(inequality) For any measures $\mu \in \crc{(\mathbb{A}_p)}$ and $\tau \in \crc{(\mathbb{B}_p)}$, Fubini-Tonelli
$\int_{A \times B} (f \otimes g) d(\mu \otimes \tau) = 
(\int_A f d \mu)(  \int_B g d \tau)$ assures 
$\mu \otimes \tau \in \crc{(\mathbb{A}_p \otimes\mathbb{B}_p) }$. Thus in particular taking
Dirac measures $\delta_a$ and $\delta_b$ divided by the scalars in the claim,
\begin{eqnarray*}
\C{\msbf{A}}{a}^{-1} \C{\msbf{B}}{b}^{-1} \delta_{(a,b)} \in \crc{(\mathbb{A}_p \otimes\mathbb{B}_p)},
 \end{eqnarray*}
which implies for all $h \in \ccrc{(\mathbb{A}_p \otimes\mathbb{B}_p )}$,
$h(a,b)= \textstyle \int_{A \times B} h \, d \delta_{(a,b)} \leq  \C{\msbf{A}}{a} \C{\msbf{B}}{b}$.
 
\smallskip
\noindent (cartesian product) Obvious as
$\C{\&_{\rm i \in I} A_i}{(i,a)}=
\C{A_i}{a}$ for each $i \in I$ and $a \in A_i$.

\bigskip

\noindent (ii)
In $\opTsKer$, any
$\msbf{a} \in \oble{\mc{A}}{\infty}= 
\underset{k \in \mathbb{N}}{\bigwith} \mc{A}^{(k)}$
is of the form $(n, a_1 \cdots a_n)$ with 
$a_1 \cdots a_n \in \mc{A}^{(n)}$ for certain $n$.
Then it suffices to show the following
as the RHS is finite from (i) by the definition of $\oble{\msbf{A}}{n}$. 
\begin{eqnarray*}
0 \not = \C{\oble{\msbf{A}}{\infty}}{(n, a_1 \cdots a_n)} 
\leq \C{\oble{\msbf{A}}{n}}{a_1 \cdots a_n} 
\end{eqnarray*}

\noindent (inequality)
By Proposition \ref{proplimT}, for any $x_\infty \in
(\oble{\msbf{A}}{\infty})_p$,  $x_\infty ((n, a_1 \cdots a_n))
= (p_{\infty, n} \Comp x_\infty) (a_1 \cdots a_n) =
x_n(a_1 \cdots a_n)$ with $x_n \in 
(\oble{\msbf{A}}{n})_p$. 
As $p_{\infty,n} \Comp (\oble{\msbf{A}}{\infty})_p
\subset (\oble{\msbf{A}}{n})_p$, the inequality is derived.

\smallskip

\noindent (nonzero) Similar as that for (i).
For any given $\msbf{a}=(n, a_1 \cdots a_n) \in \oble{A}{\infty}$,
we can take a function $f_i \in \msbf{A}_p$ whose value at $a_i$ is non zero.
Take $x_\infty = \underset{k \in \mathbb{N}}{\bigwith} g^{(k)}$ with 
 $g: =\frac{f_1 + \cdots + f_n}{n}$, which is an element of
$(\oble{\msbf{A}}{\infty})_p$.
Then $x_\infty (\msbf{a})=
\frac{1}{n^n}
\prod_{i} \sum_{j} f_j (a_i) \not = 0$ as each $f_i$ is $\zeroinf$-valued.
\end{proof}

\bigskip

Finally let us go to discretisation.

\begin{defn}{\em
 The discretisation $\TsKeromg$ of $\TsKer$
is the full subcategory of the countable measurable spaces
whose $\sigma$-fields are generated by the singleton subsets.
In the discretisation, the composition
in terms of the convolution (\ref{compTKer})
collapses to the products of matrices.
\begin{align}
\iota \Comp \kappa (x,C) = &
\sum_{y \in Y} \kappa(x, \{ y \}) \iota(y,C) \label{compTKeromg}
\end{align}
Obviously $\TsKeromg$ is closed under $\otimes$ and the product.}
\end{defn}

\begin{defn}[$\Pcoh, \otimes, \&$]~\\{\em 
\noindent (inner product)
$\inprocoh{x}{x'} := \sum_{a \in A} x_a x'_a$, for $x, x' \subseteq
\mathbb{R}_{+}^{A}$ with a countable set $A$.

\noindent (polar)
$P^\bot \! := \{ x' \in \mathbb{R}_+^A \mid \forall x \in P \, 
\inprocoh{x}{x'} \leq 1 \}$ for $P \subseteq \mathbb{R}_+^A$. 

\smallskip
The category $\Pcoh$ of the probabilistic coherent spaces is
 defined using the inner product and the polar:

\noindent (object)
$X=(\abs{\! X \!},\Po{X})$, 
where $\abs{\! X \!}$ is a countable set, $\Po{X} \subseteq
 \mathbb{R}_+^{\abs{\, X \,}}$ such that $\Po{X}^{\bot \bot} \subseteq
 \Po{X}$, and $0 <  \sup \{ x_a  \mid  x \in \Po{X} \}
< \infty$ for all $a \in \abs{\! X \!}$.

\smallskip

\noindent (morphism)
A morphism from $X$ to $Y$ is an element $u \in \Po{(X \otimes
Y^\bot)^\bot}$,
which can be seen as a matrix $(u)_{a \in \abs{X}, b \in \abs{Y}}$
of columns from $\abs{\! X \!}$ and of rows from $\abs{\! Y \!}$.
Composition is the product of two matrices such that
$(uv)_{a, c}= \sum_{b \in \abs{Y}} u_{a,b} v_{b,c}$
for $u: X \longrightarrow Y$ and $v: Y \longrightarrow Z$.

\smallskip

\noindent $\Pcoh$ has a monoidal product $\otimes$ and a cartesian product $\&$:

\noindent(monoidal product $\otimes$) \\
$X \otimes Y = (\abs{\! X \!} \times \abs{\! Y \!}, \{ x \otimes y \mid x \in
   \Po{X} \, \,  y \in \Po{Y}   \}^{\bot \bot})$. \\
For $u \in \Pcoh(X_1,Y_1)$ and $v \in \Pcoh(X_2,Y_2)$,
$u \otimes v \in \Pcoh(X_1 \otimes X_2,Y_1 \otimes Y_2)$ is 
$(u \otimes v)_{(a_1,a_2), (b_1,b_2)}= u_{a_1,b_1} v_{a_2,b_2}$.

\smallskip

\noindent (product $\&$)\\
$X_1 \& X_2 = (\abs{\! X_1 \!} \biguplus \abs{\! X_2 \!}, \{ x \in \mathbb{R}_+^{\biguplus_i
   \abs{X_i}} \mid  \forall i \, \, \pi_i(x) \in \Po{(X_i)}  \} )$, \\
where $\pi_i(x)_a$ is $x_{(i,a)}$.
$\Po{(X_1 \& X_2)}$ becomes automatically closed under the bipolar.

}\end{defn}

\begin{prop}
\label{propdisciso}
The two monoidal categories $\Subomg$ and 
$\Pcoh$ with cartesian products
are isomorphic.
\end{prop}
\begin{proof}
First, the positivly bounded and non-zero conditions on $\Pcoh$ objects is exactly the condition (\ref{nonzeroboun}) imposed for  $\Subomg$.
In the discretisation the symmetry arises 
$\operatorname{Hom}(\mc{I}, \mc{X})
=\operatorname{Hom}(\mc{X},\mc{I})$,
which means that measures and measurable
functions become indistinguishable.
Thus the both inner products are the same,
hence so are the objects in the both categories
as the double orthogonality and the bipolar define the same notion.
  The morphisms of the two categories are identical
as they are the matrices and their products.
The definitions of monoidal and (finite) products in the both categories
are directly observed to be the same.
\end{proof}

Moreover, the free exponential construction of $\opTsKer$
in Section \ref{sect3} and its lifting in Section \ref{sect4} 
are a two-layered continuous generalisation of
Crubill\'{e} et al's free exponential for $\Pcoh$ in \cite{Crubille}:
To be precise, directly from 
Proposition \ref{submonofree} (ii);
\begin{cor} \label{cordisciso}
The exponentials of $\Subomg$
and of $\Pcoh$ are isomorphic.
\end{cor}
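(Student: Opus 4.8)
The plan is to invoke the universal property of the free exponential: being the free commutative comonoid on the underlying object (equivalently, the structure determined uniquely by the four data of Definition \ref{MTTfex}), the free exponential is preserved by any equivalence — a fortiori any isomorphism — of symmetric monoidal categories with finite products. So the corollary reduces to three ingredients: (a) $\T{\mc{I}}{\opTsKeromg}$ carries the free exponential; (b) $\Pcoh$ carries the free exponential; (c) the isomorphism $\Phi : \T{\mc{I}}{\opTsKeromg} \xrightarrow{\ \cong\ } \Pcoh$ of Proposition \ref{propdisciso} is an isomorphism of symmetric monoidal categories with finite products.

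For (c) there is nothing new to do: Proposition \ref{propdisciso} already asserts exactly this, matching objects, morphisms (as matrices with matrix composition), $\otimes$, and $\&$ on both sides. For (b), I would cite Crubill\'e--Ehrhard--Pagani--Tasson \cite{Crubille}, whose main result is precisely that the exponential of $\Pcoh$ (in the sense of Danos--Ehrhard \cite{DE}) is the Melli\`es--Tabareau--Tasson free one. For (a), I would argue that the free exponential of $\T{\mc{I}}{\opTsKer}$ obtained in Corollary \ref{corconhold} restricts to the full subcategory $\T{\mc{I}}{\opTsKeromg}$ on countable measurable spaces whose $\sigma$-fields are generated by singletons. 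Concretely: if $\mc{A}$ is such a space, then $\mc{A}^{\otimes n}$ is again countable-and-discrete (finite products of such spaces are of this form), hence so is each pushforward $\mc{A}^{(n)}$ along the order-forgetting map $F$, since the image $A^{(n)}$ is countable and $\mc{A}^{(n)} = \{A \subseteq A^{(n)} \mid F^{-1}(A) \in \mc{A}^{\otimes n}\}$ contains every singleton; and finally $!\,\mc{A} = (\biguplus_k A^{(k)}, \bigwith_k \mc{A}^{(k)})$ is a countable coproduct of such spaces, hence again countable-and-discrete. Thus the equalisers $\oble{\mc{A}}{n}$ and the limit $\oble{\mc{A}}{\infty}$ of Section \ref{sect3}, together with their liftings in Section \ref{sect4}, all stay inside $\opTsKeromg$ and $\T{\mc{I}}{\opTsKeromg}$; since the forgetful functor and the universal arrows $\eq_{\mc A}$, $p_{\infty,n}$ are inherited verbatim, the data of Definition \ref{MTTfex} hold in the subcategory, so $\T{\mc{I}}{\opTsKeromg}$ has the free exponential, namely the restriction of the one in Corollary \ref{corconhold}.

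Combining: transport the free exponential of $\T{\mc{I}}{\opTsKeromg}$ along $\Phi$; by uniqueness of the free exponential (up to the canonical monoidal-product-and-finite-product-preserving iso of Definition \ref{MTTfex}) it must agree with the free exponential of $\Pcoh$ supplied by \cite{Crubille}. Hence the two exponentials are isomorphic, as claimed. I expect the only genuinely non-formal point to be step (a), the verification that the construction of Section \ref{sect3}--\ref{sect4} does not leave the discrete world — and even that is routine once one notes closure of countable discrete measurable spaces under finite $\otimes$, pushforward along $F$, and countable $\&$; the rest is an application of Proposition \ref{propdisciso} and a universal-property argument, as already flagged by the phrase ``directly from Proposition \ref{propdisciso}'' preceding the statement.
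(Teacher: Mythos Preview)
Your proposal is correct and follows essentially the same approach as the paper: invoke the universality of the free exponential together with the monoidal-with-products isomorphism of Proposition~\ref{propdisciso} and the result of \cite{Crubille} that the $\Pcoh$ exponential is free. The paper's proof is a one-liner to this effect; the closure argument you give for step (a) is exactly what the paper asserts (without details) in the sentence immediately preceding the corollary.
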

\begin{proof}
By the universality of the free exponential,
the free exponential of the former category
is isomorphic to that of the latter constructed in \cite{Crubille}.
\end{proof}
We end this section with making 
the isomorphism of Corollary \ref{cordisciso} explicit:
The following is the exponential of $\T{\mc{I}}{\opTsKeromg}$. 

\noindent (exponential on objects) \\
$! \, \msbf{X} = (! \, \mc{X} = \! 
\underset{k \in \mathbb{N}}{\mathlarger{\&}} \mc{X}^{(k)}
 , \, (\oble{\msbf{X}}{\infty})_p)$.
Every generator 
$\textstyle \underset{k \in \mathbb{N}}{\bigwith} 
\, g^{(k)} \in
(\oble{\msbf{X}}{\infty})_p$ is by Remark \ref{instcharlim},
\begin{align*}
(\textstyle \underset{k \in \mathbb{N}}{\bigwith} 
\, g^{(k)} )
 (I, (n, x_1 \cdots x_n)) 
= \prod_{i=1}^n f(I, x_i) 
\end{align*}

\noindent (exponential on morphisms)
For any $\kappa: \msbf{Y} \longrightarrow \msbf{X}$
(hence $\kappa : \mc{Y} \longrightarrow \mc{X}$), 
every $! \, \kappa: ! \, \msbf{Y} \longrightarrow ! \, \msbf{X}$ is
by Proposition \ref{actnlim},
\begin{align*}
 \oble{\kappa}{\infty} (\{ y_1 \cdots y_n \}, x_1 \cdots x_n) 
& = \kappa^{(n)} (\{ y_1 \cdots y_n \}, x_1 \cdots x_n) \\
& = \kappa^{\otimes n} ( \FI(\{ (y_1 \cdots y_n) \}), (x_1, \ldots ,x_n)) 
\textstyle  \\
& \textstyle = \kappa^{\otimes n} ( \biguplus_{\sigma \in
 \mathfrak{S}_n
 / S_{\vec{\msbf{x}}} } \{ (y_{\sigma(1)}, \ldots, y_{\sigma(n)} ) \},
 (x_1, \ldots ,x_n) ) \\
& \textstyle = \sum _{\sigma \in \mathfrak{S}_n / S_{\vec{\msbf{x}}}
} \kappa^{\otimes n} ( \{ (y_{\sigma(1)}, \ldots, y_{\sigma(n)}) \},
 (x_1, \ldots ,x_n)) \\
& \textstyle = \sum _{\sigma \in \mathfrak{S}_n / S_{\vec{\msbf{x}}}
} \prod_{i=1}^n 
\kappa( y_{\sigma(i)}, x_i)
\end{align*}
where $S_{\vec{\msbf{x}}}:=\{ \sigma \in \mathfrak{S}_n \mid 
(x_1, \ldots , x_n) = (x_{\sigma(1)}, \ldots , x_{\sigma(n)})
\}$ is the stabiliser subgroup fixing $\vec{\msbf{x}}=(x_1, \ldots ,
x_n)$. Note that the actions $\sigma$ ranging over
$\mathfrak{S}_n$ modulo $S_{\vec{\msbf{x}}}$
are independent of the ordering 
$\vec{\msbf{x}}$ of $\msbf{x}$.

\smallskip

It is direct to see that the above exponential
action coincides with that in $\Pcoh$
(see \ref{exppcoh}) via the following isomorphisms:
For any set $X$, 
each multiset of size $n$ maps to the unique element
of $X^{(n)}$, which gives the isomorphisms:
\begin{align}
\mathcal{M}_{\rm fin}(X)  \cong
\underset{k \in \mathbb{N}}{\mathlarger{\&}} X^{(n)} 
&  \quad \quad [x_1, \ldots , x_n] \mapsto 
(n, x_1 \cdots x_n) \label{ordirmult}
\end{align}

\smallskip 

\noindent{\bf Future Directions} \\
We remain it a future work 
how to construct a certain monoidal closed structure inside $\TsKer$,
in comparison with the recent development of the higher order
probabilistic programming \cite{HKSY}, in particular with 
\cite{EPT,EPTLics} modelling probabilistic PCF.
We will pursue to relate our transition
kernels to the continuous denotational semantics
of measurable cones and measurable stable functions \cite{EPTpopl},
whose cartesian closed structure
is shown in \cite{CruLICS} to subsume that of $\Pcoh$.
The recent development of integral structure on measurable cones
in \cite{EhrGeo} is significant for this direction, where the closed structure is obtained by the categorical adjoint functor theorem.
From a different perspective, we intend to accommodate 
some probabilistic feed back (or iteration) in the monoidal
structure of s-finiteness, as addressed in \cite{Stat} and analysed
in GoI semantics for Bayesian programming \cite{DLH}.

\section*{Acknowledgment}
The author is deeply grateful to the anonymous reviewer for his 
eagle-eyed reading proofs in many rounds, which has prevented the paper from including specific technical defects.

\section{Appendix}
\renewcommand{\thesection}{A}

\subsection{\normalsize Omitted Proofs}

\subsubsection{Proof of Proposition \ref{protjc}} \label{approtjc}
\begin{proof}
First, to check the double orthogonality, it suffices to show
that $\msbf{A}_p \& \msbf{B}_p = \crc{U}$
for certain subset $U \subseteq \cC(A \& B, J)$.
The $U$ is shown to be given by $\crc{(\msbf{A}_p)} \!  \cdot \pr_l
\cup \crc{(\msbf{B}_p)}  \! \cdot \pr_r$.
In the following the reciprocity of $\bot$
is used wrt the left and right projections $\pr_l$ and $\pr_r$, respectively.

\noindent $(\subseteq)$
Take any $u \& v$ from (LHS) with $u \in \msbf{A}_p$ and  $v \in \msbf{B}_p$.
Then $\forall a \in  \crc{\msbf{A}_p} \, \big( 
\ort{\pr_l  (u \& v) = u }{A}{a} \Leftrightarrow  \ort{u \& v}{A \& B}{a \cdot \pr_l} \big)$
and $\forall b \in  \crc{\msbf{B}_p}
\, \big( 
 \ort{\pr_r  \cdot (u \& v) = u }{B}{b}
\Leftrightarrow \ort{u \& v}{A \& B}{b \cdot \pr_r} \big)$.

\noindent $(\supseteq)$
Take any $\psi$ from (RHS).
Then $\forall a \in \msbf{A}_{cp} \, 
\big( \ort{\psi}{A \& B}{a \cdot  \pr_l}
\Leftrightarrow
 \ort{\pr_l \cdot \psi}{A}{a} \big)$,
and 
$\forall b \in \msbf{B}_{cp} \, 
\big(\ort{\psi}{A \& B}{b \cdot  \pr_r}
\Leftrightarrow 
\ort{\pr_r \cdot \psi}{A}{b} \big)$.
Thus $\psi=\pr_l \cdot \psi \,  \& \,  \pr_r \cdot \psi$
belongs to (LHS). 

\smallskip
\noindent Second, it is direct that
the mediating morphisms for the product resides in $\T{J}{\cC}$:
Given $f : \msbf{C} \longrightarrow \msbf{A}$
and $g : \msbf{C} \longrightarrow \msbf{B}$,
for any $x \in \msbf{C}_p$
$(f \& g) \cdot x = f \cdot x \, \, \& \, \,  g \cdot x \in \msbf{A}_p \&
\msbf{B}_p$.
\end{proof}

\subsubsection{ Demonstration of Eqn.(\ref{square})
in Def \ref{actmor}} \label{apsquare}
\begin{proof}
By definition,
the left and the right squares and the outer most rectangle (with
the two bent horizontal arrows) commute.
This implies that any composition in the diagram
 starting from $\oble{A}{n+1}$
and ending at $\obwt{B}{n}$ defines the same map
equalising the $n!$-symmetries $\obwt{B}{n}$.
Thus there exists the unique morphism from $\oble{A}{n+1}$
to $\oble{B}{n}$ factoring the same map.
The uniqueness implies the commutativity of the middle square,
which is the assertion.
$$
\xymatrix
@C=12pt
{
\obwt{B}{n} &    \oble{B}{n} \ar[l]_{\eq} &  
\ar[l]_{p_{n+1,n}}
\oble{B}{n+1}  \ar[r]^{\eq}
& 
\obwt{B}{n+1} 
\ar@/_20pt/[lll]_{ \obwt{B}{n} \otimes \, \pr_r} \\
\obwt{A}{n} \ar[u]^{(f \cdot \pr_l \& I \cdot \pr_r)^{\otimes n}}
 &    \oble{A}{n} \ar[l]_{\eq} \ar[u]^{\oble{f}{n}} &  
\ar[l]_{p_{n+1,n}}
\oble{A}{n+1}  \ar[r]^{\eq} \ar[u]_{\oble{f}{n+1}}
\ar@{-->}[ul]_{\exists !}
& 
\obwt{A}{n+1} \ar[u]_{(f \cdot \pr_l \& I \cdot \pr_r)^{\otimes (n+1)}} 
\ar@/^20pt/[lll]^{ \obwt{A}{n} \otimes \, \pr_r}
}$$
\end{proof}

\subsubsection{Proof of Lemma \ref{ccrclem}} \label{apccrclem}
\begin{proof}{}
For any $h \in \ccrc{U}$, we shall show that $f \Comp h \in
 \ccrc{f(U)}$: For the assertion,
take any $v \in \crc{f(U)}$. Then for all $u \in U$,
$\ort{v}{B}{f \Comp u}$
iff
$\ort{v \Comp f}{B}{u}$ by the reciprocity.
This means 
$v \Comp f \in \crc{U}$,
hence
$\ort{h}{A}{v \Comp f}$.
Thus by the reciprocity
$\ort{f \Comp h}{B}{v}$,
which is the assertion.
\end{proof}

\subsubsection{Proofs of Theorem \ref{tencMS}} \label{proofdist}
All the following proofs generalise those of
Proposition \ref{eqTK}, Claims 1 of Theorem \ref{limTK}, respectively
in order to accommodate the tensor factor $\otimes \mc{Z}$
consistently using the double integration by Fubini-Tonelli.

\smallskip

\noindent{(Proof of (i))} \\
For any transition kernel $\tau$ of the
codomain $\mc{X}^{(n)} \otimes \mc{Z}$:
\begin{align*}
 (\eq_{\mc{X}} \otimes \mc{Z}) \Comp \tau (-, ((x_1, \ldots, x_n), z))
& =\textstyle  \int_{X^{(n)} \otimes Z} \tau (-, (\msbf{y}, y))
\, (\eq \otimes \mc{Z}) ( d (\msbf{y}, y), ((x_1, \ldots, x_n), z)) \\
& = \textstyle \int_{X^{(n)}}  \int_Z \tau (-, \msbf{y})
\, \eq  (d \msbf{y}, (x_1, \ldots x_n)) \, \delta (y, z)  \tag*{by FT} \\
& = \textstyle \int_{X^{(n)}}  \int_Z \tau (-, \msbf{y})
\, \delta  (d \msbf{y}, x_1 \cdots x_n) \, \delta (y, z) 
= \tau (-, (x_1 \cdots x_n, z))
\end{align*}

\smallskip


\smallskip

\noindent{(Proof of Claim 1 of (ii))}
It needs to prove for any $n$
and for any $(\msbf{z}, z) \in (X \uplus I)^{(n)} \times Z$, 
\begin{align*}
\tau (-, (G_{n, \infty} \otimes \mc{Z}) (\msbf{z}, z))  = \tau_n (-,(\msbf{z}, z)) 
\end{align*}
As each $\msbf{z} =(1, x_1) \cdots (1, x_{k}) (2, *) \cdots
 (2,*) \in (X \uplus I)^{n}$ for certain $k \leq n$,
the following starts with LHS and ends with RHS.
\begin{align*}
&
\tau(-, (k, (x_1 \cdots x_k, z))) \\
& =
(\pr_l^{(k)} \otimes \mc{Z})
 \Comp \tau_k (-, (x_1 \cdots x_k,z) ) \tag*{by (\ref{genfactotau})} \\
& =  \textstyle 
\int_{(X \uplus I)^{(k)} \times Z} 
\tau_k (-, (\msbf{y},y)) \, (\pr_l^{(k)} \otimes \mc{Z}) (d (\msbf{y}, y), (x_1 \cdots x_k, z)) \\
& =
\textstyle 
\int_{(X \uplus I)^{(k)}} \int_Z 
\tau_k (-, \msbf{y}) \, \delta (d \msbf{y}, (1, x_1) \cdots (1, x_k)) \,
\delta(dy, z)
\tag*{by FT and (\ref{pleft})} \\
& = 
\tau_k (-, ((1, x_1) \cdots (1, x_k), z)) \\
& = \textstyle 
\int_{(X \uplus I)^{(n)} \times Z}  
\tau_n (-, (\msbf{y}, y)) 
\, (p_{n, k} \otimes \mc{Z}) (d (\msbf{y}, y), 
((1, x_1) \cdots (1, x_k),z)) \tag*{
as $\tau_k = (p_{n, k} \otimes \mc{Z}) \Comp \tau_n$} \\
& =
\textstyle 
\int_{(X \uplus I)^{(n)}}  \int_Z
\tau_n (-, (\msbf{y}, y)) 
\, \delta (d \msbf{y}, 
(1, x_1) \cdots (1, x_k) (2, *) \cdots (2, *))
\, \delta(dy, z)
 \tag*{by FT and (\ref{pmn})} \\
& = \tau_n (-, ((1, x_1) \cdots (1, x_k) (2, *) \cdots (2, *), z)) 
\end{align*}

\subsection{\normalsize Accommodating \cite{Crubille}'s Characterisation
 $\dbra{f}{n}$ into the Limit
L$_{\msbf{A}}$ of $\T{\mc{I}}{\opTsKer}$ } \label{accodbra}
\begin{defn}[$\dbra{f}{n}$] \label{defdefdbra}
{\em For an object 
$\msbf{A}=(\mc{A}, \msbf{A}_p)$ in $\T{\mc{I}}{\opTsKer}$
and $f_i \in \msbf{A}_p$ with $i=1, \ldots ,n$,  the morphism
in $\opTsKer$
$$\dbra{f}{n}: \mc{I} \, \, \longrightarrow \, \,  \bigwith_{k \in \mathbb{N}} \mc{A}^{(k)}=\oble{\mc{A}}{\infty}$$
is defined 
by the following instantiation
at each $(k, a_1 \cdots a_k) \in \bigwith_{k \in \mathbb{N}} A^{(k)}$.
\begin{description}
\item[(a)] For $k \leq n$
\begin{align}
 \dbra{f}{n} (I,  (k, a_1 \cdots a_k)) & := {\textstyle \frac{1}{n^k}}
\sum_{\iota : \setone{k} \hookrightarrow  \setone{n}
} \prod_{i=1}^k f_{\iota(i)} (I, a_i) \label{defdbra} \\
& = {\textstyle \frac{1}{n^k}}
\sum_{\iota : \setone{k} \hookrightarrow  \setone{n}
} \prod_{i=1}^k \bar{f}_{\iota(i)} (I, (1, a_i)) 
\tag*{with $\bar{f}_i= f_i \& \mc{I}$} \nonumber \\
& = {\textstyle \frac{1}{n^k}}
\sum_{\iota : \setone{k} \hookrightarrow  \setone{n}
} 
\eq \backslash \bar{f}_{\iota(1)} \otimes \cdots \otimes \bar{f}_{\iota(k)}
(I, (1,a_1) \cdots (1,a_k)) \nonumber \\
& = 
\textstyle \frac{n!}{n^k (n-k)!} \, \, \,  p_{n,k} \Comp \bra{f}{n}
(I, (1, a_1) \cdots (1, a_k)) \tag*{by Lem \ref{pnk}} \nonumber 
\end{align}
That is, in particular letting the above $k$ be $n$,
by the definition of $p_{\infty,n}$ of (\ref{pinf}), 
\begin{align}
 p_{\infty, n} \Comp \dbra{f}{n}  = \frac{n!}{n^k (n-k)!} \bra{f}{n} \, : \,  \mc{I} \longrightarrow
 \oble{\mc{A}}{n} \label{pdbra}
\end{align}
Hence for general $k$ by composing $p_{n,k}$ to (\ref{pdbra}),
as $p_{n,k} \Comp p_{\infty, n}= p_{\infty, k}$,
\begin{align*}
& p_{\infty, k} \Comp \dbra{f}{n}
  = \frac{n!}{n^k (n-k)!} \, \, p_{n.k} \Comp \bra{f}{n} \, : \, \mc{I} \longrightarrow
 \oble{\mc{A}}{n} 
\end{align*}
\item[(b)] For $k > n$, 
$\dbra{f}{n} (I,  (k, a_1 \cdots a_k)) := 0$. 
That is,
$$\pr_k \Comp \dbra{f}{n} := 0 : \mc{I} \longrightarrow \mc{A}^{(k)}.$$
\end{description}
To sum up (a) and (b), the above instantiations at every $k$-th projected
component 
$\mc{A}^{(k)}$ 
yields the unique morphism from $\mc{I}$
to the product
$\underset{k \in \mathbb{N}}{\bigwith} \mc{A}^{(k)}$;
\begin{align*}
& \dbra{f}{n}:= 
 \bigwith_{k \leq n} \frac{n!}{n^{k} (n-k)!} \, \, \pr_l^{(k)} \Comp p_{n,k} \Comp
 \bra{f}{n} 
\, \, \with \, \,  
\bigwith_{k > n} 0, 
\end{align*}
where $\pr_l$ is the left projection $\mc{A} \& \mc{I}  \longrightarrow
 \mc{A}$ and $\bar{f}_i= f_i \& \mc{I}$. 
} \end{defn}
See the following diagram for 
$\dbra{f}{n}$ composed with the $k$-th projection  $\pr_k$ with $k \leq n$:
$$
\xymatrix{ 
\mc{A}^{(k)} \ar@{=}[rr] &    &  \mc{A}^{(k)} \\
\oble{\mc{A}}{k} \ar[u]^{(\pr_l)^{(k)}}
 & \ar[l]_{p_{n,k}} \oble{\mc{A}}{n} &
\ar[u]_{\pr_k} \ar[l]_{p_{\infty, n}}
\underset{k \in \mathbb{N}}{\bigwith} \mc{A}^{(k)} \\ 
&    &    \mc{I} \ar[u]_{\dbra{f}{n}}
\ar[ul]
\ar@{}[ull]^{
\frac{n!}{n^k (n-k)!}
\bra{f}{n}}}
$$

Then $\dbra{f}{n}$ in Definition \ref{defdefdbra}
is shown to provide generators for the homset
$(\oble{\msbf{A}}{\infty})_p$:
\begin{prop}
The double orthogonal homset
\begin{align}
\ccrc{\left(\displaystyle \bigcup_{n \in \mathbb{N}} \{
	     \dbra{f}{n} \mid \forall i \,  f_i \in \msbf{A}_p
       \}\right)} \label{dodbra}
\end{align}
 coincides with the equal subsets in Theorem \ref{charlim}.
\end{prop}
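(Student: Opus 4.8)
The plan is to prove that the double-orthogonal homset (\ref{dodbra}), which I will write as $D$, coincides with $(\oble{\msbf{A}}{\infty})_p$ — this being item (i) of Theorem \ref{charlim}, and equal there to (ii) and (iii). Since $D$ and $(\oble{\msbf{A}}{\infty})_p$ are both $\ccrc{(-)}$-closed subsets of $\opTsKer(\mc{I},\oble{\mc{A}}{\infty})$, by the Galois connection it suffices to establish the two inclusions, and each reduces to a statement about the generators of one of the two closures. Throughout I would work with the pointwise description of the measurable functions involved: for $\dbra{f}{n}$ this is Definition \ref{defdefdbra} (in particular $\dbra{f}{n}(I,(k,a_1\cdots a_k))=0$ for $k>n$), and for a product $\underset{k}{\bigwith}g^{(k)}$ it is Remark \ref{instcharlim}, namely $(\underset{k}{\bigwith}g^{(k)})(I,(m,a_1\cdots a_m))=\prod_{i=1}^m g(I,a_i)$.

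\textbf{First inclusion, $D\subseteq(\oble{\msbf{A}}{\infty})_p$.} I would show that every generator $\dbra{f}{n}$ already lies in $(\oble{\msbf{A}}{\infty})_p$; monotonicity of $\ccrc{(-)}$ together with closedness of $(\oble{\msbf{A}}{\infty})_p$ then yields the inclusion. Put $g_n:=(f_1+\cdots+f_n)/n$, so that $\underset{k}{\bigwith}g_n^{(k)}$ is one of the generators of $(\oble{\msbf{A}}{\infty})_p$ in Theorem \ref{charlim}(ii). Evaluating at $(k,a_1\cdots a_k)$: for $k\le n$ one gets $\dbra{f}{n}(I,(k,a_1\cdots a_k))=\frac{1}{n^k}\sum_{\iota:\setone{k}\hookrightarrow\setone{n}}\prod_{i=1}^k f_{\iota(i)}(I,a_i)$, which is dominated by $\prod_{i=1}^k g_n(I,a_i)=\frac{1}{n^k}\sum_{\phi:\setone{k}\to\setone{n}}\prod_{i=1}^k f_{\phi(i)}(I,a_i)$ since injections are a subset of all maps $\setone{k}\to\setone{n}$ and every summand is nonnegative; for $k>n$ one gets $\dbra{f}{n}(I,(k,a_1\cdots a_k))=0$. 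Hence $\dbra{f}{n}\le\underset{k}{\bigwith}g_n^{(k)}$ in the pointwise order, and Lemma \ref{downclosed} (downward closedness of double-orthogonal homsets of measurable functions in $\opTsKer$) gives $\dbra{f}{n}\in(\oble{\msbf{A}}{\infty})_p$.

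\textbf{Second inclusion, $(\oble{\msbf{A}}{\infty})_p\subseteq D$.} By the Galois connection (both sides closed) this is equivalent to $\crc{D}\subseteq\crc{((\oble{\msbf{A}}{\infty})_p)}$, and since $(\oble{\msbf{A}}{\infty})_p=\ccrc{\{\underset{k}{\bigwith}g^{(k)}\mid g\in\msbf{A}_p\}}$ by Theorem \ref{charlim}(iii) and $\crc{(\ccrc{V})}=\crc{V}$, it is enough to fix a measure $\nu$ with $\inpro{\dbra{f}{N}}{\oble{\mc{A}}{\infty}}{\nu}\le 1$ for all $N$ and all $f_i\in\msbf{A}_p$, and to prove $\inpro{\underset{k}{\bigwith}g^{(k)}}{\oble{\mc{A}}{\infty}}{\nu}\le 1$ for a fixed $g\in\msbf{A}_p$. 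The idea is to exhaust $\underset{k}{\bigwith}g^{(k)}$ by the generators $\dbra{f}{N}$ in which all $N$ arguments are taken equal to $g$: by Definition \ref{defdefdbra} such a generator has value $\frac{N!}{N^k(N-k)!}\prod_{i=1}^k g(I,a_i)=\bigl(\prod_{j=0}^{k-1}(1-j/N)\bigr)\prod_{i=1}^k g(I,a_i)$ at $(k,a_1\cdots a_k)$ when $k\le N$, and value $0$ when $k>N$, whereas $\underset{k}{\bigwith}g^{(k)}$ has value $\prod_{i=1}^k g(I,a_i)$ there. The coefficient $\prod_{j=0}^{k-1}(1-j/N)$ is nondecreasing in $N$ and tends to $1$, so these generators form a pointwise nondecreasing sequence of $\borel$-valued measurable functions on $\oble{\mc{A}}{\infty}$ with supremum $\underset{k}{\bigwith}g^{(k)}$. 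The monotone convergence theorem \ref{MC} then gives $\inpro{\underset{k}{\bigwith}g^{(k)}}{\oble{\mc{A}}{\infty}}{\nu}=\sup_N\inpro{\dbra{f}{N}}{\oble{\mc{A}}{\infty}}{\nu}\le 1$, as required, and combining the two inclusions finishes the proof.

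The argument is largely mechanical once the explicit pointwise formula for $\dbra{f}{n}$ (including its vanishing for $k>n$) is available; notably, no separate appeal to the approximation Proposition \ref{approxnun} is needed, because the monotone exhaustion of $\underset{k}{\bigwith}g^{(k)}$ by the $\dbra{f}{N}$ already supplies the only limiting step — and the combinatorial fact $\frac{N!}{N^k(N-k)!}\to1$ (the analogue at fixed $k$ of the $\frac{(n-m)!\,n^m}{n!}\to1$ limit used inside Theorem \ref{charlim}) is exactly what makes the supremum equal $\underset{k}{\bigwith}g^{(k)}$. The points that need care, rather than any real difficulty, are the bookkeeping around the closure operators (that $\crc{(-)}$ of a double-orthogonal closure reduces to $\crc{(-)}$ of its generators) and checking that the sequence fed into Theorem \ref{MC} genuinely is an increasing sequence of nonnegative measurable functions on $\oble{\mc{A}}{\infty}$.
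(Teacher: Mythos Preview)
Your proof is correct. The first inclusion $D\subseteq(\oble{\msbf{A}}{\infty})_p$ is exactly the paper's argument: the pointwise bound $\dbra{f}{n}\le\underset{k}{\bigwith}\bigl(\frac{f_1+\cdots+f_n}{n}\bigr)^{(k)}$ combined with Lemma~\ref{downclosed}.

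For the second inclusion you take a genuinely different, more elementary route than the paper. The paper shows $\crc{D}\subseteq\crc{((\oble{\msbf{A}}{\infty})_p)}$ by fixing $\nu\in\crc{D}$, passing through the push-forward approximations $\nu_m$ of Proposition~\ref{approxnun}, using (\ref{pdbra}) to deduce $\frac{n!}{n^k(n-k)!}\,\nu_n\in\crc{((\oble{\msbf{A}}{n})_p)}$ via the barycentric generators of Proposition~\ref{corgene}(ii), and then testing against an arbitrary $x_\infty\in(\oble{\msbf{A}}{\infty})_p$ before letting $n\to\infty$. You bypass all of this by specialising the generators to $f_1=\cdots=f_N=g$ and observing that the resulting sequence $N\mapsto\langle\!\langle g,\ldots,g\rangle\!\rangle$ is pointwise nondecreasing with supremum $\underset{k}{\bigwith}g^{(k)}$, so a single application of Theorem~\ref{MC} (monotone convergence of the integrand, not of the measure) suffices. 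Your argument is shorter and needs neither Proposition~\ref{approxnun} nor any information about the finite-level equalisers $(\oble{\msbf{A}}{n})_p$; the paper's route, on the other hand, is closer in spirit to the proof of Theorem~\ref{charlim} itself and makes the role of the barycentric description at each level~$n$ explicit. Both hinge on the same combinatorial limit $\frac{N!}{N^k(N-k)!}\to 1$.
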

\begin{proof}

\noindent ($(\ref{dodbra}) \subset (iii)$) Straightforward
by Lemma \ref{gencomcc} (ii) with the following inequality
for measurable functions $\mc{I} \longrightarrow \oble{\mc{A}}{\infty}$:
\begin{align}
\dbra{f}{n} \leq 
\textstyle
 \underset{k \in \mathbb{N}}{\bigwith} g^{(k)}
\quad \mbox{with $g= \frac{f_1 + \cdots + f_n}{n}$} \label{dbraam}
\end{align}

\noindent (demonstration of (\ref{dbraam})
As the inequality is point wise, we consider an instantiation 
at $(k, a_1 \cdots a_k)$ for any $k \leq n$; otherwise LHS=0, whereby
the inequality is direct.
\begin{align*}
& \textstyle
( \underset{k \in \mathbb{N}}{\bigwith} g^{(k)})
(I, (k, a_1 \cdots a_k))
=
g^{(k)}  (I, a_1 \cdots a_k)
\\ & = g(I, a_1) \cdots g(I, a_k) 
= (\frac{1}{n})^k
\sum_{j=1}^n f_j (I, a_1) \cdots
\sum_{j=1}^n f_j (I, a_k) \geq  \mbox{Eqn.(\ref{defdbra})}
\end{align*}
\hfill (end of demonstration of (\ref{dbraam}))

\noindent ($(i) \subset (\ref{dodbra})$)
We show
\begin{align*}
 \crc{\left(\displaystyle \bigcup_{n \in \mathbb{N}} \{
	     \dbra{f}{n} \mid \forall   f_i \in \msbf{A}_p
       \} \right) } \subset \crc{((\oble{\msbf{A}}{\infty})_p)}
\end{align*}
For any measure $\nu : \oble{\mc{A}}{\infty} \longrightarrow \mc{I}$,
let $\nu_m : \oble{\mc{A}}{m} \longrightarrow \mc{I}$ be the measure
 in Definition \ref{nun}, definable from $\nu$.
\begin{align*}
 \inpro{\dbra{f}{n}}{\oble{A}{\infty}}{\nu} 
& = \lim_{m \to \infty}
\inpro{\dbra{f}{n}}{\oble{A}{\infty}}{\nu_m \Comp p_{\infty, m}}
\tag*{by Prop \ref{approxnun}} \\
& \geq \inpro{\dbra{f}{n}}{\oble{A}{\infty}}{\nu_n \Comp p_{\infty,
 n}} \tag*{by Ineqn (\ref{unun})}\\
& = 
\inpro{p_{\infty,
 n} \Comp \dbra{f}{n}}{\oble{A}{n}}{\nu_n} \tag*{by recipro.} \\
& = 
 \inpro{\frac{n!}{n^k (n-k)!}   \bra{f}{n}}{\oble{A}{n}}{\nu_n} \tag*{by
 (\ref{pdbra})}\\
& = 
\inpro{\bra{f}{n}}{\oble{A}{n}}{\frac{n!}{n^k (n-k)!} \nu_n} 
\end{align*}
Thus, for $\nu$ belonging to LHS, 
as $\bra{f}{n}$s form the generators of
$(\oble{\msbf{A}}{n})_p$ by Proposition \ref{corgene} (ii),
\begin{align*}
 \frac{n!}{n^k (n-k)!}  \, \nu_n \in \crc{((\oble{\msbf{A}}{n})_p)}
\end{align*}
Take an arbitrary $x_\infty \in \oble{\msbf{A}}{\infty}$,
then we know $x_n= p_{\infty, n} \Comp x_{\infty} \in
 \oble{\msbf{A}}{n}$ for any natural number $n$, hence 
\begin{align*}
1 \geq \inpro{x_n}{\oble{A}{n}}{\frac{n!}{n^k (n-k)!}\nu_n} 
\end{align*}
Equivalently
\begin{align*}
 \frac{n^k (n-k)!}{n!}  \geq 
\inpro{x_n}{\oble{A}{n}}{\nu_n} 
\stackrel{recipro.}{=}  
\inpro{x_\infty}{\oble{A}{\infty}}{\nu_n \Comp p_{\infty, n}}
\end{align*}
Making $n \to \infty$
\begin{align*}
\textstyle 1 = \lim_{n \to \infty} \frac{n^k (n-k)!}{n!}
& \geq 
\lim_{n \to \infty} \inpro{x_\infty}{\oble{A}{\infty}}{\nu_n \Comp
 p_{\infty, n}}  \\
& =
\inpro{x_\infty}{\oble{A}{\infty}}{\nu},
 \tag*{by Prop \ref{approxnun}}
\end{align*}
which means $\nu \in \crc{((\oble{\msbf{A}}{\infty})_p)}$.
\end{proof}

\subsection{\normalsize The Exponential of $\Pcoh$ \cite{DE}} \label{exppcoh}
\noindent On objects: \\ $\bang{X}=( M_{\text{fin}}(\abs{\! X \!},
\Po(\bang{X}))$ is defined for $X=(\abs{\! X \!},\Po{X})$;
\begin{align*}
  P(\bang{X}) := 
  \{ u^! \mid  u \in \Po(X) \}^{\bot \bot}, \quad
\mbox{where} \quad \textstyle  u^!([a_1, \cdots , a_n]) = \prod_{i=1}^n u_{a_i}
\end{align*}

\noindent On morphisms: \\
$\bang{t} \in
\mathbb{R}^{M_{\text{fin}}(I) \times M_{\text{fin}}(J)}$
is defined for given $t \in
\mathbb{R}^{I \times J}$:
\begin{align}
(\bang{t})_{\mu, \nu} & :=
\sum_{\rho \in L(\mu, \nu)}
\frac{\nu!}{\rho!}
t^{\rho}   = \sum_{ \sigma \in \mathfrak{S}_n / S_{\vec{\mu}}}
\, \, \prod_{i=1}^n t_{b_{\sigma(i)},a_{i}} \label{bangt}
 \end{align}

In the first equation of (\ref{bangt}),
for $\mu \in M_{\text{fin}}(I)$ and $\nu \in \in M_{\text{fin}}(J)$, \\
$L(\mu,\nu):=
\left\{ r \in M_{\text{fin}}(I \times J) \, \bigl\lvert \, 
\begin{array}{l}
\forall \, i \in I \sum_{j \in J} r((i,j)) = \mu(i)  \\
\forall \, j \in J \sum_{i \in I} r((i,j)) = \nu(j)
\end{array}  
 \right\}$ \\
\hfill and 
$\nu ! :=
\prod_{j \in J} \nu(j)!$ and 
$\rho ! :=\prod_{(i, j) \in I \times J}
\rho ((i,j))!$

\smallskip

In the second equation of (\ref{bangt}), the multisets are given explicitly by
$\mu=[b_1, \ldots, b_n]$
and $\nu=[a_1, \ldots, a_n]$. 
$\vec{\mu}:=(b_1, \ldots, b_n)$ so that the actions $\sigma$
ranging on the quotient $\mathfrak{S}_n / S_{\vec{\mu}}$
do not depend on the ordering $\vec{\mu}$
of $\mu$.

\begin{rem}
{\em Note the action on objects of $\Pcoh$
may be seen to be subsumed in that on morphisms,
because the definition $u^{!}$ is alternatively given as follows 
in terms of the morphism action (\ref{bangt})
when $u \in \Po{X}$ is identified uniquely as the matrix in 
$\mathbb{R}^{\{ * \} \times \abs{X}}$.
\begin{align*}
u^{!} (\mu):= (\bang{u})_{[*, \ldots, *], \mu}
= \textstyle \prod_{i=1}^n u_{*,a_i}
\end{align*}
In the first equation, 
$[*, \ldots, *]$ is the multiset 
of $\abs{1}=\{ * \}$ whose size is the same as
that of $\mu$.
The second equation holds
because  the stabiliser subgroup $S_{(*, \cdots , *)}=\mathfrak{S}_n$,
 when
putting 
$\mu=[a_1, \ldots, a_n]$. 
}\end{rem}

\end{document}